\algnewcommand{\LineComment}[1]{\State \(\triangleright\) \emph{\color{violet} \small #1}}
\algnewcommand{\Comment}[1]{\State \(\triangleright\) \emph{\color{violet} \footnotesize #1}}
\algnewcommand{\Invariant}[1]{\State \(\triangleright\) \emph{\color{red} #1}}
\newtheorem{theorem}{Theorem}[section]
\newtheorem{lemma}[theorem]{Lemma}
\newtheorem{claim}[theorem]{Claim}
\newtheorem{corollary}[theorem]{Corollary}
\newtheorem{definition}[theorem]{Definition}
\newtheorem{fact}[theorem]{Fact}
\newtheorem{remark}{Remark}
\newtheorem{question}[theorem]{Question}
\newcommand{\ignore}[1]{}
\newcommand{\eq}{\leftarrow}
\newcommand{\cB}{\mathcal{B}}
\newcommand{\calB}{\mathcal{B}}
\newcommand{\NN}{\mathbb{N}}
\newcommand{\bone}{\mathbf{1}}
\DeclareMathOperator{\Pois}{Pois}
\newcommand{\h}{{\mathrm{H}\kern1pt}}
\newcommand{\I}{{\mathrm{I}\kern1pt}}
\newcommand{\poly}{\mathrm{poly}}
\newcommand{\polylog}{\mathrm{polylog}}
\newcommand{\ceil}[1]{{\left\lceil{#1}\right\rceil}}
\DeclareMathOperator*{\Exp}{\ensuremath{{\mathbf{E}}}}
\DeclareMathOperator*{\Var}{\ensuremath{{\mathbf{V}}}}
\DeclareMathOperator*{\Prob}{\ensuremath{\mathbf{P}}}
\renewcommand{\Pr}{\Prob}
\newcommand{\eps}{\ensuremath{\varepsilon}}
\newcommand{\Ot}{\ensuremath{\widetilde{O}}}
\colorlet{shadecolor}{blue!10}
\newenvironment{thm}{
	\begin{mdframed}[backgroundcolor=gray!20,topline=false,bottomline=false,leftline=false,rightline=false]\begin{theorem}
		}{
		\end{theorem}
	\end{mdframed}
}
\newcommand{\otilde}{\widetilde{O}}
\newcommand{\Sec}[1]{\hyperref[sec:#1]{\S\ref*{sec:#1}}} 
\newcommand{\Eqn}[1]{\hyperref[eq:#1]{(\ref*{eq:#1})}} 
\newcommand{\Fig}[1]{\hyperref[fig:#1]{Fig.\,\ref*{fig:#1}}} 
\newcommand{\Tab}[1]{\hyperref[tab:#1]{Tab.\,\ref*{tab:#1}}} 
\newcommand{\Thm}[1]{\hyperref[thm:#1]{Thm.\,\ref*{thm:#1}}} 
\newcommand{\Fact}[1]{\hyperref[fact:#1]{Fact\,\ref*{fact:#1}}} 
\newcommand{\Lem}[1]{\hyperref[lem:#1]{Lemma\,\ref*{lem:#1}}} 
\newcommand{\Prop}[1]{\hyperref[prop:#1]{Prop.~\ref*{prop:#1}}} 
\newcommand{\Cor}[1]{\hyperref[cor:#1]{Cor.~\ref*{cor:#1}}} 
\newcommand{\Conj}[1]{\hyperref[conj:#1]{Conjecture~\ref*{conj:#1}}} 
\newcommand{\Def}[1]{\hyperref[def:#1]{Definition~\ref*{def:#1}}} 
\newcommand{\Alg}[1]{\hyperref[alg:#1]{Alg.~\ref*{alg:#1}}} 
\newcommand{\Clm}[1]{\hyperref[clm:#1]{Clm.~\ref*{clm:#1}}} 
\newcommand{\Step}[1]{\hyperref[step:#1]{Step~\ref*{step:#1}}} 
\newcommand{\Exa}[1]{\hyperref[exa:#1]{Example~\ref*{exa:#1}}} 
\newcommand{\Obs}[1]{\hyperref[obs:#1]{Obs~\ref*{obs:#1}}} 
\newcommand{\dest}{{\tt deg-est}}
\newcommand{\putcolor}{\cellcolor{gray!25}}
\newcommand{\Tetek}{T\v{e}tek\xspace}
\title{Faster Estimation of the Average Degree of a Graph Using\\
	Random Edges and Structural Queries}
\date{}
\author{~}
\author{Lorenzo Beretta\thanks{Supported by a UC Santa Cruz start-up grant under Vaggos Chatziafratis.}\\
University of California, Santa Cruz\\
{\tt lorenzo2beretta@gmail.com}
\and
Deeparnab Chakrabarty\thanks{Supported by NSF-CAREER award 2041920.} \\
Dartmouth\\
{\tt deeparnab@dartmouth.edu}
\and
C. Seshadhri\thanks{Supported by NSF DMS-2023495, CCF-1740850, 2402572.} \\
University of California, Santa Cruz\\
{\tt sesh@ucsc.edu}
}
\begin{document}
\maketitle
\begin{abstract}

\noindent
We revisit the problem of designing sublinear algorithms for estimating the average degree of an $n$-vertex graph.
The standard access model for graphs allows for the following queries:
sampling a uniform random vertex, the degree of a vertex, sampling a uniform random neighbor of a vertex, and
``pair queries''  which determine if a pair of vertices form an edge. In this model,
original results [Goldreich-Ron, RSA 2008; Eden-Ron-Seshadhri, SIDMA 2019] on this problem
prove that the complexity of getting $(1+\varepsilon)$-multiplicative approximations
to the average degree, ignoring $\varepsilon$-dependencies, is $\Theta(\sqrt{n})$.
When random edges can be sampled, 
it is known that the average degree can estimated in $\widetilde{O}(n^{1/3})$ queries, even without pair queries [Motwani-Panigrahy-Xu, ICALP 2007; Beretta-\Tetek, TALG 2024]. \smallskip

We give a nearly optimal algorithm in the standard access model with random edge samples.
Our algorithm makes $\widetilde{O}(n^{1/4})$ queries exploiting the power of pair queries.
We also analyze the ``full neighborhood access" model wherein  
the entire adjacency list of a vertex can be obtained with a single query; this model is relevant in many practical applications.
In a weaker version of this model, we give an algorithm that makes $\widetilde{O}(n^{1/5})$ queries.
Both these results underscore the power of {\em structural queries}, such as pair queries and full neighborhood access
queries, for estimating the average degree.
We give nearly matching lower bounds, ignoring $\varepsilon$-dependencies, for all our results. \smallskip

So far, almost all algorithms for estimating average degree assume that the number of vertices, $n$, is known.
Inspired by [Beretta-\Tetek, TALG 2024], we study this problem when $n$ is unknown. 
With access to random edges, we give an algorithm that makes $O(n^{1/3})$ queries (which does not need pair queries), and prove that stronger structural queries do not help
in estimating average degree.
More generally, our results paint a nuanced picture of the complexity of estimating the average degree with respect to the access model.

%

\end{abstract}

\thispagestyle{empty}
\newpage
\tableofcontents
\thispagestyle{empty}
\pagebreak

\setcounter{page}{1}

\newcommand{\RV}{\mathbf{RandVert}}
\newcommand{\RE}{\mathbf{RandEdge}}
\newcommand{\RN}{\mathbf{RandNbr}}
\newcommand{\DEG}{\mathbf{Degree}}
\newcommand{\PAIR}{\mathbf{Pair}}
\newcommand{\CUT}{\mathbf{Cut}}
\newcommand{\add}{Additive}
\newcommand{\ADD}{\mathbf{Additive}}
\newcommand{\FN}{\mathbf{FullNbrHood}}
\newcommand{\id}{\mathsf{id}}
\newcommand{\davg}{d}

\newcommand{\EstSpar}{EstNumVert-or-EstDens}
\newcommand{\EstAvgDeg}{EstimateAverageDegree}
\newcommand{\EstNumEdges}{EstNumEdges}
\newcommand{\BTEstAvgDeg}{BT-EstAvgDeg}
\newcommand{\ERSEstAvgDeg}{ERS-EstAvgDeg}
\def\AlgEstDensHighDeg{{\sc EstDensHighDeg}}

\section{Introduction}

Estimating the average degree of a simple undirected graph 
is a fundamental problem in sublinear graph algorithms. Given an input
graph $G = (V,E)$ with $n=|V|$ and $m=|E|$, the average degree 
is $\davg := 2m/n$.
Feige~\cite{Feig06} gave a $(2+\eps)$-approximation to the average degree using $O(\eps^{-1}\sqrt{n})$ random (uar) vertices
and their degrees. He showed that if these are the only ways to access the graph, then 
no sublinear algorithm can get better than a $2$-approximation.
Goldreich and Ron~\cite{GR08} (henceforth called GR) showed that with access to random (uar) neighbors
of a vertex, one can obtain $(1 + \eps)$-approximations to the average degree\footnote{In dense graphs the number of queries is $\ll \sqrt{n}$, even if we did not know the graph was dense a priori.} 
    using $O(\poly(\eps^{-1}\log n)\sqrt{n/\davg})$ queries. Eden, Ron, and Seshadhri~\cite{ERS19} (henceforth called ERS)  gave a simple and elegant algorithm that uses $O(\eps^{-2}\sqrt{n/d})$ queries. Up to the dependence on $\eps$, this is known to be optimal~\cite{GR08}.

In the last two decades, both from theoretical and modeling considerations, other ``oracles/access models'' have been studied. Parnas and Ron~\cite{PaRo02} initiated the study of sublinear access models
for general graphs.
Kaufman, Krivelevich, and Ron~\cite{KaKrRo04} were the first to consider {\em adjacency queries}, often called {\em pair queries} (and we use this notation), that input a pair of vertices $x$ and $y$, returns $1$ if $(x,y)\in E$ and $0$ otherwise. Random vertex, random neighbor, 
degree, and pair queries form the {\em standard access} model for general graphs in sublinear algorithms. 
One can go beyond pair queries. Motivated by how large networks are typically crawled in practice via APIs
, Dasgupta, Kumar and Sarl\'os~\cite{DKS14} studied the model 
where a single query to a vertex $x$ returns the full adjacency list of $x$. Since then, this model has been used routinely in the applied community~\cite{CDKLS16,BEOF22,BKEBS25}.
 \Tetek and Thorup~\cite{TT22} performed a detailed study of estimating the average
 degree in the {\em full neighborhood access}.
While this power can improve the $\eps$-dependence, there is still an $\Omega(\sqrt{n/d})$ lower bound~\cite{TT22}.

Sampling uniform random edges is another natural way to access an unknown graph. To our knowledge, the paper~\cite{ABGPRY18} by Aliakbarpour, Biswas, Gouleakis, Peebles, Rubinfeld, and Yodpinyanee was the first to study sublinear graph algorithms
with this oracle. Random edge sampling has been a feature in the data mining community (e.g., see~\cite{LF06} and \cite{ANK13}), but understanding their power and limitations for sublinear algorithms is a more recent venture~\cite{EdRo18-2,AsKaKh19,TT22,BT24,BCM25}. 
Assadi, Kapralov, and Khanna~\cite{AsKaKh19} demonstrated the power of random edges for subgraph counting, and gave a detailed justification for this model.
Beretta and \Tetek~\cite{BT24} (henceforth BT) gave an elegant algorithm to estimate the average degree using $O(\eps^{-2}d)$ random edge samples~\cite{BT24}; together 
with ERS this implies a $O(\eps^{-2}\sqrt[3]{n})$ query algorithm with random edge, vertex, and neighbor samples to access the graph (along with degree queries).

Estimating the average degree with random vertex and edge sampling is a special case of estimating the average of a non-negative array 
with uar and proportional (to their value) samples.
Motwani, Panigrahy, and Xu~\cite{MPX07} gave a $O(\eps^{-3.5}\sqrt[3]{n})$ for obtaining
$(1+\eps)$-approximations~\cite{MPX07}
which was improved by BT to $O(\eps^{-4/3}\sqrt[3]{n})$~\cite{BT24}.
See \Cref{tab:estd-vertical} for a summaries of these bounds.
These papers also prove a $\Omega(\sqrt[3]{n})$ lower bound for average estimation of arbitrary non-negative arrays. 
This, however, does not rule out faster
 average degree estimation algorithms that utilize graph structure. This leads to the
main question driving our work:
\begin{center}
	\emph{Do structural graph queries (like pair queries / full neighborhood access model) and the ability to sample random edges \& vertices lead to faster algorithms for estimating average degree?}
\end{center}
\noindent
The answer is yes: we show that in the standard model, with the ability to sample random edges, we can get $(1+\eps)$-approximations to $d$ in $\Ot(\eps^{-2}n^{1/4})$ queries.
With full neighborhood access, we can estimate in $\Ot(\eps^{-2}n^{1/5})$ queries. Furthermore, all these query complexities are optimal up to $\poly(\eps^{-1}\log n)$ dependencies. 
These results are stated in \Thm{stan} and \Thm{advanced} respectively.
\medskip

\noindent
All (but one) of the above algorithms assume the knowledge of $n$, the number of vertices. BT~\cite{BT24} study the problem of estimating the average of a non-negative array
when the size $n$ is {\em not known} and
discover a qualitative difference:
with the knowledge of $n$, one can obtain $(1+\eps)$-approximations with $O(\sqrt[3]{n})$-samples,
but
without the knowledge of $n$, $\Omega(\sqrt{n})$ samples are necessary even for constant 
factor approximations. 
\begin{center}
	\emph{Do structural graph queries and the ability to sample random edges and/or vertices lead to faster algorithms for estimating average degree \underline{even when $n$ is unknown?}}
\end{center}
\noindent
Again, the answer is yes. But we also discover an interesting dichotomy. In the basic model (random vertex, degree, and
random neighbor queries) with random edges, we can get $O(\eps^{-2}\sqrt[3]{n})$ query algorithms. This matches the situation when $n$ was known, circumventing the BT $\Omega(\sqrt{n})$ lower bound.
We also show (\Thm{unknown-base}) that this query complexity is tight even if you allow full neighborhood access.
So, in the case of unknown $n$, when random edges samples are provided further structural queries don't help. This is starkly different from the known $n$ case.

If we cannot obtain random edge samples, then the picture is more nuanced. We show that with only random vertex and random neighbor queries, any good estimation algorithm
needs to make $\Omega(\sqrt{n})$ samples, even in {\em dense} graphs, if $n$ is unknown. Compare this with the $O(\sqrt{n/\davg})$-sample algorithms of GR and ERS.
If we allow full neighborhood access though, we can obtain $O(\eps^{-2}\sqrt{n/d})$ query algorithms matching the known $n$ results. Moreover, these algorithms can also estimate $n$,
faster than the trivial collision based estimator~\cite{RoTs16}.
All these query complexities are optimal, up to $\eps$ dependencies. \smallskip

\noindent
Our algorithmic results hold in slightly weaker models. We formally describe the models and our results
in~\Sec{results}.

\subsection{Model and formal statements of our results} \label{sec:results}

The input $G=(V,E)$ is a simple undirected graph, where $V$ is unknown. We set some terminology to describe the various models.

\begin{itemize}
	 \setlength\itemsep{0.01em}
	\item {\em Base Model.} This model allows access via the queries: (i) $\RV()$ that returns 
	a uniform at random (uar) vertex $x\in V$, (ii) $\RN(x)$ that takes input $x\in V$ and returns a uar neighbor $y\sim x$, 
	and (iii) $\DEG(x)$ that returns the degree of a vertex $x$. All of these take unit time/cost.
	The algorithms of GR~\cite{GR08} and ERS~\cite{ERS19} work in this model.
	
	\item {\em Standard Model.} Along with the base model queries, 
	the standard model also allows (iv) $\PAIR(x,y)$ which returns $1$ if $(x,y)\in E$ or $0$ otherwise, in unit time/cost.
	This is the usual
	model for most sublinear graph algorithms.
	
	\item {\em Advanced Model.} The advanced model allows for all queries in the base model. 
    Furthermore, instead of using the full neighborhood access model, we introduce
	an intermediary model with {\em additive} queries: given a {\em subset} $S\subseteq V$,
	$\ADD(S)$ returns the number of edges in $S$, but in $|S|$ time/cost. 
	Note that it is a natural generalization of pair queries (where $|S| = 2$), and can clearly be simulated by $|S|$ full neighborhood access
	queries. Thus, this model is (potentially strictly) weaker than the full neighborhood access model.
	
	\item {\em Random Edges Samples.} The $\RE()$ oracle returns an edge $e\in E$ uniformly at random in unit time/cost.
\end{itemize}
\noindent
{\bf On additive queries.} We borrow the term ``additive'' from the literature on Combinatorial Search, in particular the paper 
of Grebinski and Kucherov that uses such queries to {\em reconstruct} a graph~\cite{GK00}. A big difference is that these works often assume unit cost for an additive query
irrespective of the size of $S$; our cost grows linearly in the size. Note that $O(|S|^2)$ many $\PAIR$ queries can simulate an $\ADD(S)$ query. 
Additive queries are intrinsically related to ``cut queries''.
If we know $\deg(x)$ for all $x\in S$, then cut queries can simulate $\ADD(S)$ and vice-versa. 
We also note the connection with Independent Set (IS) queries studied in the literature~\cite{BeHa+20,ChLeWa20,AdMcMu22}, and elaborate
more in~\Sec{related}. \smallskip


\noindent
\begin{remark}
Throughout the paper, we assume $G$ has no isolated vertices, that is, the minimum degree is $\geq 1$. This is a standard assumption made
in sublinear graph algorithms. One can weaken this to assume a constant fraction of the vertices have degree $\geq 1$. 
\end{remark}
\noindent
{\bf Results when $n$ is known.} 
When $n$ is known, we show the power of $\RE$ queries.
Recall that BT~\cite{BT24} has an algorithm to estimate $(1\pm \eps)$-approximation to $\davg$ in $O(\eps^{-2}\davg)$ queries. This algorithm plays a balancing role in all our final complexities.
%
%
All algorithms stated below are randomized and all statements made about them hold with probability $> 0.99$.

\begin{thm}\label{thm:stan}
	Consider the standard model with $n$ known and $\RE$ queries.
	
	\noindent
	(a) There is an algorithm that outputs a $(1 + \eps)$-approximation to $\davg$ in $\Ot\left(\eps^{-2}\min\left(\davg, \sqrt[3]{n/\davg}\right)\right)$ queries.
	In particular, we get an $(1 + \eps)$-approximation to $\davg$ in $\Ot\left(\eps^{-2}n^{1/4}\right)$ queries.
	
	\noindent
	(b) Furthermore, any algorithm in this setting that outputs a constant factor approximation to $d$ 
	requires $\Omega(\min(d, \sqrt[3]{n/d}))$ queries, which is $\Omega(n^{1/4})$ for worst-case $d$.
\end{thm}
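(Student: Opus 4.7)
The target bound $\Ot(\eps^{-2}\min(\davg,\sqrt[3]{n/\davg}))$ agrees with BT's $\Ot(\eps^{-2}\davg)$ in the low-$\davg$ regime and demands a new algorithm with complexity $\Ot(\eps^{-2}\sqrt[3]{n/\davg})$ when $\davg>n^{1/4}$; the two curves cross at $\davg=n^{1/4}$, yielding the stated worst case $\Ot(\eps^{-2}n^{1/4})$. My plan is therefore to first obtain a constant-factor estimate $\tilde\davg$ in $\Ot(n^{1/4})$ queries (for instance, by running BT and aborting once its running estimator exceeds $n^{1/4}$), then dispatch: run BT in full if $\tilde\davg\le n^{1/4}$, and run the new high-degree algorithm described below otherwise.

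For the new algorithm, set the threshold $\tau:=(n\davg^2)^{1/3}$ and partition $V=V_L\sqcup V_H$ by $\deg(v)\le\tau$; write $2m=S_L+S_H$ where $S_X:=\sum_{v\in V_X}\deg(v)$. I estimate $S_L$ via uar vertex sampling: take $s_v=\Ot(\tau/(\davg\eps^2))$ samples from $\RV$ and form the clipped estimator $\hat S_L=(n/s_v)\sum_i\deg(V_i)\mathbf{1}[\deg(V_i)\le\tau]$. The second-moment bound $\sum_{\deg v\le\tau}\deg(v)^2\le\tau\cdot 2m$, plugged into Chebyshev, gives additive error $\eps\cdot 2m$ with this many samples, and $\tau/(\davg\eps^2)$ simplifies to $\sqrt[3]{n/\davg}/\eps^2$. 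Separately I estimate the ratio $r_H:=S_H/(2m)$ with $\Ot(1/\eps^2)$ $\RE$-samples by evaluating the indicator $\mathbf{1}[\deg(w)>\tau]$ at a uniform endpoint $w$ of each sampled edge. Combining through the identity $2m=S_L/(1-r_H)$ recovers $\hat\davg$; a routine error-propagation argument shows the resulting estimate has relative error $O(\eps)$ provided $1-r_H=\Omega(1)$, i.e., $S_L$ is at least a constant fraction of $2m$.

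The main obstacle is the complementary regime $S_L\ll 2m$ (heavy-tailed degree distributions where $V_H$ carries almost all the mass), where the recovery $\hat S_L/(1-\hat r_H)$ is ill-conditioned. This is where $\PAIR$ queries play the essential structural role: the $\Ot(\sqrt[3]{n/\davg}/\eps^2)$ edge samples already provide a degree-proportional sample of $V_H$, and the inverse-degree statistic $\mathbf{1}[\deg w>\tau]/\deg(w)$ is unbiased for $|V_H|/(2m)$; applying a carefully chosen $\Ot(s_e)$ pair queries (rather than all $\binom{s_e}{2}$, to stay within budget) on these sampled endpoints estimates $|E(V_H,V_H)|$, from which $S_H=2|E(V_H,V_H)|+|E(V_L,V_H)|$ can be recovered directly without going through the ill-conditioned ratio. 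Calibrating this subroutine so that its query cost is linear rather than quadratic in $s_e$, together with a dyadic decomposition across $O(\log n)$ degree buckets (ensuring at least one of the two estimators is well-conditioned for each bucket) while holding the total cost at $\Ot(\sqrt[3]{n/\davg}/\eps^2)$, is the delicate part of the argument.

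For part (b), I will argue the $\Omega(n^{1/4})$ lower bound via Yao's principle against two carefully constructed graph distributions $\cG_0,\cG_1$ on $n$ vertices whose average degrees differ by a factor of $2$ and both lie at $\davg\asymp n^{1/4}$. The construction must arrange that the answer distributions of $\RV,\DEG,\RN,\RE,\PAIR$ are statistically close under $\cG_0$ and $\cG_1$ for any transcript of fewer than $n^{1/4}$ queries. The $n^{1/4}$ threshold arises from a birthday-type barrier: $\RE$-samples land on a small ``information-bearing'' subset with constant probability but reveal only degree-biased marginal information, and $\PAIR$ queries can discriminate the two densities only after enough $\RE$-endpoints coincide within this subset, which requires $\Omega(n^{1/4})$ samples; meanwhile $\RV$-based approaches are useless before roughly $\Omega(n^{3/8})$ queries. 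The trickiest step, which I expect to handle via a configuration-model-style planting, is matching the $\DEG$ and $\RN$ marginals exactly between $\cG_0$ and $\cG_1$ so that no single-query statistic (in particular $\DEG$ applied to an $\RE$-returned vertex) leaks the difference.
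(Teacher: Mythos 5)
Your overall architecture is on the right track—threshold by degree, estimate the light contribution by uniform sampling, estimate the fraction of heavy mass by edge sampling, and use $\PAIR$ queries for the ill-conditioned case—and your variance bookkeeping for $\hat S_L$ is correct. But the step you defer as ``the delicate part of the argument'' is in fact the entire technical content of the theorem, and as sketched it has a real gap.

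The problem is your claim that $\Ot(s_e)$ pair queries on $s_e$ degree-biased endpoints suffice to estimate $|E(V_H,V_H)|$. Two difficulties. First, your endpoints are sampled proportionally to degree, so the natural statistic from pairing them up is $\sum_{(u,v)\in E}\deg(u)\deg(v)/(2m)^2$, a degree-weighted quantity. Turning this into the unweighted $|E(V_H,V_H)|$ requires downweighting each discovered edge by $\deg(u)\deg(v)$, which can blow up the variance without control on the degree range of the vertices involved. Second, even after fixing the reweighting, $O(s_e)$ pairs from a set of $s_e$ vertices only see $\Theta(s_e)$ of the $\Theta(s_e^2)$ potential pairs; for the estimator to be nontrivially concentrated you need the per-pair collision probability to be $\Omega(1/s_e)$, and nothing in your setup guarantees that, especially because the heavy set $V_H$ can still be internally very inhomogeneous. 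The paper overcomes exactly these two obstacles by binning $V_H$ into $O(\log n)$ quasi-regular classes (so all degrees are within a factor $2$), locating a dense pair of classes with $\Ot(1)$ edge samples, replacing degree-biased sampling by genuinely uniform sampling within a class via rejection sampling, and then doing a variance calculation (\Cref{clm:var-calc}) that crucially uses both quasi-regularity and density to bound the second moment of $m(X,Y)$. Your phrase ``dyadic decomposition across $O(\log n)$ degree buckets'' gestures at this, but without the uniform-in-class sampling and the density guarantee, the concentration argument does not go through, and the $\Ot(s_e)$-pair-query claim is not justified. Also note: to hit the $\Ot(n^{1/4})$ bound in the standard model, the paper actually pays $s^2$ pair queries per trial and balances with $p=(d/n)^{1/3}$; your assertion that linearly many pair queries suffice is not what happens in the paper's analysis either, and is not needed for the $n^{1/4}$ bound.

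For part (b), your plan via Yao and ``configuration-model-style planting'' to match $\DEG$/$\RN$ marginals is considerably more complicated than necessary and not concretely instantiated. The paper's construction is elementary: disjoint cliques of size $s=n^{1/2}$ in number $k=n^{1/4}$ (versus $(c+1)k$), padded by a perfect matching. In this construction the degree and neighbor marginals coincide automatically conditioned on the low-probability collision events not occurring, so there is nothing to ``plant''. The argument then reduces to showing that with $q\ll n^{1/4}$ queries the $\RV$-samples land in the matching, $\RE$-samples land in cliques with distinct clique ids, and all $\PAIR$ queries between sampled clique vertices return $0$; the last step is a balls-in-bins/second-moment calculation over $k$ cliques, and it is precisely $k=n^{1/4}$ that sets the threshold. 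Your stated thresholds ($\RV$ ``useless before $\Omega(n^{3/8})$'') do not match this construction, which suggests the parameters in your intended instance are off; you should pin them down explicitly before the lower-bound argument can be checked.
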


\noindent
Note that our query complexity is smaller in denser graphs, just as the algorithms of GR~\cite{GR08} and ERS~\cite{ERS19}.
We also prove (\Cref{thm:lb-known-n}, part (a)) that the base model with random edge samples (so no structure queries)
cannot improve upon the $\sqrt[3]{n}$ algorithms of~\cite{MPX07,BT24}, and so the above theorem shows that structure helps significantly.
The next theorem shows that when we have $\ADD()$ queries,
the exponent becomes even smaller. 

\begin{thm}\label{thm:advanced}
	Consider the advanced model with $n$ known and $\RE$ queries.

\noindent
(a) There is an algorithm that outputs a $(1+ \eps)$-approximation to $\davg$ in $\Ot\left(\eps^{-2}\min\left(\davg, \sqrt[4]{n/d}\right)\right)$ queries.
In particular, we get an $(1+ \eps)$-approximation to $\davg$ in $\Ot\left(\eps^{-2}n^{1/5}\right)$ queries.

\noindent
(b) Furthermore, any algorithm in this setting (even with the stronger full neighborhood access) 
that outputs a constant factor approximation to $d$ 
requires 
$\Omega(\min(d, \sqrt[4]{n/d}))$ queries, which is $\Omega(n^{1/5})$ for worst-case $d$
\end{thm}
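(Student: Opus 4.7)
The plan is to interpolate two algorithms and invoke whichever is cheaper, using a geometric guess-and-verify over $d$ that loses only polylog factors. When $d$ is small, the Beretta--\Tetek{} algorithm already gives $\Ot(\eps^{-2}d)$ queries using only random edges. When $d$ is large, we design a new algorithm using $\ADD$ queries that aims for $\Ot(\eps^{-2}(n/d)^{1/4})$. The two budgets meet at $d = n^{1/5}$, producing the advertised $\Ot(\eps^{-2}n^{1/5})$ worst-case bound.

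\textbf{Dense-regime algorithm.} Fix a degree threshold $\tau$ splitting $V$ into heavy $H = \{v : \deg(v) > \tau\}$ and light $L = V \setminus H$, and decompose $2m = 2|E_{LL}| + |E_{LH}| + 2|E_{HH}|$. The fractions $\alpha = |E_{LH}|/m$ and $\beta = |E_{HH}|/m$ are estimated to additive $\eps$ by drawing $\Ot(1/\eps^2)$ random edges and classifying each endpoint's heaviness via $\DEG$. The quantity $|E_{LL}|$ is estimated by drawing $r$ independent random subsets $S_i \subseteq V$ of size $s$, running $\DEG$ on every $v \in S_i$ to form $L \cap S_i$, and querying $\ADD(L \cap S_i)$; the expectation of $|E(L \cap S_i)|$ is $|E_{LL}|(s/n)^2$, and a second-moment calculation---crucially using the degree cap $\tau$ on $L$---gives variance $O(|E_{LL}|(s/n)^2(1 + \tau s/n))/r$. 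Setting $\tau = d^{3/4}n^{1/4}$ and $s = n/\tau$ balances the two variance terms, and the required number of samples $r$ makes the total cost $rs = \Ot(\eps^{-2}(n/d)^{1/4})$. Solving $m = |E_{LL}|/(1 - \alpha/2 - \beta)$ recovers the estimate; the residual regime $|E_{LL}| \ll m$ is handled by falling back to Beretta--\Tetek{}, whose variance is small when most endpoints are heavy.

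\textbf{Lower bound, part (b).} We construct two distributions $\cG_0, \cG_1$ on $n$-vertex graphs whose average degrees differ by a constant multiplicative factor, with views under any algorithm making $o(n^{1/5})$ queries (random vertex, random edge, degree, pair, and full-neighborhood) statistically close. The hard instances sit at the balance point $d = \Theta(n^{1/5})$ and differ by a planted imbalance supported on $\Theta(n^{4/5})$ vertices that shifts the average without being visible to sparse queries. Via Yao's principle we bound the per-query distinguishing advantage: random vertex queries hit the imbalanced region with probability $O(n^{-1/5})$; the two distributions are arranged so that random edges return identical degree-weighted endpoint marginals; and full-neighborhood queries on $d$-degree vertices cost $\Omega(n^{1/5})$, so the budget permits only $o(1)$ such queries in expectation. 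A hybrid argument summing these contributions over $o(n^{1/5})$ queries gives total variation $o(1)$ between the algorithm's views on $\cG_0$ and $\cG_1$.

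\textbf{Main obstacle.} The hardest step in part (a) is the variance analysis of the $\ADD$-based light-density estimator: the covariance among the indicators $\mathbf{1}[e \in E(S)]$ for adjacent edges contributes $(s/n)^3 \sum_v \deg(v)^2$, uncontrolled without the degree cap---this is precisely what forces the heavy--light decomposition and dictates the exponent $1/4$. For part (b), the delicate point is arranging the two distributions so that random-edge and random-vertex queries cannot distinguish them within $o(n^{1/5})$ samples, leaving the entire distinguishing burden on full-neighborhood queries, whose per-query cost of $\Omega(d) = \Omega(n^{1/5})$ then renders $o(n^{1/5})$-query algorithms insufficient.
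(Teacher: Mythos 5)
There are two substantive gaps, one in each part.

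\textbf{Gap in part (a).} Your estimator for $|E_{LL}|$ samples $S_i$ \emph{uniformly from all of $V$} and calls $\ADD(L\cap S_i)$; this does not exploit random edges for the dense case, and the resulting query cost cannot reach $(n/d)^{1/4}$. Work out the numbers with your own variance bound. With $|E_{LL}|=\Theta(m)=\Theta(nd)$, the relative variance of a single sample is
\[
\frac{1+\tau s/n}{|E_{LL}|(s/n)^2}\;\approx\;\frac{1+\tau s/n}{(d/n)s^2}.
\]
Balancing the two numerator terms by $\tau s/n=\Theta(1)$ (your choice $\tau=d^{3/4}n^{1/4}$, $s=n/\tau=(n/d)^{3/4}$) gives relative variance $\Theta\bigl(1/\sqrt{n/d}\bigr)$. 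To make this $\eps^2$ you need $r=\Theta\bigl(\eps^{-2}/\sqrt{n/d}\bigr)$, which is $<1$ whenever $n/d\gg\eps^{-4}$; you cannot run a fraction of an iteration, so you are forced to either $r=1$ at cost $s=(n/d)^{3/4}$, or (better) shrink $s$ until a single sample has relative variance $\eps^2$, which forces $ds^2/n\gtrsim\eps^{-2}$, i.e.\ $s\gtrsim\eps^{-1}\sqrt{n/d}$. Either way the cost is at least $\Ot(\sqrt{n/d})$, which is what the paper already gets in the advanced model \emph{without} $\RE$ queries (Theorem~\ref{thm:unknown-no-edge}); your use of $\RE$ has bought nothing in the dense regime. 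The paper's mechanism for getting the exponent $1/4$ is entirely different: it first binarizes degrees to find a quasi-regular, $\polylog n$-dense pair $(A,B)$ of heavy subsets, then \emph{uses random edges to rejection-sample uniformly from $A$ and $B$}, so that the subsets $X\subseteq A$, $Y\subseteq B$ fed to $\ADD$ have size $s\approx p\sqrt{n/d}$ rather than $\sqrt{n/d}$, with $p=(d/n)^{1/4}$ the heavy fraction (\Cref{lem:rand-v-subset}, \Cref{lem:estimate-mab}, \Cref{alg:est-mab}). Without that rejection-sampling step you pay for the fact that $|L|\approx n$ and hence cannot concentrate your samples where the edges are. (As a side note, with $\alpha=|E_{LH}|/m$ and $\beta=|E_{HH}|/m$ the correct identity is $m=|E_{LL}|/(1-\alpha-\beta)$, not $|E_{LL}|/(1-\alpha/2-\beta)$.)

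\textbf{Gap in part (b).} You argue that ``full-neighborhood queries on $d$-degree vertices cost $\Omega(n^{1/5})$, so the budget permits only $o(1)$ such queries.'' This misstates the model: in the full neighborhood access model each $\FN(x)$ query has \emph{unit} cost regardless of $\deg(x)$ (it is $\ADD$ that charges $|S|$, and the lower bound must hold against the \emph{stronger} $\FN$ oracle). The paper's argument therefore cannot and does not rely on $\FN$ being expensive. Instead it uses the disjoint-cliques-plus-matching construction with $s=k=n^{2/5}$, so that $\RV$ samples fall in the matching, $\RE$ samples fall in the cliques, and a unit-cost $\FN$ query on a clique vertex reveals only its clique's id; since the algorithm makes $q\ll\sqrt{k}=n^{1/5}$ queries, the observed clique ids are distinct whp and the view is distributionally identical in the YES and NO instances (\Cref{thm:lb-known-n}(c)). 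You would need to replace your ``$\FN$ is costly'' argument with a collision/indistinguishability argument of this type; as written, the step that kills $\FN$ queries is invalid.
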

\noindent
In summary, if we consider the base, standard, and advanced model respectively, then the complexity
of estimating $d$ is $n^{1/3}, n^{1/4}, n^{1/5}$ respectively, and
all these bounds are optimal in $n$. (We summarize all the results in \Cref{tab:estd-vertical} with references
to the specific theorems proving each bound.)\smallskip

\noindent
{\bf Results for unknown $n$.}
When $n$ is not known, we cannot apply the algorithms of GR~\cite{GR08} and ERS~\cite{ERS19}; indeed, we prove 
a lower bound (\Cref{clm:apple}) of $\sqrt{n}$ samples for any constant approximation algorithm in the base model with unknown $n$, even when the graph is dense. 
Note that one can always get a good estimate 
of $n$ in $O(\sqrt{n})$ samples~\cite{RoTs16} and run GR or ERS. We give improved algorithms 
with access to either $\RE$ queries, or $\ADD$ queries.

\begin{thm}\label{thm:unknown-base}
		Consider the base model with $n$ unknown and $\RE$ queries. 
		
		\noindent
		(a) There is an algorithm that outputs a $(1+ \eps)$-approximation to $\davg$ in $\Ot\left(\eps^{-2}\min\left(\davg, \sqrt{n/\davg}\right)\right)$ queries.
		In particular, we get an $(1+ \eps)$-approximation to $\davg$ in $\Ot\left(\eps^{-2}n^{1/3}\right)$ queries.
		
		\noindent
		(b) Furthermore, any algorithm, even in the full adjacency model with random edge samples, that outputs a constant factor approximation to $d$ 
		requires
        $\Omega(\min(d, \sqrt{n/d}))$
        queries, which is $\Omega(n^{1/3})$ for worst-case $d$.
\end{thm}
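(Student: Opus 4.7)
For part (a), the plan is to combine two unbiased $n$-oblivious estimators of $\davg$ and adaptively stop when either one suffices. The first is the BT estimator, which uses only $\RE$ queries and estimates $\davg$ in $\Ot(\eps^{-2}\davg)$ queries without needing $n$. The second is an $n$-oblivious implementation of the ERS estimator: sample $v \gets \RV()$ and $u \gets \RN(v)$, query their degrees, and output $X := \deg(v)\cdot \mathbf{1}[(u,v) \text{ is assigned to } v]$ under the usual lighter-endpoint tie-breaking rule. This $X$ is unbiased for $\davg/2$, and since $f(v) := |\{u : (u,v) \text{ is assigned to } v\}| \leq \sqrt{2m}$ for every $v$ (by the standard heavy-vertex/light-vertex case analysis), we get $\Var[X] \leq \Ex[X^2] = \frac{1}{n}\sum_v \deg(v) f(v) = O(\davg^{3/2}\sqrt{n})$, yielding sample complexity $\Ot(\eps^{-2}\sqrt{n/\davg})$. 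Computing $X$ uses only $\RV, \RN, \DEG$ and degree comparisons, so it needs no knowledge of $n$.

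Since $\davg$ is unknown, we run both estimators on a geometric schedule $T = 2, 4, 8, \ldots$ and stop at the first $T$ for which a sample-variance-based test certifies $(1\pm \eps)$-accuracy. For either estimator, this test passes once $T$ is a constant multiple of the respective target, so the total query cost telescopes to $O(T^\star)$ with $T^\star := \min(T^\star_{\mathrm{BT}}, T^\star_{\mathrm{ERS}}) = \Ot(\eps^{-2}\min(\davg, \sqrt{n/\davg}))$, giving the claimed bound and $\Ot(\eps^{-2}n^{1/3})$ worst-case. The main technical subtlety is showing that the empirical stopping rule is simultaneously sound for both estimators: no incorrect estimate passes the test except with negligible probability, while the test still triggers at the right sample size. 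This is handled by a union bound over the $O(\log n)$ doubling levels combined with a standard empirical Bernstein or median-of-means inequality, with the polylogarithmic slack absorbed into $\Ot(\cdot)$.

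For part (b), we adapt BT's $\Omega(n^{1/3})$ lower bound for array-average estimation from uniform and proportional samples with unknown support size to the graph model with full neighborhood access. We construct two distributions $\mathcal{G}_1, \mathcal{G}_2$ over graphs that are disjoint unions of constant-sized gadgets of two types: light matched edges (contributing degree-$1$ vertices) and heavy gadgets that are $D$-cliques for a small constant $D$ (contributing degree-$(D-1)$ vertices). The two distributions are calibrated in analogy to BT's array construction to differ only in the number of heavy gadgets, producing a constant-factor gap in $\davg$. Because every gadget in both distributions is structurally identical, a single $\FN(v)$ query reveals at most $D = O(1)$ additional IDs and degrees and hence conveys no more distinguishing information than $\DEG(v)$ up to a constant-factor overhead; similarly, an $\RE$ sample reveals only the type of a uniform random edge. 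Thus any algorithm's transcript on $\mathcal{G}_1$ vs.\ $\mathcal{G}_2$ depends only on the counts of light/heavy landings and the collision pattern among sampled vertices, and BT's transcript-coupling argument then gives total-variation distance $o(1)$ unless $\Omega(n^{1/3})$ queries are made.
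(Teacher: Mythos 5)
Both parts of your proposal have genuine gaps, though the high-level skeleton of part (a) matches the paper.

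\textbf{Part (a).} Your plan (run BT and an $n$-oblivious ERS estimator, stop when one converges) is the right shape, and your variance calculation for the uncapped ERS estimator is correct. But the generic ``sample-variance-based stopping test'' is exactly the approach the paper explicitly warns against at the start of Section~4: ``neither does the idea of stopping when empirical variance is low enough ... as it runs the risk of stopping too early.'' The uncapped ERS estimator $X = \deg(v)\cdot\mathbf{1}[\cdots]$ has range up to $\max_v \deg(v)$, which is unknown. Concretely, on a graph that is a large matching plus a hidden clique of size $n^{2/3}$ (so $d \approx n^{1/3}$), any $T \ll \sqrt{n/d}$ samples of $X$ will all be $\leq 1$ with high probability; the empirical mean and variance look like those of a graph with $d = O(1)$, and any confidence interval built from the observed range (empirical Bernstein) or from median-of-means-without-a-variance-bound will wrongly certify. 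The paper's key missing ingredient is the $\RE$-derived threshold $\tau = \max_{e \in R}\deg(e)$ over $O(1/\eps)$ random edges: capping the estimator at $\tau$ makes it boundedly ranged with provably small bias (\Cref{clm:obs-estd}), and counting the number of samples with $\deg(v) \geq \tau$ gives a stopping rule whose soundness follows from the fact that $|H|/n \leq \davg/\tau$. Your proof has no analogue of this step, and without it the stopping rule is not sound.

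\textbf{Part (b).} Your construction with constant-sized gadgets does not give the required lower bound because $\DEG$ queries are free: one call to $\RV()$ followed by $\DEG$ reveals whether the sampled vertex lies in a degree-$1$ gadget or a degree-$(D-1)$ gadget. If the two distributions differ in the fraction of heavy vertices by a constant factor (which they must, to produce a constant gap in $\davg$ with constant-size gadgets), this is distinguishable in $O(1)$ queries. The analogy to BT's array lower bound breaks precisely here: in the array model the value at an index is hidden until queried, but in the graph model degrees are visible, so the degree distribution cannot be the source of hardness. Also, BT's unknown-$n$ array lower bound is $\Omega(\sqrt{n})$, not $\Omega(n^{1/3})$. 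The paper's actual hard instance (\Cref{clm:date}) is a single clique on $n^{1/3}$ vertices plus a large matching, with the matching size varying between the two cases; there $\RV$ almost always lands in the matching for both instances, and $\RE$ hits the (identical) clique with probability $\Theta(n^{-1/3})$ versus $\Theta(n^{-1/3}/(c{+}1))$, so distinguishing the edge-sample distributions requires $\Omega(n^{1/3})$ queries, and $\FN$ gives no extra power because distinct samples remain structurally indistinguishable. Your gadget construction cannot be repaired to this setting without introducing a large, asymmetrically-sized structure.
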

\noindent
To reiterate, random edge samples alone recover GR and ERS bounds that become smaller if the graph is dense. 
Additional structural queries don't help. This is in stark contrast with the known $n$ setting, where we get a hierarchy of complexities.

If we do not have random edge samples but have strong structural queries, we can still recover GR and ERS style results in the case of unknown $n$. 
%
%
%
%
%

\begin{thm}\label{thm:unknown-no-edge}
		Consider the advanced model with $n$ unknown.
	
	\noindent
	
	\noindent
	(a) There is an algorithm that outputs a $(1 + \eps)$-approximation to $\davg$ in $\Ot\left(\eps^{-2}\sqrt{n/\davg}\right)$ queries.

	\noindent
	(b) Furthermore, any algorithm, even in the stronger full adjacency model, that outputs a constant factor approximation to $d$ 
	requires $\Omega(\sqrt{n/\davg})$ samples.
\end{thm}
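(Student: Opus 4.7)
The plan is to adapt the Eden--Ron--Seshadhri (ERS) estimator~\cite{ERS19} to the unknown-$n$ setting, using $\ADD$ queries as the missing mechanism for threshold selection and concentration verification.

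The key observation is that the ERS estimator is intrinsically ``$n$-free'' in its formula: it reads $\hat{\davg} = \frac{1}{s}\sum_{v \in S} \tilde{\chi}_v \deg(v)$ over a uniform sample $S$, where $\tilde{\chi}_v \in \{0,1\}$ is determined by $\deg(v)$, a threshold $\theta$, and (when $\deg(v) > \theta$) a single random neighbor of $v$. The ERS identity $\sum_v \Exp[\tilde{\chi}_v]\deg(v) = 2m$ gives $\Exp[\hat{\davg}] = 2m/n = \davg$, so $n$ never appears in the estimator itself. The parameter $n$ tacitly enters in only two places: (i) the sample size $s = \Ot(\sqrt{n/\davg})$ needed for concentration, and (ii) the variance-optimal threshold $\theta = \Theta(\sqrt{\davg})$, which is $n$-free once $\davg$ is known.

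To handle the unknown threshold (because $\davg$ is unknown), I would run a geometric doubling search $\theta \in \{1, 2, 4, \ldots, \sqrt{n}\}$; to handle the unknown stopping time, I would double the sample size $s$ adaptively until the estimator stabilizes. The $\ADD$ queries are the crucial new ingredient: a single $\ADD(S)$ query on the current sample $S$ returns $e(S)$ with $\Exp[e(S)] \approx s^2 \davg/(2n)$, which combined with $\sum_{v \in S}\deg(v) \approx s\davg$ yields an independent estimator of $1/n$. This lets us self-consistently certify the guessed $\theta$ and the stopping time. One $\ADD(S)$ query of cost $s$ per doubling round is within budget, and a standard geometric-series/charging argument shows the total cost is dominated by the final round, yielding $\Ot(\eps^{-2}\sqrt{n/\davg})$.

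The main obstacle I foresee is controlling the variance of the $\ADD$-based density estimator when a few vertices have very high degree: the count $e(S)$ can be dominated by a single high-degree hub landing in $S$, mirroring the issue ERS solves for $\sum \deg$. I expect the fix is to combine the $\ADD$ query with ERS-style dampening --- e.g., restricting the effective count to sampled vertices whose degree is below the current guess of $\sqrt{\davg}$ --- and to apply a median-of-means boost for concentration. The delicate part of the write-up will be showing that the doubling strategy does not incur extra $\poly(\log n)$ overhead beyond the final round.

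For part~(b), the classical Goldreich--Ron lower bound construction~\cite{GR08} yields two graph distributions with different $\davg$ that are indistinguishable by any algorithm making $o(\sqrt{n/\davg})$ queries; this bound is known to extend to the full neighborhood access model (see~\cite{TT22}) and hence a fortiori to the advanced model. Since the construction does not rely on the algorithm knowing $n$, the lower bound transfers verbatim to our unknown-$n$ setting.
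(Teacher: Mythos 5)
Your high-level plan for part (a) is close in spirit to the paper's: use $\ADD$ on a uniform sample to extract an $n$-related quantity (the paper targets $\rho = n/\davg$ via $e(S)$, essentially the same estimator $\Exp[e(S)] \approx s^2\davg/(2n)$ you describe), then feed that into the ERS estimator, which you correctly observe is formulaically $n$-free. You also correctly identify the central obstacle: the variance of the $\ADD$-based count $e(S)$ is governed by $D^{(2)} := \sum_{v}\deg^2(v)$, and a few hubs can destroy concentration.

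However, your proposed fix for that obstacle is where the gap lies, and it does not match what actually makes the argument work. You suggest ``ERS-style dampening'' --- restricting the effective count to sampled vertices with degree below roughly $\sqrt{\davg}$ --- plus median-of-means. Dampening here introduces bias, not just variance: if most of $E$ is incident to high-degree vertices (e.g., a star, where $D^{(2)} = \Theta(m^2)$, $\davg = \Theta(1)$, and every edge touches the hub), the restricted $e(S)$ is identically $0$, and median-of-means cannot repair a biased estimator. The trick ERS uses to cap variance (attributing each edge to its lower-degree endpoint) is an orientation argument on a sum of degrees; there is no analogous unbiasedness-preserving truncation for the quadratic statistic $e(S)$. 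The paper instead proves a combinatorial \emph{dichotomy} (\Cref{lem:onehop}): whenever $D^{(2)} = \omega(m^{3/2})$ --- i.e., whenever the $\ADD$ density estimator would fail --- either $\Omega(\sqrt{m})$ vertices have degree $\Omega(\sqrt{m})$, or a uniformly random neighbor of a uniformly random vertex lands on a degree-$\Omega(\sqrt{m})$ vertex with probability $\Omega(\sqrt{\davg/n})$. In either branch, $O(\sqrt{n/\davg})$ base-model queries whp produce a vertex $x$ with $\deg(x) = \Omega(\sqrt{m})$, and one then estimates $n$ directly via rejection sampling against $x$'s neighborhood (\hyperref[lem:ber-subs]{\EstNumVer}). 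So the algorithm never tries to denoise $e(S)$ in the heavy-tailed regime; it switches to a completely different estimator. Your proposal, as written, is missing this structural lemma and the branch that exploits it, and your stopping criterion (``double $s$ until the estimator stabilizes'') is not justified in the regime where $e(S)$ has the heavy-tailed behavior you yourself flagged.

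Your argument for part (b) is fine: the $\Omega(\sqrt{n/\davg})$ lower bound for the full-neighborhood model with known $n$ trivially implies the same bound when $n$ is unknown (the algorithm only loses information), and the advanced model is weaker still. The paper's own proof (\Cref{clm:banana}) instead gives a self-contained union-of-cliques construction, but citing the known-$n$ bound and monotonicity is a valid shortcut.
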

\noindent
One may ask about algorithms in the standard model (base + pair queries) when $n$ is unknown. 
We show (\Cref{cor:unknown-n-no-edge})  that one can get $O\left(\eps^{-2}\min(\sqrt{n}, n/\davg)\right)$- query algorithms, and this is tight (\Cref{clm:cherry}).
This is a somewhat weak improvement over the base model, since it is only better when $d \gg \sqrt{n}$.
\Cref{tab:estd-vertical} summarizes our findings. Up to $\poly(\eps^{-1}\log n)$ dependencies, the picture
is complete and gives a nuanced understanding of the complexity of estimating $\davg$. (We do consider
resolving the $\poly(\eps^{-1}\log n)$ factors important, and discuss more in \Sec{related} and \Sec{future}.)

\begin{table}[h]
\centering
\begin{tabular}{|c|c|c|}
\hline
\multicolumn{3}{|c|}{$n$ is known} \\
\hline
Model & with $\RE$ & w/o $\RE$ \\
\hline\hline
\multirow{2}{*}{\bf Base} 
& {\footnotesize $\otilde(d \land \sqrt{n/d})= \otilde(n^{1/3})$ \cite{MPX07, ERS19, BT24}} 
& {\footnotesize $O(\sqrt{n/d})$ \cite{ERS19}} \\ 
& \putcolor {\footnotesize $\Omega(d \land \sqrt{n/d})= \Omega(n^{1/3})$ (\Thm{lb-known-n}(a))} 
& \\ \hline
\multirow{2}{*}{\bf Standard} 
& \putcolor {\footnotesize $\otilde(d \land \sqrt[3]{n/d})= \otilde(n^{1/4})$ (\Thm{stan})} 
& \\ 
& \putcolor {\footnotesize $\Omega(d \land \sqrt[3]{n/d})= \Omega(n^{1/4})$ (\Thm{lb-known-n}(b))} 
& \\ \hline
\multirow{2}{*}{\bf Advanced} 
& \putcolor {\footnotesize $\otilde(d \land \sqrt[4]{n/d})= \otilde(n^{1/5})$ (\Thm{advanced})} 
& \\ 
& \putcolor {\footnotesize $\Omega(d \land \sqrt[4]{n/d})= \Omega(n^{1/5})$ (\Thm{lb-known-n}(c))} 
& {\footnotesize $\Omega(\sqrt{n/d})$~\cite{TT22}} \\ \hline
\end{tabular}

\vspace{1em}

\begin{tabular}{|c|c|c|}
\hline
\multicolumn{3}{|c|}{$n$ is not known} \\
\hline
Model & with $\RE$ & w/o $\RE$ \\
\hline\hline
\multirow{2}{*}{\bf Base} 
& \putcolor {\footnotesize $O(d \land \sqrt{n/d}) = O(n^{1/3})$ (\Thm{unknown-base})} 
& {\footnotesize $O(\sqrt{n})$ \cite{RoTs16}} \\ 
&  & \putcolor {\footnotesize $\Omega(\sqrt{n})$ (\Clm{apple})} \\ \hline
\multirow{2}{*}{\bf Standard} 
&  & \putcolor {\footnotesize $O(\min(\sqrt{n}, \frac{n}{d}))$ (\Cor{unknown-n-no-edge})} \\ 
&  & \putcolor {\scriptsize $\Omega(\min(\sqrt{n},\frac{n}{d}))$ (\Clm{banana})} \\ \hline
\multirow{2}{*}{\bf Advanced} 
&  & \putcolor {\footnotesize $O(\sqrt{n/d})$ (\Thm{unknown-no-edge})} \\ 
& \putcolor {\footnotesize $\Omega(d \land \sqrt{n/d}) = \Omega(n^{1/3})$ (\Clm{date})} 
& \putcolor {\footnotesize $\Omega(\sqrt{n/d})$ (\Clm{banana})} \\ \hline
\end{tabular}

\caption{\small \em Complexity of estimating $d$ in various models considered in this paper. 
The top table corresponds to the case when $n$ is known, and the bottom one when $n$ is not known.
All cells in gray are new results of this paper. All cells without text are implied by other cells.
For clarity, we hide the dependence on $\eps$. All our algorithms have a dependence of $1/\eps^2$.}
\label{tab:estd-vertical}
\end{table}

%

\noindent
{\bf On estimating $n$.}
When $n$ is unknown, it is natural to ask whether $n$ itself can be approximated. There is always
a simple collision-based estimator that only uses $O(\sqrt{n})$ uar vertex samples (see, for example, Theorem 2.1 in this paper~\cite{RoTs16} by Ron and Tsur). This procedure
works for any discrete universe and has nothing to do with graphs. But consider the
example of a cycle on $n$ vertices. Addition structural queries or uar edge samples
cannot help in beating an $\Omega(\sqrt{n})$ lower bound. Still, one can ask whether this trivial ``birthday paradox" bound can be beaten
when the graph density is higher. Indeed, we can replicate the results for estimating $\davg$. In particular, we have the following.

\begin{thm} \label{thm:est-n}
	Consider the advanced model with $n$ unknown.
	
	\noindent
	
	\noindent
	(a) There is an algorithm that outputs a $(1\pm \eps)$-approximation to $n$ in $\Ot\left(\eps^{-2}\sqrt{n/\davg}\right)$ queries.

	\noindent
	(b) Furthermore, any algorithm, even in the stronger full adjacency model, that outputs a constant factor approximation to $n$ 
	requires $\Omega(\sqrt{n/\davg})$ samples.
\end{thm}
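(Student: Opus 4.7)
The plan is to first estimate $d$ via \Thm{unknown-no-edge} and then recover $n$ from a single $\ADD$ query on a small uniform sample of vertices. Concretely, I would invoke \Thm{unknown-no-edge} to obtain $\hat d = (1 \pm \eps) d$ at cost $\Ot(\eps^{-2}\sqrt{n/d})$, then sample $k$ iid uniform vertices $v_1, \ldots, v_k$ via $\RV$ into a set $S$, issue a single $\ADD(S)$ query at cost $|S|$, and output $\hat n := k(k-1)\hat d / (2\, \ADD(S))$. Since for a fixed edge $\{x,y\}\in E$ the probability that both endpoints lie in $S$ is $\binom{k}{2}\cdot 2/n^2$ to leading order, one gets $\EE[\ADD(S)] = \binom{k}{2}\cdot d/n$, so $\hat n$ is an asymptotically unbiased estimator for $n$. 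Setting $k = \wt{\Theta}(\sqrt{n/d}/\eps)$ makes $\EE[\ADD(S)] = \Omgt(\eps^{-2})$, the signal strength needed for a $(1\pm\eps)$-accurate estimate; since $n$ and $d$ are unknown, $k$ is chosen via a doubling schedule that terminates once $\ADD(S)$ crosses a $\Theta(\log n / \eps^2)$ threshold, for a total query cost of $\Ot(\eps^{-2}\sqrt{n/d})$.

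\textbf{Concentration and fallback.}
Writing $\ADD(S) = \sum_{i<j} X_{ij}$ with $X_{ij} := \mathbf{1}[\{v_i, v_j\}\in E]$, the variance splits into a diagonal term of order $k^2 d/n$ and $O(k^3)$ shared-vertex covariances each of order $\EE[d^2]/n^2$. Dividing by the squared mean $\Theta(k^4 d^2/n^2)$ yields a variance-to-squared-mean ratio of $O(n/(k^2 d)) + O(\EE[d^2]/(k d^2))$; the first is $O(\eps^2)$ as soon as $k \geq \Omgt(\sqrt{n/d}/\eps)$, and the second is $O(\eps^2)$ whenever the degree-irregularity ratio $\EE[d^2]/d^2$ is bounded. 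For pathological degree distributions (e.g., star-like instances) where the second term dominates, $d$ must be $O(1)$, so $\sqrt{n/d} = \Theta(\sqrt{n})$ and the classical Ron--Tsur collision estimator on $\Ot(\sqrt{n}/\eps)$ uniform vertices already suffices; running both estimators in parallel and returning the one whose internal variance check passes yields a uniform $\Ot(\eps^{-2}\sqrt{n/d})$-query algorithm.

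\textbf{Part (b): Lower bound.}
For the matching lower bound, the plan is a birthday-style indistinguishability argument via two hard distributions. Let $\cG_1$ be a disjoint union of $N_1 := n/(d+1)$ cliques of size $d+1$ on $n$ vertices (with a uniformly random vertex labeling), and let $\cG_2$ be the analogous disjoint union of $N_2 := 2n/(d+1)$ cliques on $2n$ vertices; both are $d$-regular, so any constant-factor estimator of $n$ must distinguish them. In the full-neighborhood-access model, a $\RV$ call returns a uar clique and $\FN$ reveals the entire clique, while $\PAIR$/$\ADD$/degree queries on vertices not yet encountered yield no further information because an adversary may defer labeling until a vertex is explicitly queried. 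Consequently the algorithm's transcript is determined, up to relabeling, by a birthday-style collision pattern of $t$ uar draws from $N_i$ equiprobable cliques, and the standard birthday bound gives transcript total variation $o(1)$ whenever $t = o(\sqrt{N_1}) = o(\sqrt{n/d})$. The main subtlety in part (a) is dovetailing the $\ADD$-based and collision-based estimators correctly across regimes without foreknowledge of which is currently concentrated, while in part (b) it is formalizing the deferred-labeling adversary for adaptive algorithms; both follow standard templates in the sublinear-algorithms literature.
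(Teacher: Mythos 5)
Your overall plan --- estimate edge density in a sample of $k$ uniform vertices via one $\ADD$ query and invert --- is the right first move, and the variance decomposition and the threshold $\EE[\deg^2] \gtrsim d^{3/2}\sqrt{n}$ (equivalently $D^{(2)} = \sum_v \deg^2(v) \gtrsim m^{3/2}$) at which it breaks down are essentially what the paper derives in \Cref{clm:exp-est-n} and \Cref{clm:var-calc-est-n}. However, your fallback for the high-variance regime contains a genuine gap: you claim that when the second variance term dominates, ``$d$ must be $O(1)$,'' so that Ron--Tsur's $\Ot(\sqrt{n})$ collision estimator still achieves $\Ot(\sqrt{n/d})$. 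This is false. Consider the complete bipartite graph $K_{\sqrt{n},\,n}$: it has $\sqrt{n}$ vertices of degree $n$ and $n$ vertices of degree $\sqrt{n}$, so $d = \Theta(\sqrt{n})$ and $\EE[\deg^2] = \Theta(n^{3/2})$, while $d^{3/2}\sqrt{n} = \Theta(n^{5/4})$. The variance condition is violated ($n^{3/2} \gg n^{5/4}$) yet $d = \Theta(\sqrt{n})$ is far from $O(1)$, and the target budget $\sqrt{n/d} = \Theta(n^{1/4})$ is much smaller than Ron--Tsur's $\Theta(\sqrt{n})$. So your two-estimator fallback does not cover all cases. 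The paper bridges this gap with a combinatorial lemma (\Cref{lem:onehop}): whenever $D^{(2)} = \Omega(m^{3/2})$, either there are $\Omega(\sqrt{m})$ vertices of degree $\Omega(\sqrt{m})$, or a uniformly random neighbor of a uniformly random vertex has degree $\Omega(\sqrt{m})$ with probability $\Omega(\sqrt{m}/n) = \Omega(\sqrt{d/n})$. In both sub-cases, $O(\sqrt{n/d})$ samples of vertices and one-hop neighbors whp produce a vertex $x$ with $\deg(x) = \Omega(\sqrt{m})$, and then $n$ can be recovered by rejection sampling against $x$ (\EstNumVer in \Cref{lem:ber-subs}) in $O(\eps^{-2} n/\deg(x)) = O(\eps^{-2}\sqrt{n/d})$ further queries. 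Without this lemma (or an equivalent), the ``pathological degree distribution'' case is not handled and the $\Ot(\eps^{-2}\sqrt{n/d})$ bound does not follow. Also note: the paper's algorithm does not first estimate $d$ and then invert; it returns either $\hat n$ directly (high-variance case) or $\hat\rho \approx n/d$ (low-variance case), and then runs ERS once $\hat\rho$ is known to get $\hat d$ and hence $\hat n = \hat\rho\,\hat d$. Your route of estimating $d$ first via \Thm{unknown-no-edge} is circular, since that theorem's own proof relies on the very $n$-or-$\rho$ subroutine you are trying to prove.

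\textbf{On part (b).} Your hard instance (disjoint $(d+1)$-cliques on $n$ versus $2n$ vertices, with a deferred-labeling adversary and a birthday bound on clique collisions) is essentially the paper's construction in \Cref{clm:cherry} and is sound; since both graphs are $d$-regular, $\RE$ gives no additional power over $\RV$, and $\FN$ only reveals the clique of the probed vertex, so $o(\sqrt{n/d})$ queries whp hit distinct cliques and the transcripts are identically distributed. This part of your proposal matches the paper.
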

%
%
\noindent
As in the case of estimating $\davg$, we also show that in the base model, any algorithm requires $\Omega(\sqrt{n})$ queries, and in the standard model, we pin down the complexity to be $\Theta(\min(\sqrt{n}, n/d))$ queries.
On the other hand, random edge samples alone don't give any benefit for estimating $n$ in any of the
models; we prove a $\Omega(\sqrt{n})$ lower bound (\Cref{clm:cherry}) for constant approximation of $n$ in the base model with random edge samples.\smallskip

\noindent
{\bf Without $\RV$ queries.}
Finally, we consider the setting where we access the graph only via random edges, neighbors, but {\em not} random vertices. This was a question\footnote{It is not clear if the question in~\cite{BT24} allowed random neighbor access; however, if we allow full neighborhood access (which is stronger than $\ADD$) we get the power of querying random neighbors as well.} asked in~\cite{BT24}.
We show that $\ADD(S)$ queries lead to some power, and without them we have the same $\Omega(\sqrt{n})$ lower bound~\cite{MPX07,BT24} as for estimating the sum of an arbitrary vector.

\begin{thm}\label{thm:onlyedges}
	Consider access to the graph only via random edge samples, random neighbor samples but no random vertex samples, and $\ADD$ queries. 
	
	\noindent
	(a) There is an algorithm that outputs a $(1\pm \eps)$-approximation to $\davg$ in $\Ot\left(\eps^{-2}\min\left(\davg, \sqrt{n/\davg}\right)\right)$ queries.
	In particular, we get an $(1\pm \eps)$-approximation to $\davg$ in $\Ot\left(\eps^{-2}n^{1/3}\right)$ queries.

		\noindent
	(b) Furthermore, any algorithm in this setting that outputs a constant factor approximation to $d$ (even with the stronger full neighborhood access)
	requires $\Omega(n^{1/3})$ samples. Moreover, if instead of $\ADD$ we had only $\PAIR$ queries, then any such algorithm requires $\Omega(\min(\sqrt{n},n/\davg))$ queries.
\end{thm}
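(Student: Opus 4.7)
The proof has two parts: an algorithm for (a) and two lower bounds for (b). Since the target complexity $\Ot\bigl(\eps^{-2}\min(\davg,\sqrt{n/\davg})\bigr)$ is at most $\Ot(\eps^{-2} n^{1/3})$ and the two arguments of the minimum coincide at $\davg = n^{1/3}$, I would run two algorithms in parallel and output whichever produces a stable estimate first. The low-density branch is the BT algorithm~\cite{BT24}, which uses only $\RE$ queries and achieves $\Ot(\eps^{-2}\davg)$ samples; this is directly available in our model and wins when $\davg \leq n^{1/3}$.

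The high-density branch must achieve $\Ot(\eps^{-2}\sqrt{n/\davg})$ queries using only $\RE$, $\RN$, $\DEG$, and $\ADD$. I would adapt the ERS-style estimator underlying \Thm{unknown-no-edge}, substituting the absent $\RV$ with a combination of $\RE$ (which produces a vertex biased by its degree) and $\ADD$ (which counts edges inside any sampled vertex set $S$ at cost $|S|$). Concretely, one samples $s = \Ot(\sqrt{n/\davg})$ random edges, forms $S$ from their endpoints, obtains $\DEG(v)$ for every $v \in S$, and queries $\ADD(S)$. An edge $(u,v)$ is ``captured'' by $S$ with a probability that factors through $\deg(u)$ and $\deg(v)$, so the $\ADD(S)$ count can be debiased using only the observed degrees and $s$; combined with the ERS trick of charging each edge to its lower-degree endpoint (now in the degree-biased world), this yields an unbiased estimator whose variance is controlled at $s = \Ot(\sqrt{n/\davg})$ samples.

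For part (b), the $\Omega(n^{1/3})$ bound (even with $\FN$) follows by constructing two graph families $\cG_0, \cG_1$ with constant-factor different average degree but indistinguishable under any $o(n^{1/3})$-query algorithm. Take $\cG_1$ to be a dense ``core'' on $n/2$ vertices with $\Theta(n^{4/3})$ edges, so $d_1 = \Theta(n^{1/3})$; let $\cG_0$ be the same core together with a matching on $n/2$ additional vertices, so $d_0 = \Theta(n^{1/3})$ but differs from $d_1$ by a constant factor. A random edge of $\cG_0$ lands on the matching with probability $\Theta(n^{-1/3})$, $\FN$ of a core vertex does not reveal the matching, and a coupling/birthday-paradox argument shows that any $o(n^{1/3})$-query algorithm sees the same transcript distribution on both families. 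For the stronger $\Omega(\min(\sqrt{n}, n/\davg))$ bound when only $\PAIR$ queries are available, a similar bulk-plus-hidden construction works, but the hidden part is chosen so that without $\ADD$ detecting it requires either a birthday collision among random samples ($\sqrt{n}$ bound) or a coupon-collector event ($n/\davg$ bound), depending on the regime of $\davg$.

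\textbf{Main obstacle.} The technically nontrivial step is the variance analysis of the dense-case estimator in part (a): since $\RE$ samples vertices proportionally to their degrees, each edge is captured by $S$ with non-uniform probability, and the correction through $\ADD(S)$ must be designed so that the resulting estimator has $O(1)$ coefficient of variation at $s = \Ot(\sqrt{n/\davg})$ samples, without prior knowledge of $n$ or $\davg$. This will likely require a degree-threshold split into heavy and light vertices together with a separate variance calculation for each part, following the usual recipe in this line of work.
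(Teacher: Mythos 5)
Your high-level plan for part (a) is right: combine the BT estimator (which needs only $\RE$ and $\DEG$) with a dense-case algorithm that achieves $\Ot(\eps^{-2}\sqrt{n/\davg})$ queries, balancing at $\davg \approx n^{1/3}$. That is exactly what the paper does. But your dense-case estimator is a genuinely different approach, and it is not a complete proof. You propose sampling $s$ random edges, taking the multiset of endpoints $S$, and debiasing $\ADD(S)$ by the observed degrees. The difficulty is that the debiasing weight for a captured edge $(u,v)$ depends on $m$ (which is the unknown), or if you instead weight by $1/(\deg(u)\deg(v))$ to cancel the capture probability $\approx s^2\deg(u)\deg(v)/(2m)^2$, the resulting estimator has variance driven by $\sum_{(u,v)\in E} 1/(\deg(u)\deg(v))$, which is not controlled without a heavy/light split. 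You flag this yourself as the ``main obstacle,'' and indeed it is the crux; without solving it the argument does not close. The paper avoids the issue entirely by reusing the machinery of \Cref{sec:3.1}: it bins vertices into quasi-$k$-regular sets (with $\Delta = 0$, so the bins partition all of $V$), uses $\RE$/\!$\DEG$ to locate a dense pair $(H'_i, H'_j)$ with $m(H'_i,H'_j)=\Omega(m/\log^2 n)$, and then runs {\sc RandVertSamp} and {\sc \EstNumEdges} (\Cref{lem:rand-v-subset,lem:estimate-mab}), none of which need $\RV$. Quasi-regularity is precisely what makes the rejection sampling and the variance bound in \Cref{clm:var-calc} go through, so the split into degree buckets is not an afterthought but the load-bearing idea, and the paper's route gets it for free from the known-$n$ algorithm.

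For part (b), your $\Omega(n^{1/3})$ construction does not work as stated: $\cG_1$ has $n/2$ vertices while $\cG_0$ has $n$ vertices, so any algorithm that knows $n$ (which is the setting used for the upper bound in \Cref{thm:est-davg-known-n-no-rv}, and hence the relevant regime for a matching lower bound) can distinguish them with zero queries. The paper's construction in \Cref{clm:elderberry} keeps $|V|$ identical in both families, using $n^{2/3}$ (resp.\ $cn^{2/3}$) cliques of size $n^{1/3}$ padded by a matching of size $cn$ (resp.\ $n$), for a total of $(c+1)n$ vertices in both cases; distinguishing requires either an $\RE$ hit on the matching (probability $\Theta(n^{-1/3})$) or a clique collision among $\RE$/\!$\RN$ samples (birthday bound $\sqrt{n^{2/3}}=n^{1/3}$), and conditioned on no collision, $\FN$/\!$\ADD$ on disjoint clique fragments gives no signal. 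Also note that for the hard instances the paper uses disjoint cliques, not an arbitrary ``dense core''; with an arbitrary connected core, iterated $\FN$ queries could in principle traverse global structure, so the cliques are doing real work in killing the $\FN$ oracle. Your sketch for the $\Omega(\min(\sqrt{n},n/\davg))$ bound with only $\PAIR$ is too vague to assess, but the paper's version (size-$\sqrt{n}$ cliques, $\PAIR$ collision probability $1/\sqrt{n}$) gives the $\sqrt{n}$ endpoint cleanly; you would need a second, degree-tuned construction for the $n/\davg$ regime.
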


\noindent
The algorithm is described in~\Cref{sec:3.3} while the lower bound is proved in~\Cref{clm:elderberry}.

\subsection{Main ideas} \label{sec:ideas}

In this section we discuss the main technical ideas behind our theorems. 
For the sake of exposition, we drop all dependencies on $\eps$ and when we refer to ``estimating" a parameter,
we mean a multiplicative $(1+\eps)$-approximation. \medskip

\noindent
{\bf When $n$ is known.} We begin with a discussion of the ideas behind \Thm{stan} and \Thm{advanced}.
Recall, that the $O(d)$ query algorithm is the one from BT~\cite{BT24}, and we now discuss how to get
$O(\sqrt[3]{n/d})$ and $O(\sqrt[4]{n/d})$ query algorithms in the standard and the advanced models, respectively. The ideas behind both algorithms is the same.
It involves putting together a bunch of algorithmic pieces where each piece is relatively
simple, but they need to carefully ``mesh" together to get the (near) optimal sample complexity.

As is standard in many sublinear graph algorithms, the pain point is a subset of ``heavy'' (high-degree) vertices
since they lead to high variance of many estimators. To make this a bit more precise, 
let $p$ be a tune-able parameter which is set later. Let $H$
be the top $p$-fraction of vertices by degree, and $\Delta$ be the minimum
degree in $H$. With $O(1/p)$ uar vertex samples and degree queries,
one can approximately estimate $\Delta$. \medskip

\noindent
{\em ``Easy" Case.} Suppose $H$ induces at most a constant fraction of the edges of the whole graph. 
In this case, one can show that $\Delta$ forms a coarse approximation to the average degree; in particular,  $d = O(\Delta)$. 
In this situation, we can apply the idea of the ``harmonic estimator''~\cite{BT24} from BT which samples a vertex $x$ proportional to its degree (which is possible to do using random edge samples)
and then returns the reciprocal of the degree if $x\in H$, and otherwise returns $0$.  It follows directly from the results~\cite{BT24} that $O(1/p)$ samples suffice to estimate $\davg$ (\Cref{lem:harmest}). \smallskip

\noindent
{\em The more ``challenging" case.}  The challenging case is when $H$, the set of heavy vertices, induces most of the edges.
In this case the harmonic estimator has too much variance. To address this case we do the following.
\begin{asparaitem}
	\item We identify subsets $A\subseteq H$ and $B\subseteq H$ that are quasi-regular, that is, 
	every vertex in $A$ (and respectively $B$) has their degrees within a multiplicative factor $2$. Moreover, they induce a $\polylog n$ fraction of all the edges
	between them. That is, $m(A,B) \geq m/\polylog n$. This follows from standard binning techniques. (\Cref{lem:recogij})
	\item Since $m(A,B)$ is so ``large'', with $\polylog n$ edge samples, we can estimate $\frac{m(A,B)}{m}$ (\Cref{lem:ber-subs}). So, if we can estimate $m(A,B)$, 
	then we can estimate $m$. 	
	\item Using a rejection sampling technique, we can sample random vertices from $A$ (or $B$) using the power to sample random edges. (\Cref{lem:rand-v-subset})
	\item Using the power to sample vertices uar from $A$ and $B$, we can estimate the {\em density} of $A$, namely $\frac{m(A,B)}{|A|}$ (and also density of $B$)
	in $\polylog n$ samples. This is done by sampling a random vertex $a \in A$, sampling a random neighbor $y\sim a$, and checking if $y\in B$ or not. (\Cref{lem:estrhos}) 
	\item Now, consider sampling $s$ uar vertices from each $A$ and $B$. Call the sampled sets $X \subseteq A$ and $Y \subseteq B$,
	and let $m(X,Y)$ be the number of edges from $X$ to $Y$. The expected value
	of $m(X,Y)$ can be shown to be $\frac{s^2\cdot m(A,B)}{|A||B|}$. And so if $m(X,Y)$ was concentrated around its mean, we
	would get our estimate of $m(A,B)$ by multiplying $\frac{s^2}{m(X,Y)}$ with (our estimates of) the densities of $A$ and $B$ from the previous bullet point, since
	\begin{equation}
		\frac{s^2}{m(X,Y)} \cdot \frac{m(A,B)}{|A|} \cdot \frac{m(A,B)}{|B|} \approx \frac{s^2}{\frac{s^2\cdot m(A,B)}{|A||B|}}
		\cdot \frac{m(A,B)}{|A|} \cdot \frac{m(A,B)}{|B|} ~~= m(A,B) \label{eq:idea} \notag
	\end{equation}
	By calculating the variance of the estimator, one can show that concentration occurs when $m(X,Y) = \Omega(1)$, or more precisely,
	when $s = \Omega(\sqrt{|A||B|/m(A,B)})$ (\Cref{lem:estimate-mab}). 
\end{asparaitem}
\smallskip

\noindent
Of course, we don't know $s$ a priori, but this is handled by starting with a small $s$ and doubling it till we 
see $m(X,Y) = \Omega(1)$. Let us perform a calculation relating $s$ and $p$. 
First, for any value of $p$, by our choice of heavy vertices, we expect $|H|$ to be closely concentrated around $np$.
So, $\max(|A|, |B|) \leq pn$. When $m(A,B) \approx m$ (ignoring polylog factors), if
every vertex in $A \cup B$ is sampled with probability $1/\sqrt{m}$, then the expected
number of edges between $X$ and $Y$ will be $\Omega(1)$. Thus, when $s$ reaches $\frac{np}{\sqrt{m}} = O(p\sqrt{n/d})$, 
we obtain a good estimate of $m$. \smallskip

\noindent
{\em Query Complexity.}
Let us analyze the number of queries required to compute these estimates. The main bottleneck
is sampling $X$ and $Y$, and then getting $m(X,Y)$; as argued above, all other estimates can be computed in $\otilde(1)$ queries.
In the standard model, $m(X,Y)$ can be computed by $s^2$ pair queries. Plugging the value of $s$ above, this means $\otilde(p^2n/d)$ queries.

The above was the complexity in the ``challenging'' case.
Now we balance with the work needed in the ``easy'' case, which recall was $O(1/p)$. 
The balancing point is when $p = (d/n)^{1/3}$ which gives us an $\Ot((n/d)^{1/3})$-query algorithm of \Thm{stan}
promised. Furthermore, in the advanced model where $m(X,Y)$ can be computed with query cost $s$ (instead of $s^2$). So,
our final query complexity would be balanced at $p = (d/n)^{1/4}$. This leads to the $\Ot((n/d)^{1/4})$-query algorithm of \Thm{advanced}.
Of course we can't set $p$ to be this value since we don't know $d$, but this is achieved by the standard ``guessing and doubling'' trick.
The full algorithm is given in~\Cref{alg:est-d-known-n-full-alg}. \medskip

%
\noindent
{\bf When $n$ is unknown.} 
%
%
To underscore our ideas, let us explain what ERS~\cite{ERS19} does and why it needs to know $n$.
The basic ERS estimator is simple and elegant: pick a uar vertex $v$ and a uar neighbor $u \sim v$. If $\deg(v) \leq \deg(u)$,
output $2\deg(v)$, else output $0$. This is an unbiased estimator for average degree whose variance-by-squared-mean is 
$O(\sqrt{n/d})$, and so these many samples suffice for convergence. Note we don't know $d$. If we do know $n$, 
then we can apply the guess-and-half trick which
starts with a guess $g = n$, takes $O(\sqrt{n/g})$ samples, and stops if the current estimate
is comparable to $g$, otherwise halves $g$ and moves on. Without knowledge of $n$, it is not clear how to even implement this algorithm.
Indeed, as mentioned earlier, we prove a lower bound showing that in the base model (w/o $\RE$), $\Omega(\sqrt{n})$ samples
are required to estimate $d$, regardless of $d$. \smallskip

\noindent
{\em Getting around with random edge samples (and no structure).} 
With $\RE$ queries, we circumvent this lower bound by computing a stopping criterion for the ERS algorithm. 
The algorithm is quite simple: choose $O(1/\eps)$ uar edges $(u,v)$ and set $\tau$ to be the maximum value of $\min(\deg(u), \deg(v))$.
Then repeatedly run the ERS unbiased estimator till you observe at least $O(1/\eps^2)$ vertices of degree $\geq \tau$.
We can show that the average value of estimator converges
(\Cref{lem:dest-conc}) in these many samples, and furthermore, the number of samples
is  $O(\sqrt{n/d})$ (\Cref{clm:run-approx-d}).
Balancing this out with the BT Harmonic estimator, we get a $O(n^{1/3})$ query algorithm. So, for the Base model
with $\RE$ queries, up to $\poly(\eps^{-1}\log n)$, there is no difference whether $n$ is known or not. \smallskip

\noindent
{\em Getting around with additive queries (and no random edge samples).} 
Suppose $\RE$ queries are not available but we have $\ADD$ queries. 
We show that with these strong structural queries, we can design an algorithm which returns either
an approximation to $n/d$, or an estimate to $n$. Note that in either case we can run the ERS algorithm; even a coarse estimate to $n/d$ determines
the number of times we run the ERS unbiased estimator for convergence.
Furthermore, 
in both cases, we will also get an approximation for $n$. So, with $\ADD$ queries, we can estimate both $d$ and $n$
in $O(\sqrt{n/d})$ queries, beating the birthday paradox $\Omega(\sqrt{n})$ barrier. 

The key to this algorithm is a combinatorial lemma (\Cref{lem:onehop}) asserting that one of the following three conditions must hold.
Either (i) $\sum_{v \in V} \deg^2(v) = O(m^{3/2})$, or (ii) there are $\Omega(\sqrt{m})$ vertices
with degree $\Omega(\sqrt{m})$, or (iii) the probability that a uar neighbor of a random vertex
has degree $\Omega(\sqrt{m})$ is $\Omega(\sqrt{d/n})$. In the latter two cases, it is easy
to design a $O(\sqrt{n/d})$ query algorithm that discovers a vertex $v$ with $\deg(v) = \Omega(\sqrt{m})$.
Since $\deg(v)$ is known, by estimating the fraction of vertices that neighbor $v$,
we can also get an estimate of $n$. Overall, in $O(\sqrt{n/d})$ queries, one can estimate $n$.

The first case is the interesting one. Let us do a rough calculation. 
Pick $s$ uar vertices. The expected number of edges among these vertices is $(s/n)^2m = s^2(d/n)$.
When $s \approx \sqrt{n/d}$, we expect this set to contain $\Theta(1)$ edges. We can now approximate
$\sqrt{n/d}$ by finding the smallest $s$ such that the randomly sampled set has $\Theta(1)$ edges.
The latter quantity can be determined by an $\ADD$ query of cost $s$. When $\sum_{v \in V} \deg^2(v) = O(m^{3/2})$,
we can prove this estimator has low variance. Overall, we can estimate $\sqrt{n/d}$ with
$O(\sqrt{n/d})$ queries. 

We prove that all our complexities are optimal, up to $\poly(\eps^{-1}\log n)$ dependencies. All of our lower bounds constructions
are based on the same template. The hard instances are disjoint collections of cliques, where the size parameters
are chosen differently depending on the setting. We discuss these in \Sec{lowerbounds}.

\subsection{Further Related work} \label{sec:related}

The area of sublinear algorithms for graph problems is a field in itself, and we refer the reader to Chap. 8-10 of Goldreich's book~\cite{G17-book} for an in-depth discussion of models
and property testing results. In this section, we focus only on results that are directly relevant. \smallskip

\noindent
{\em Network Analysis, Network Sampling and the The Full Neighborhood Access Model.} In the context of network analysis, the problem of estimating average degree was studied by Dasgupta, Kumar, and Sarl\'os~\cite{DKS14}.
To the best of our knowledge, they are the first to describe the full neighborhood access model. 
In line with practical applications, they do not assume
$\RV$ queries, but rather assume that the graph is rapidly mixing. This work spurred interest in the problem of actually implementing $\RV$ queries,
with assumption on the graph structure. Chierichetti, Dasgupta, Kumar, Lattanzi, and Sarl\'os~\cite{CDKLS16} design rejection sampling based algorithms
for generating uar vertices, and Chierichetti and Haddadan~\cite{ChHa18} give the optimal bound for this problem.
The network science and data mining literature also has 
focused on graph sampling algorithms, often referred to as the process of ``network sampling". The focus is often
on generating a ``representative sample" and the results are primarily empirical~\cite{MB11,ANK10,AhNeKo12,ANK14,SEGP16,SEGP17}.
Eden, Jain, Pinar, Ron, and Seshadhri~\cite{EdJa+18} use ideas for theoretical sublinear algorithms (like ERS) to practically estimate the degree distribution. 
Katzir, Liberty, and Somekh~\cite{KaLiSo11} give collision based estimators using random walks for approximating the number of vertices, when $\RV$ queries
are not available.
Size estimation using collision estimators is likely folklore, but a formal treatment is given by Ron and Tsur~\cite{RoTs16}, who
also study the problem of estimating the sizes of hidden sets in richer models.
\Tetek-Thorup~\cite{TT22} did a detailed study of the $\eps$ dependence of these algorithms, and proved the optimal bound (up to additive $\eps$
dependencies) of $\Theta(\eps^{-1}\sqrt{n/d})$ for the full neighborhood access model. They also studied alternate intermediate models
called the ``hash-ordered access" model. 
\smallskip

\noindent
{\em Proportional Sampling and Random Edge Samples.} 
Motivated by trying to understand the size of the web~\cite{HHMN00,BG08}, Motwani, Panigrahy, and Xu~\cite{MPX07} study the problem of estimating the average of an array using both uniform and proportional sampling.
When the array is the degree sequence of a graph, this problem is average degree estimation. 
In the graph context, the power of edge sampling was first shown by Aliakbarpour et al~\cite{ABGPRY18}.
They studied the problem of star counting and got improvements over previous algorithms of Gonen, Ron, and Shavitt~\cite{GRS11}.
Assadi, Kapralov, and Khanna show how random edge samples can be used for subgraph counting~\cite{AsKaKh19}. Eden and Rosenbaum~\cite{ER18} explicitly
call out the problem of estimating $m$ with access to random edge samples. Bishnu, Chanda, and Mishra~\cite{BCM25} showed 
improvements for triangle counting with edge samples.
\smallskip

\noindent
{\em The Additive Model and Global Queries.} There has been significant study on ``query models" to obtain information about a graph. These queries are closely
related to the $\ADD(S)$ queries of our advanced model. As mentioned earlier, we take inspiration
from the work of Grebinski-Kucherov~\cite{GK00} whose main objective was to learn a graph in as few queries. 
In~\cite{BeHa+20}, Beame, Har Peled, Ramamoorthy, Rashtchian, and Sinha study the problem 
of estimating $m$ using Bipartite Independent Set (BIS) and Independent Set (IS) queries. A BIS query takes input two sets $A$ and $B$
and returns $1$ if there is no edge with one endpoint in $A$ and the other in $B$, and $0$ otherwise. An IS query takes input a set $S$ and returns $1$ 
if the set $S$ is independent.~\cite{BeHa+20} show that in $O(\log^{14} n)$ BIS queries and $\min(\sqrt{m}, \frac{n^2}{m})$ IS queries, one can estimate $m$ (ignoring $\eps$-dependence).
These results were respectively improved to $O(\log^6 n)$ by Addanki, McGregor, and Musco~\cite{AdMcMu22} and  $\min(\sqrt{m}, \frac{n}{\sqrt{m}})$ by Chen, Levi, and Waingarten~\cite{ChLeWa20}.
We remind the reader that in all of the above, a query costs one unit irrespective of the size of the set(s) being queried, and these algorithms invoke their
oracles on linear-sized sets. Moreover, these algorithms require the vertex set
to be known in advance. So these results don't imply sublinear query algorithms in our setting.

Nevertheless, we believe that our results that use $\ADD$ queries, like \Thm{advanced}, can also be obtained (with a logarithmic overhead) with the weaker IS oracle model 
as well, even while paying cost equal to the size of the set passed to the oracle. This stems from two observations: one, we typically 
invoke $\ADD(S)$ on sets containing only $O(1)$ edges, and two, using an older result of Angluin and Chen~\cite{AnCh08}, one can 
{\em reconstruct} this very sparse graph $G[S]$ in $O(\log |S|)$ many IS queries. Since each IS query can cost at most $|S|$, 
we can simulate $\ADD(S)$ for sparse $S$'s in $\Ot(|S|)$ cost. We leave the details out for this extended abstract. \smallskip
%
%

\noindent
{\em Arboricity Connection.} There is a deep connection between degree estimation and the graph arboricity/degeneracy. This was first
observed by ERS~\cite{ERS19}, and exploited in further subgraph counting and edge sampling results~\cite{EdRo18,EdRoRo19,EdRoSe20}. \smallskip

%
%
%
%

\subsection{Explicit questions for future work} \label{sec:future}
%
There is a single source for the $\log n$ factors in
the query complexities of \Thm{stan} and \Thm{advanced}: the binning argument needed to construct quasi-regular graphs.
This does not seem inherent to the problem, but our algorithm and analysis heavily uses it.
We think that removing
these $\log n$ dependencies could lead to conceptually cleaner algorithms. 

All our algorithms have a quadratic dependence on $1/\eps$. 
It is not at all clear that this is necessary.  Indeed, \Tetek-Thorup~\cite{TT22}
did a detailed study of $\eps$ dependencies for estimating $d$ and show an $\Omega(\eps^{-1}\sqrt{n/d})$ lower
bound for the full neighborhood access model and give almost matching algorithms. BT~\cite{BT24} discover simpler algorithms in the process of studying $\eps$-dependencies for estimating the average
of an array. It would be of great interest to pin down these dependencies in \Cref{tab:estd-vertical}. For example, in the unknown $n$
setting with $\RE$, could the $\eps$ dependencies create a separation between the models?

\begin{question}
	Can one improve the dependence on $\poly(\log n, \frac{1}{\eps})$ factors in our algorithms?
\end{question}
\noindent
As noted above, many sublinear graph results have shown that the arboricity
can be used to get better query complexities. The ERS analysis actually proves a bound of $O(\alpha/d)$,
where $\alpha$ is the graph arboricity/degeneracy. In the worst case, $\alpha \leq \sqrt{2m}$, giving
the $O(\sqrt{n/d})$ bound we have seen. This leads to the following question.

\begin{question}
Can every occurrence of $\sqrt{n/d}$ in our algorithms be replaced by the (potentially much smaller) quantity $\alpha/d$?
\end{question}

\noindent
It is likely that $\ADD$ queries can
help in other sublinear graph algorithms. As a first step, we can investigate whether they help
in triangle counting. \Tetek-Thorup~\cite{TT22} have given improvements for triangle counting in the full neighborhood access model;
could their algorithm be implemented with the weaker $\ADD$ queries?

\section{Preliminaries}

Let $G=(V,E)$ be a simple undirected graph. We use the following notation.
\begin{itemize}[noitemsep]
	\item For $x\in V$, $\deg(x)$ is the number of neighbors of $x$ in $G$.
	\item For $S\subseteq V$, $\deg(S) := \sum_{x\in S} \deg(x)$.
	\item Given $S\subseteq V$, $E(S) \subseteq E$ denotes the edges $e\in E$ with $e\cap S\neq \emptyset$. We let $m(S) := |E(S)|$.
	\item Given $A, B$ subsets of $V$, $E(A,B)$ denotes the edges $e\in E$ with one endpoint in $A$ and the other in $B$. We let $m(A,B) = |E(A,B)|$.
    \item For edge $e = (u,v) \in E$, the degree of $e$ is $\min(\deg(u), \deg(v))$.
\end{itemize}
\noindent
Recall the graph access models from~\Cref{sec:results}. We call an implicitly defined subset $A\subseteq V$ {\em query-able} if given any vertex $x \in V$, we can check whether $x\in A$
in $O(1)$ queries. Most commonly, $A$ will be a subset of vertices whose degrees are in a specific range and so one can query membership using $\DEG$ queries. \smallskip

\noindent
Throughout we use ``whp'' to mean with probability $\geq 0.99$. One could replace this by $1-\delta$ while paying a $O(\ln(1/\delta))$ factor in the query complexity 
by using standard ``median-of-means'' boosting. \smallskip

\noindent
We use notation $\Pr$, $\Exp$, and $\Var$ to denote ``probability'', ``expectation'', and ``variance''. We will continually use the following basic concentration facts.

\begin{fact}[Markov and Chebyshev]\label{fact:mkov-cheb}~
	
	\noindent
	For any non-negative random variable $Z$, $\Pr[Z \geq t] \leq \frac{\Exp[Z]}{t}$.
	
	\noindent
	For any random variable $Z$, $\Pr[|Z - \Exp[Z]| \geq t] \leq \frac{\Var[Z]}{t^2}$. In particular, for any $\eps > 0$, $\Pr[Z \notin (1\pm \eps) \Exp[Z]] \leq \frac{\Var[Z]}{\eps^2(\Exp[Z])^2}$.
\end{fact}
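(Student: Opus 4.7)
The plan is to establish Markov's inequality directly from the definition of expectation, and then derive Chebyshev's inequality from Markov's by applying it to the squared deviation. The ``in particular'' statement will follow from Chebyshev's by substituting $t = \eps \Exp[Z]$.

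First, I would prove Markov's inequality. Fix $t > 0$ and let $Z \geq 0$. Since $Z$ is non-negative, the indicator relation $Z \geq t \cdot \mathbf{1}[Z \geq t]$ holds pointwise: if $Z \geq t$ the right side equals $t \leq Z$, and if $Z < t$ the right side is $0 \leq Z$. Taking expectations on both sides and using linearity/monotonicity of expectation gives $\Exp[Z] \geq t \cdot \Pr[Z \geq t]$, which rearranges to the claimed bound.

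Next, for Chebyshev's inequality, consider the non-negative random variable $W := (Z - \Exp[Z])^2$, which has $\Exp[W] = \Var[Z]$ by definition of variance. Apply Markov's inequality to $W$ with threshold $t^2$:
\begin{equation*}
	\Pr[W \geq t^2] \leq \frac{\Exp[W]}{t^2} = \frac{\Var[Z]}{t^2}.
\end{equation*}
Since $W \geq t^2$ is equivalent to $|Z - \Exp[Z]| \geq t$, this gives the stated inequality. Finally, for the ``in particular'' claim, set $t = \eps \Exp[Z]$ (assuming $\Exp[Z] > 0$; the statement is vacuous otherwise). The event $Z \notin (1 \pm \eps)\Exp[Z]$ is precisely the event $|Z - \Exp[Z]| \geq \eps \Exp[Z]$, so plugging in yields the bound $\Var[Z]/(\eps^2 (\Exp[Z])^2)$.

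Since these are classical inequalities, there is no real obstacle: the only subtlety worth flagging is the non-negativity hypothesis on $Z$ in Markov (which is used in the pointwise inequality) and the convention for the ``in particular'' formulation when $\Exp[Z] = 0$.
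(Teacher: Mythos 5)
The paper states this as an unproven Fact (it is the classical Markov/Chebyshev inequality), so there is no internal proof to compare against. Your proof is correct and is the standard textbook argument: Markov via the pointwise bound $Z \geq t\,\mathbf{1}[Z\geq t]$, Chebyshev by applying Markov to $(Z-\Exp[Z])^2$, and the ``in particular'' clause by substituting $t=\eps\Exp[Z]$, with the appropriate caveat that this last step needs $\Exp[Z]>0$.
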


\begin{fact}[Chernoff Bounds]\label{fact:chern}~
	
	\noindent
	Let $X_1, \ldots, X_t$ be Bernoulli random variables with $q \leq \Pr[X_i = 1] \leq p$.
	Then,
	\begin{enumerate}[label=(\alph*),noitemsep]
		\item $\Pr[\sum_{t=1}^t X_i \leq (1-\eps)qt] \leq \exp(-\eps^2 qt/2)$ for $\eps \in (0,1)$.
		\item $\Pr[\sum_{t=1}^t X_i \geq (1+\eps)pt] \leq \exp(-\eps^2 pt/4)$ for $\eps \in (0,4)$.
		\item $\Pr[\sum_{i=1}^t X_i \geq \frac{t}{2}] \leq \exp(-\frac{t}{4}\ln(1/2p))$ when $p\leq 1/16$.
	\end{enumerate}
	 
\end{fact}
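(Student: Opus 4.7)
The plan is to prove all three tail bounds via the standard exponential moment (Cram\'er--Chernoff) method, assuming (as is presumably intended, though not spelled out) that $X_1, \dots, X_t$ are independent. For each inequality, the general recipe is: write $S = \sum_i X_i$, apply Markov's inequality to $e^{\lambda S}$ for a well-chosen $\lambda$, exploit independence so that $\Exp[e^{\lambda S}] = \prod_i \Exp[e^{\lambda X_i}]$, and then bound each factor using the classical estimate $\Exp[e^{\lambda X_i}] \le \exp(p_i(e^\lambda-1))$, where $p_i := \Pr[X_i=1]$. After substituting the hypothesis $q \le p_i \le p$, optimize $\lambda$.

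For part (a), the lower tail, set $\lambda < 0$ and use the upper bound $p_i \le p$ in the factor $(1 - p_i + p_i e^\lambda)$ together with the fact that for $\lambda<0$ the function $p \mapsto (1-p)+pe^\lambda$ is decreasing, so the worst case is when each $p_i$ equals $q$. This gives $\Exp[e^{\lambda S}] \le \exp(qt(e^\lambda-1))$. Choosing $\lambda = \ln(1-\eps)$, applying Markov to the event $\{e^{\lambda S} \ge e^{\lambda(1-\eps)qt}\}$, and using the standard numerical inequality $(1-\eps)\ln(1-\eps) + \eps \ge \eps^2/2$ for $\eps \in (0,1)$ yields the stated bound $\exp(-\eps^2 qt/2)$.

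For part (b), the upper tail, take $\lambda > 0$, bound each factor by $\exp(p_i(e^\lambda-1)) \le \exp(p(e^\lambda-1))$, and tune $\lambda = \ln(1+\eps)$ to get the classical estimate $\bigl(e^\eps/(1+\eps)^{1+\eps}\bigr)^{pt}$. The numerical inequality $(1+\eps)\ln(1+\eps) - \eps \ge \eps^2/4$ valid for $\eps \in (0,4)$ then gives the claimed $\exp(-\eps^2 pt/4)$. Part (c) is really a special case of (b) in the very-large-deviation regime: setting $1+\delta = 1/(2p) \ge 8$ (so $\delta$ is \emph{not} small) in the $\bigl(e^\delta/(1+\delta)^{1+\delta}\bigr)^{pt}$ bound, the logarithm works out to $pt \bigl(\tfrac{1}{2p} - 1 - \tfrac{1}{2p}\ln \tfrac{1}{2p}\bigr) = \tfrac{t}{2} - pt - \tfrac{t}{2}\ln\tfrac{1}{2p}$, and one checks that for $p \le 1/16$ the non-logarithmic terms $\tfrac{t}{2} - pt$ are absorbed by $\tfrac{t}{4}\ln\tfrac{1}{2p}$ (since $\ln(1/2p) \ge \ln 8 > 2$), leaving the cleanly stated $\exp\!\bigl(-\tfrac{t}{4}\ln\tfrac{1}{2p}\bigr)$.

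There is no real obstacle here beyond bookkeeping: these are textbook Chernoff bounds, and the only subtleties are (i) making the monotonicity argument that reduces the heterogeneous $p_i$ to the worst case ($q$ for the lower tail, $p$ for the upper), and (ii) verifying the elementary one-variable inequalities that convert the tight $(1\pm\eps)\ln(1\pm\eps) \mp \eps$ expressions into the user-friendly $\eps^2 q t/2$ and $\eps^2 p t/4$ forms. Since the statement appears as a Fact rather than a Lemma, in practice one would likely cite a standard reference (e.g., Mitzenmacher--Upfal or Dubhashi--Panconesi) rather than reproduce the computation.
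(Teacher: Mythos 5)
The paper states this as a \texttt{Fact} and supplies no proof at all, treating parts (a)--(c) as textbook Chernoff bounds to be cited rather than derived. Your proposal gives the standard exponential-moment proof and it is correct: part (a) and (b) are exactly the classical derivation with the monotonicity reduction from heterogeneous $p_i$ to the extremal values $q$ and $p$, and part (c) is correctly obtained by specializing (b)'s raw bound $\bigl(e^\delta/(1+\delta)^{1+\delta}\bigr)^{pt}$ at $1+\delta = 1/(2p)$. You should be explicit that independence of the $X_i$ is needed (the paper's statement omits it), and a tiny bookkeeping remark on (c): after taking logarithms you need $\tfrac{1}{2} - p \le \tfrac{1}{4}\ln\tfrac{1}{2p}$, whose left side is at most $\tfrac{1}{2}$ while the right side is decreasing in $p$ and equals $\tfrac{1}{4}\ln 8 \approx 0.52$ at the boundary $p = 1/16$; so the constant $1/16$ is nearly tight and worth the sanity check you sketch. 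Since the paper itself does not prove the Fact, there is nothing to compare against; your write-up is a faithful textbook proof and would serve as a valid self-contained justification.
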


\subsection{Subroutines}	
\def\Samp{\mathbf{Samp}}
We use the standard estimator for the bias of a Bernoulli random variable. This was studied by~\cite{LNSS93} and the following description is from~\cite{Wata05}.
The setting is the following: $\Samp()$ returns $1$ with probability $p$ and $0$ otherwise, and the objective is to estimate $p$.

\newcommand{\BernoulliEst}{{\sc BernoulliEst}}
\begin{algorithm}
	\caption{{\sc \BernoulliEst($\eps \in (0,1)$, $\delta \in (0,1)$)}\cite{Wata05}}\label{alg:ber}
	\begin{algorithmic}[1]
		\LineComment{Access to $\Samp()$ that returns $1$ with probability $p$ and $0$ otherwise.}
		\LineComment{Returns: $\widehat{p}$ estimate of $p$}
		\LineComment{Query Complexity: $O(\frac{1}{p\eps^2}\ln(2/\delta))$}
		\Statex
		\State $M \eq 0$; $N\eq 0$
		\While{$M < \frac{3(1+\eps)}{\eps^2}\ln(2/\delta)$}:
			\State $M \eq M + \Samp()$; $N\eq N+1$
		\EndWhile
		\State \Return $\widehat{p} \eq M/N$.
	\end{algorithmic}
\end{algorithm}
\begin{lemma}\label{lem:ber}\cite{LNSS93,Wata05}
	\BernoulliEst$(\eps,\delta)$ returns $\widehat{p}$ in $O(\frac{1}{\eps^2 p}\ln(1/\delta))$  queries and $\Pr[|\widehat{p} - p| \geq \eps p] \leq \delta$.
\end{lemma}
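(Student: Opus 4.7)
The plan is to reframe the algorithm as a stopping rule: \BernoulliEst draws i.i.d.\ Bernoulli$(p)$ samples until the count of $1$'s reaches the threshold $T := \frac{3(1+\eps)}{\eps^2}\ln(2/\delta)$, and then returns $\wh p = T/N^*$, where $N^*$ is the random trial at which the $T$-th success occurs. Equivalently $N^*$ is a negative binomial random variable with $\Exp[N^*] = T/p$. This single observation lets us handle both accuracy and query complexity in parallel, by comparing $N^*$ to the deterministic value $T/p$.

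For accuracy, I would use the elementary identity
\[
\{\wh p \notin (1\pm\eps)p\} \;=\; \{N^* < T/((1+\eps)p)\}\;\cup\;\{N^* > T/((1-\eps)p)\},
\]
and translate each event on the right into a tail event for the partial sums $S_N := \sum_{i \le N} X_i$ at a deterministic index, using the duality $\{N^* < n\} = \{S_n \ge T\}$ and $\{N^* > n\} = \{S_n < T\}$. Plug in $n_1 := \lfloor T/((1+\eps)p)\rfloor$ in the first case (so $\Exp[S_{n_1}] \le T/(1+\eps)$, and we want to bound $\Pr[S_{n_1} \ge (1+\eps)\Exp[S_{n_1}]]$) and $n_2 := \lceil T/((1-\eps)p)\rceil$ in the second (so $\Exp[S_{n_2}] \ge T/(1-\eps)$, and we want to bound $\Pr[S_{n_2} \le (1-\eps)\Exp[S_{n_2}]]$). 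Both are immediate applications of the multiplicative Chernoff bounds from \Fact{chern}. The constant $3(1+\eps)$ in the definition of $T$ is chosen precisely so that the resulting exponents are at least $\ln(2/\delta)$, and a union bound over the two tails then yields the desired failure probability $\le \delta$.

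For the query complexity, the very same lower-tail bound above already shows that $N^* \le T/((1-\eps)p)$ with probability at least $1-\delta/2$, which gives $N^* = O\!\bigl(\frac{1}{\eps^2 p}\ln(1/\delta)\bigr)$ queries whp. (One may also simply note that $\Exp[N^*] = T/p$ is of the same order and, if desired, repeat the Chernoff argument at a slightly larger index to obtain an exponentially small probability of overrun.)

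The only real care-point is bookkeeping: matching the constant in $T$ against the specific constants in the Chernoff exponents so that both tails are simultaneously dominated by $\delta/2$, and checking that the $\lfloor \cdot \rfloor$/$\lceil \cdot \rceil$ roundings in $n_1,n_2$ only cost a harmless factor. This is routine and essentially the calculation carried out in \cite{LNSS93,Wata05}, so I would cite that and present only the negative-binomial reduction plus the Chernoff applications in the write-up.
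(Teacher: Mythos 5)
The paper does not actually prove this lemma: it states it and cites \cite{LNSS93,Wata05} for the argument. Your negative-binomial / stopping-time reduction is precisely the standard proof of this sequential coverage estimator, so your approach is correct and consistent with what the cited works do. The key identity --- $\wh p$ deviates from $p$ exactly when the waiting time $N^*$ for the $T$-th success falls outside $[T/((1+\eps)p),\, T/((1-\eps)p)]$, and each of those events is a partial-sum tail event at a deterministic index --- is sound, as is the observation that the same lower tail that gives accuracy simultaneously gives the $O(\eps^{-2}p^{-1}\ln(1/\delta))$ query bound with probability $1-\delta$ (and in expectation, since $\Exp[N^*]=\lceil T\rceil/p$).

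One small bookkeeping caution that you flagged but are right to take seriously: the paper's own upper-tail Chernoff inequality (\Cref{fact:chern}(b), with a $4$ in the exponent denominator) is slightly too weak to close the argument with the specific constant $3(1+\eps)$ in the threshold $T$. At $n_1\approx T/((1+\eps)p)$ it yields only $\exp\!\bigl(-\tfrac{3}{4}\ln(2/\delta)\bigr)=(\delta/2)^{3/4}$, which is not $\le\delta/2$. You need the sharper upper tail $\Pr[S_n\ge(1+\eps)\mu]\le\exp\!\bigl(-\tfrac{\eps^2\mu}{2+\eps}\bigr)$ (equivalently the familiar $\exp(-\eps^2\mu/3)$ form for $\eps\le1$), under which the same substitution gives $\exp\!\bigl(-\tfrac{3}{2+\eps}\ln(2/\delta)\bigr)\le\delta/2$; the lower tail is already comfortable with \Cref{fact:chern}(a). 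So when you write this up, either import the $(2+\eps)$-denominator Chernoff bound or slightly enlarge the constant in $T$; otherwise the proof is complete and the structure is exactly right.
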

\noindent
We use the above estimator for three different fractions, and we describe these below.

\newcommand{\EstFracVert}{{\sc EstFracVert~}}
\newcommand{\EstFracEdge}{{\sc EstFracEdge~}}
\newcommand{\EstNumVer}{{\sc EstNumVer~}}
\begin{lemma}\label{lem:ber-subs}
	The following randomized algorithms exist
\begin{itemize}[noitemsep]
	\item \EstFracVert takes input a query-able set $A$ and whp returns an $(1\pm \eps)$ estimate of $\frac{|A|}{n}$ in $O(\frac{n}{|A|\eps^2})$ queries.
	\item \EstFracEdge takes input two query-able sets $A,B$ and whp  returns an $(1\pm \eps)$ estimate of $\frac{m(A,B)}{m}$ in $O(\frac{m}{\eps^2 m(A,B)})$ queries.
	\item \EstNumVer takes input a vertex $x\in V$, and whp returns an $(1\pm \eps)$ estimate of $n$ in $O(\frac{n}{\eps^2 \deg(x)})$ queries.
\end{itemize}
\end{lemma}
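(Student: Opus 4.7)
The plan is to realize each of the three estimators as an instance of \BernoulliEst (\Cref{lem:ber}) applied to a suitable Bernoulli sampler whose success probability equals, up to a known factor, the quantity we wish to estimate.

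For \EstFracVert, the sampler $\Samp()$ calls $\RV()$ to obtain a uar vertex $y\in V$ and returns $\mathbf{1}[y\in A]$; since $A$ is query-able this costs $O(1)$ queries. The bias of $\Samp()$ is exactly $p = |A|/n$, so invoking \BernoulliEst with accuracy $\eps$ and failure probability $0.01$ returns $\widehat{p}\in (1\pm\eps)p$, which is the desired $(1\pm\eps)$-approximation to $|A|/n$, using $O(1/(\eps^2 p)) = O(n/(|A|\eps^2))$ queries by \Cref{lem:ber}.

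For \EstFracEdge, the sampler calls $\RE()$ to draw a uar edge $e = (u,v)$ and returns $\mathbf{1}[e\in E(A,B)]$, evaluated by testing $(u\in A \wedge v\in B)\vee(u\in B\wedge v\in A)$ in $O(1)$ queries since $A$ and $B$ are query-able. The bias is $p = m(A,B)/m$, so \BernoulliEst with accuracy $\eps$ returns an $(1\pm\eps)$-approximation in $O(m/(\eps^2 m(A,B)))$ queries.

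For \EstNumVer, the sampler calls $\RV()$ to draw a uar vertex $y$ and returns $\mathbf{1}[y\text{ is a neighbor of }x]$, which is evaluated with one $\PAIR(x,y)$ query. The bias is $p = \deg(x)/n$. Running \BernoulliEst with accuracy $\eps' = \eps/3$ and failure probability $0.01$ yields $\widehat{p}$ satisfying $\widehat{p}\in (1\pm \eps/3)p$ using $O(n/(\eps^2\deg(x)))$ queries, after which we output $\widehat{n} := \deg(x)/\widehat{p}$. A short calculation gives $\widehat{n}\in(1\pm\eps) n$ for $\eps \leq 1/2$, since $1/(1\pm\eps/3) \in (1\mp\eps)$ in that range.

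No real obstacle arises in any of the three cases; the only care needed is to note that the target quantity in \EstNumVer is the reciprocal of the Bernoulli bias, which is why the multiplicative error must be propagated through the inversion (handled by shrinking the accuracy parameter by a constant factor).
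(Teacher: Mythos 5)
Your proof is correct and follows the same route as the paper: each estimator is realized as a call to \BernoulliEst with the corresponding sampler ($\RV$ with membership test, $\RE$ with edge-set test, $\RV$ with a $\PAIR$-based adjacency test). You just spell out the reciprocal/error-propagation step for \EstNumVer slightly more explicitly than the paper does, which is fine.
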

\begin{proof}
	We run \BernoulliEst~ with the following $\Samp()$s using the oracles above.
	\begin{itemize}[noitemsep]
		\item Run $\RV$ to get $x\in V$ and return $1$ if $x\in A$.
		\item Run $\RE$ to get $e\in E$ and return $1$ if $e\in E(A,B)$.
		\item Run $\RV$ to get $y$ and return $1$
		 if $y\sim x$ using $\PAIR$ (or $\ADD$)\qedhere
	\end{itemize}
\end{proof}

\noindent
Beretta and \Tetek~\cite{BT24} give a $O(\eps^{-2}\davg)$ query $(1\pm \eps)$-approximation to the average degree.
Their algorithm only needs $\RE$ and $\DEG$ oracles, and does not need to know $n$.
For completeness we provide a full description and analysis in \Cref{sec:app:bt}. 

\begin{restatable}{theorem}{bertet}\label{lem:bt}
	[Theorem 5.1 of Beretta-\Tetek~\cite{BT24}, paraphrased]~
	
	\noindent
	Given access to an unknown graph via $\RE$ and $\DEG$ oracles, there is a randomized algorithm {\sc \BTEstAvgDeg}$(\eps)$ that whp
	returns an $(1\pm \eps)$-approximation to $\davg$ in $O(\eps^{-2}\davg)$ queries.
\end{restatable}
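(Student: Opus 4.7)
The plan is to construct an unbiased estimator for $1/\davg$ that uses only one $\RE$ query plus one $\DEG$ query per sample, and then invert. Sampling an edge $e=(u,v)\sim\RE()$ and picking its first endpoint $u$ gives a vertex chosen with probability $\deg(u)/(2m)$, since each $u\in V$ appears as the first endpoint in exactly $\deg(u)$ ordered pairs. Setting $Z:=1/\deg(u)$, a direct computation yields
\[
\Exp[Z]=\sum_{u\in V}\frac{\deg(u)}{2m}\cdot\frac{1}{\deg(u)}=\frac{n}{2m}=\frac{1}{\davg}.
\]
Moreover, since every vertex has degree at least $1$ by the standing assumption,
\[
\Var[Z]\leq \Exp[Z^2]=\sum_{u\in V}\frac{\deg(u)}{2m}\cdot\frac{1}{\deg(u)^2}=\frac{1}{2m}\sum_{u\in V}\frac{1}{\deg(u)}\leq \frac{n}{2m}=\frac{1}{\davg}.
\]

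If $\davg$ were known, I would just average $t=\Theta(\davg/\eps^2)$ i.i.d.\ copies, since Chebyshev gives $\Pr[\bar Z\notin(1\pm\eps)/\davg]\leq \davg/(\eps^2 t)$, which is a small constant; a standard median-of-means boost makes this hold whp, and then $1/\bar Z$ is a $(1\pm O(\eps))$-approximation to $\davg$. Since $\davg$ is not known, I would instead employ an inverse-sampling rule: draw $Z_1,Z_2,\ldots$ sequentially and halt at the first $\tau$ with $S_\tau:=\sum_{i\leq\tau}Z_i\geq T$, where $T=C/\eps^2$ with $C$ a large constant, and output $\hat d:=\tau/T$. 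Since $Z\leq 1$ and $\Exp[Z]=1/\davg$, Wald's identity gives $\Exp[\tau]=\Theta(T\davg)=O(\davg/\eps^2)$, matching the target query count.

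The main technical step is the two-sided concentration $\tau\in(1\pm O(\eps))T\davg$, which is equivalent to $\hat d\in(1\pm O(\eps))\davg$. Because $Z$ is not uniformly bounded below by a constant, a Chernoff bound is not available, and instead I would bound $\Pr[S_{(1-\eps)T\davg}\geq T]$ and $\Pr[S_{(1+\eps)T\davg}<T]$ separately using Chebyshev on the partial sums: at $t=(1\pm\eps)T\davg$ one has $\Exp[S_t]=(1\pm\eps)T$ and $\Var[S_t]\leq t/\davg=(1\pm\eps)T$, so the deviation probability is at most $O(T/(\eps T)^2)=O(1/(\eps^2 T))$, which becomes a small constant once $T=C/\eps^2$ with $C$ large. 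This is the one step that needs some care; unbiasedness, the second-moment computation, and median-of-means boosting are routine. Importantly, $n$ is never referenced, as the theorem requires.
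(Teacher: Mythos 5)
Your estimator and its first two moments agree exactly with the paper's: sample an endpoint of a uniform random edge, set $Z=1/\deg(u)$, and use $\deg\geq 1$ to bound $\Exp[Z]=\Var[Z]\leq 1/\davg$. Where you diverge is in how the unknown $\davg$ is handled. The paper's Algorithm~\ref{alg:bt-est-theta} runs a \emph{batch} estimator with a guessed threshold $\theta$: it draws $k=\Theta(\theta/\eps^2)$ samples, averages, and either returns the estimate or asserts $\davg>\theta$; the full theorem then wraps this in a guess-and-double loop over $\theta\in\{1,2,4,\ldots\}$, boosting each level to confidence $1-\delta$ via $O(\ln(1/\delta))$ medians, and Markov plus a geometric union bound controls the probability of failing to abort at a too-small $\theta$. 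Your argument instead uses an \emph{inverse-sampling} (sequential) stopping rule: accumulate $S_t=\sum_{i\le t}Z_i$ until $S_\tau\geq T=C/\eps^2$ and output $\tau/T$, with concentration of $\tau$ obtained by converting $\{\tau\leq t\}\Leftrightarrow\{S_t\geq T\}$ and applying Chebyshev to the two fixed-time partial sums $S_{(1\mp\eps)T\davg}$. Both routes are correct; yours avoids the doubling machinery and the accept/abort case analysis at the cost of slightly more care when composing with median-of-means, since the per-copy query count $\tau_j$ is itself random (you would run the copies in parallel and stop once a majority have halted, so that the total query count is controlled by the whp bound $\tau_j\leq(1+\eps)T\davg$, which holds per copy with constant probability once $C$ is large). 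One tiny point worth making explicit: Wald applies here because $Z_i\geq 1/(n-1)$ a.s.\ forces $\tau\leq T(n-1)<\infty$, though the concentration bound you derive already gives the whp query bound directly and is the cleaner way to argue the complexity. Overall this is a valid and arguably tidier alternative proof of the same statement.
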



\noindent
Eden, Ron, and Seshadhri~\cite{ERS19} give an algorithm with access to $\RV, \RN$ and $\DEG$ oracles (basic model) and {\em with} knowledge of $n$ returns 
an $(1\pm \eps)$-approximation to $\davg$. We provide a proof in~\Cref{sec:app:ers} for completeness.

\begin{restatable}{theorem}{ers}\label{thm:ers}
[Theorem 4 in~\cite{ERS19}, paraphrased with $s=1$; Theorem 13 of \cite{Sesh15}, paraphrased]
	Given access to an unknown graph via the base model (that is, with $\RV$, $\RN$, and $\DEG$), and the knowledge of $n$, there is a randomized algorithm {\sc \ERSEstAvgDeg}$(\eps)$ which whp
	returns an $(1\pm \eps)$-approximation to $\davg$ in $O(\eps^{-2}\sqrt{n/\davg})$ queries.
\end{restatable}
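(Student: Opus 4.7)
The plan is to present the classical ERS one-sample estimator, prove it is unbiased with a small variance-to-mean-squared ratio, then amplify via averaging, and finally remove dependence on the unknown $\davg$ using a guess-and-halve scheme (which is where $n$ is needed).

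First I would define the estimator. Fix any total order $\prec$ on $V$ that refines degree (breaking ties by $\id$, which is immaterial but gives a clean presentation). Draw $v \gets \RV()$ and $u \gets \RN(v)$, and set $X = 2\deg(v)$ if $v \prec u$, else $X = 0$. Orienting each edge from its smaller to its larger endpoint, a direct double count gives
\[
\Exp[X] \;=\; \frac{1}{n}\sum_v \frac{1}{\deg(v)}\sum_{u \sim v} 2\deg(v)\,\mathbf{1}[v \prec u] \;=\; \frac{2m}{n} \;=\; \davg.
\]

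The crux is bounding the second moment. By the same computation,
\[
\Exp[X^2] \;=\; \frac{4}{n}\sum_{u}\sum_{v \sim u,\, v \prec u}\deg(v).
\]
Set $\tau = \sqrt{2m}$ and classify each $u$ as \emph{low} if $\deg(u) \le \tau$ and \emph{high} otherwise; note that there are at most $\tau$ high vertices. For a low $u$, every oriented predecessor $v$ satisfies $\deg(v) \le \deg(u) \le \tau$, so the inner sum is at most $\tau\deg(u)$. For a high $u$, split the oriented predecessors into low neighbors (each with $\deg(v)\le\tau$, contributing at most $\tau\deg(u)$) and high neighbors (of which there are at most $\tau$, each contributing $\deg(v)\le\deg(u)$, again at most $\tau\deg(u)$). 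Summing over $u$ yields $\Exp[X^2] \le O(\tau\cdot 2m/n) = O(m^{3/2}/n)$, and therefore
\[
\frac{\Var[X]}{\Exp[X]^2} \;\le\; \frac{\Exp[X^2]}{\davg^2} \;=\; O\!\Bigl(\sqrt{n/\davg}\,\Bigr).
\]

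Averaging $s = \Theta(\eps^{-2}\sqrt{n/\davg})$ independent copies of $X$ then gives a $(1\pm\eps)$-approximation to $\davg$ by Chebyshev. To handle the unknown $\davg$, initialize $g \gets n$ and repeatedly draw $\Theta(\eps^{-2}\sqrt{n/g})$ samples of $X$; accept the empirical mean once it exceeds a fixed constant fraction of $g$, otherwise halve $g$ and continue. A standard analysis shows the loop terminates once $g = \Theta(\davg)$, so the last phase dominates the total cost at $O(\eps^{-2}\sqrt{n/\davg})$ queries. The main obstacle is the second-moment bound: the $\sqrt{2m}$ threshold is the only place graph structure enters, and is precisely what separates this from the general non-negative array setting where $\Omega(\sqrt{n})$ samples are necessary.
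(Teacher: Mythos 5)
Your proof is correct and follows essentially the same route as the paper's (Appendix A.2): the same ERS one-sample estimator with the degree-refining orientation, the same $O(\sqrt{m})$-type second-moment bound, and the same guess-and-halve loop that needs $n$ to set the phase budget $\Theta(\eps^{-2}\sqrt{n/g})$. The only cosmetic difference is in deriving the second-moment bound: you split vertices into low/high at threshold $\sqrt{2m}$ and bound the inner sum case-by-case, whereas the paper bounds it directly via the observation $\max_x \deg^+(x) \le \sqrt{2m}$; both yield $\Exp[X^2] = O(m^{3/2}/n)$. One small caution: your ``standard analysis'' of the halving loop is doing real work — the paper uses a median-of-means boost at each phase so that Markov's bound on early acceptance decays geometrically and union-bounds to $o(1)$ (Chebyshev with the per-phase sample budget can also be made to work, but a single empirical mean per phase with only Markov does not), so you should make the boosting explicit to actually claim ``whp.''
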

%

\noindent
When the size of a universe is unknown but we can obtain uar samples from it, then we can estimate the size in roughly square-root of the universe size random samples
by collision counting. One place where this is formalized is the following theorem of Ron and Tsur~\cite{RoTs16}. We provide a proof in \Cref{app:rt} for completeness.

\begin{restatable}{theorem}{rt}\label{thm:rt}
	[Theorem 2.1 of \cite{RoTs16}, paraphrased]
	Given access to an unknown graph via just $\RV$ samples, there is a randomized algorithm {\sc EstNumColl}($\eps$) which
	returns an $(1\pm \eps)$-approximation $\widehat{n}$ to $n$ in $O(\eps^{-2}\sqrt{n})$ samples.
\end{restatable}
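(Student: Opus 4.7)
The plan is to use the classical birthday-paradox style collision estimator, with a doubling schedule to cope with unknown $n$. Concretely, I will draw samples $v_1, v_2, \ldots$ via $\RV$, maintain the number of ordered pair collisions $C_t := |\{(i,j) : i<j \leq t,\ v_i = v_j\}|$, keep doubling $t$, and stop as soon as $C_t$ exceeds a threshold of the form $\Theta(\eps^{-2})$. I then return $\widehat{n} := \binom{t}{2}/C_t$.

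For the analysis at a fixed $t$, let $X_{ij} := \mathbf{1}[v_i=v_j]$ so $C_t = \sum_{i<j}X_{ij}$ and $\mu_t := \Exp[C_t] = \binom{t}{2}/n$. The key variance computation is a standard case split: $\mathrm{Cov}(X_{ij}, X_{kl})=0$ when $\{i,j\} \cap \{k,l\} = \emptyset$ (independence), and also when $|\{i,j\} \cap \{k,l\}|=1$, since then $\Exp[X_{ij}X_{kl}] = 1/n^2 = \Exp[X_{ij}]\Exp[X_{kl}]$. Therefore $\Var[C_t] = \binom{t}{2}(1/n - 1/n^2) \leq \mu_t$. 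Chebyshev then gives
\[
\Pr\bigl[|C_t - \mu_t| \geq \eps \mu_t\bigr] \leq \frac{\Var[C_t]}{\eps^2\mu_t^2} \leq \frac{1}{\eps^2 \mu_t},
\]
which is $\leq 1/100$ once $\mu_t \geq 100/\eps^2$, i.e.\ once $t = \Omega(\eps^{-1}\sqrt{n})$. Conditioned on this event, $\widehat{n} = \binom{t}{2}/C_t \in (1 \pm O(\eps)) n$ by a short algebraic manipulation of $C_t \in (1\pm\eps)\mu_t$.

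To cope with not knowing $n$, I will phrase the stopping rule in terms of the observable $C_t$: stop at the first $t$ (among a geometric sequence $t \in \{2, 4, 8, \ldots\}$) such that $C_t \geq c/\eps^2$ for a suitable constant $c$. A straightforward monotonicity argument plus Chebyshev on both tails shows that this stopping time $t^\star$ is $\Theta(\eps^{-1}\sqrt{n})$ with constant probability, and the corresponding estimator $\binom{t^\star}{2}/C_{t^\star}$ approximates $n$ to within $(1\pm O(\eps))$. To amplify constant success probability to whp, run $O(\log n)$ independent copies and return the median, which pushes the final sample complexity to $O(\eps^{-2}\sqrt{n})$ as stated (absorbing the logarithmic factor into the $\eps^{-2}$, or writing it as $\widetilde{O}$).

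The main technical point to handle carefully is the doubling schedule: one must verify that the stopping rule fires neither too early (which the one-sided Chebyshev tail at scales $t \ll \eps^{-1}\sqrt{n}$ rules out via a union bound over the $O(\log n)$ smaller scales, using that $\mu_t$ grows quadratically in $t$ so the failure probabilities form a geometric series) nor too late (which is the standard upper Chebyshev tail at $t = \Theta(\eps^{-1}\sqrt{n})$). Everything else -- the variance computation and the relative-error conversion from $C_t$ to $\widehat{n}$ -- is routine.
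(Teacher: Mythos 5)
Your approach is correct and establishes the theorem, but it is a genuinely different route from the one in the paper. You track the \emph{total number of colliding pairs} $C_t = \sum_{i<j}\mathbf{1}[v_i=v_j]$, bound $\Var[C_t]\leq \Exp[C_t]$ by the (clean) pairwise-covariance cancellation, and then control everything with Chebyshev plus a doubling stopping rule. The paper instead tracks the \emph{number of distinct vertices seen at least twice} (the counter $Y$ in Algorithm~\ref{alg:coll-cnt}), stops as soon as $Y$ reaches a fixed threshold $\tau = \Theta(\eps^{-2}\log(1/\delta))$, and analyzes the stopping time via the Poisson approximation for balls-and-bins together with Chernoff bounds. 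The paper's statistic is bounded (at most one increment per distinct vertex), so negative association and Chernoff apply directly, giving $\geq 0.99$ success probability from a \emph{single} run with no doubling, no median trick, and no extra logarithmic factors. Your pair-collision statistic is unbounded and you compensate with a second-moment bound; that is a perfectly valid and somewhat more elementary trade-off.

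One point to tighten: you propose running $O(\log n)$ independent copies and taking the median, then ``absorbing the logarithmic factor into $\eps^{-2}$.'' That absorption is not valid in general (for fixed $\eps$, $\eps^{-2}$ is a constant while $\log n$ is not), and the theorem as stated is $O(\eps^{-2}\sqrt n)$, not $\widetilde O$. But this is easily repaired: the paper's notion of ``whp'' is only probability $\geq 0.99$, so $O(1)$ median copies suffice, or you can simply set the stopping threshold to a large enough constant times $\eps^{-2}$ so that a single run succeeds with probability $\geq 0.99$ (which is exactly what the paper does via the $\ln(4/\delta)$ factor in $\tau$). With that fix your complexity is $O(\eps^{-1}\sqrt n)\leq O(\eps^{-2}\sqrt n)$ and the argument is complete.
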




\section{Estimating average degree with known number of vertices}


\noindent
The first step of the algorithm estimates the fraction of edges that are incident to ``very high'' degree vertices. 
We sample $\approx \log n/p$ vertices uar and use $\Delta$ to denote the $(\log n)$th largest degree.
Here $p$ is a parameter; think of $p\approx n^{-1/4}$.
So, we expect $\approx n^{3/4}$ to have degree larger than $\Delta$. We then use random edges to estimate the fraction of {\em edges} incident to these high-degree vertices.
If this fraction is bounded away from $1$, then one can use the {\sc HarmonicEstimator} of~\cite{BT24} to estimate the average degree.
The bulk of the work will be taking care of the other case and this is described in~\Cref{sec:3.1}.

%
%
%
%

\begin{algorithm}
	\caption{{\sc \AlgEstDensHighDeg($p \in (0,1)$)}}\label{alg:esthighdeg}
	\begin{algorithmic}[1]
		\LineComment{Assume access to $\RV$, $\RE$, and $\DEG$.}
		\LineComment{$p \in (0,1)$ is a parameter}
		\LineComment{Outputs $(f,\Delta)$; $\Delta \approx $ $(pn)$th largest vertex degree \& $f \approx $ fraction of edges with degree $\geq \Delta$}
		\LineComment{Time: $O(\log n/p)$}
		\Statex
		
	\State Sample $r = \ceil{\frac{\log n}{p}}$ vertices uar and obtain their degrees. \Comment{Uses $\RV$ and $\DEG$}
	\State Let $\Delta$ be the $\ceil{\log n}$th largest degree in this sample.
	\State Sample $t = \ceil{\log n}$ edges uar and query their degrees. \Comment{Uses $\RE$ and $\DEG$}
	\State Let $f$ (respectively $f'$) be the fraction of these edges with degree $\geq \Delta$ (respectively $\geq \Delta+1$).
	\State \Return $(f, f', \Delta)$. \Comment{$f'$ is calculated for a technical reason mentioned below}
	\end{algorithmic}
\end{algorithm}
%
%

\begin{claim}\label{clm:chern}
	Let $(f,f',\Delta)$ be the output of \hyperref[alg:esthighdeg]\AlgEstDensHighDeg\emph{($p$)}. Let $H := \{x\in V~:~\deg(x)\geq \Delta\}$
	and $H' := \{x\in V~:~\deg(x)\geq \Delta+1\}$. 
	Then, 
	\begin{enumerate}[label=(\alph*),noitemsep]
	\item $\Pr[|H| \leq \frac{np}{4}] = o(1)$.
	\item $\Pr[|H'| \geq 2np] = o(1)$.
	\item If $f < \frac{2}{3}$, then $\Pr[|E[H]| \geq \frac{3m}{4}] = o(1)$. An analogous statement with $f', H'$ holds.
	\item If $f \geq \frac{2}{3}$, then $\Pr[|E[H]| < \frac{m}{2}] = o(1)$. An analogous statement with $f', H'$ holds.
	\end{enumerate}
\end{claim}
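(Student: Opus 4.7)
The plan is to derive all four parts from straightforward Chernoff applications to the two independent sampling steps of the algorithm, handling the fact that $\Delta$ is itself random by comparing it to suitable deterministic ``target'' thresholds defined from the true degree sequence.

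For part (a), I would let $\Delta^\star$ denote the largest integer with $|\{v \in V : \deg(v) \ge \Delta^\star\}| > np/4$; by definition $|\{v : \deg(v) > \Delta^\star\}| \le np/4$. The event $|H| \le np/4$ forces $\Delta > \Delta^\star$, which in turn forces at least $\lceil \log n \rceil$ of the $r = \lceil \log n / p \rceil$ sampled vertices to have degree $> \Delta^\star$. Each sampled vertex lies in that set independently with probability at most $p/4$, so the expected sample count is at most $\log n / 4$; Chernoff's upper tail (\Cref{fact:chern}(b), $\varepsilon = 3$) gives an $n^{-\Omega(1)}$ bound. Part (b) is symmetric: let $\tau$ be the largest integer with $|\{v : \deg(v) \ge \tau\}| \ge 2np$. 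The event $|H'| \ge 2np$ forces $\Delta + 1 \le \tau$, i.e.\ strictly fewer than $\lceil \log n \rceil$ sampled vertices reach degree $\tau$, while each sample independently reaches degree $\tau$ with probability $\ge 2p$ so the expected count is $\ge 2\log n$. Chernoff's lower tail (\Cref{fact:chern}(a), $\varepsilon = 1/2$) then yields $n^{-\Omega(1)}$.

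For parts (c) and (d), I would condition on the vertex sample, and hence on $\Delta$ and $H$; the subsequent $t = \lceil \log n \rceil$ edge samples drawn via $\RE$ are independent of this conditioning. Since an edge has degree $\ge \Delta$ exactly when both endpoints lie in $H$, each edge sample lies in $E[H]$ independently with probability $|E[H]|/m$, so $\Exp[f \mid H] = |E[H]|/m$. In part (c), under the (conditional) hypothesis $|E[H]| \ge 3m/4$, Chernoff's lower tail with $\varepsilon = 1/9$ (since $2/3 = (1 - 1/9) \cdot 3/4$) gives $\Pr[f < 2/3 \mid H] \le \exp(-\Omega(\log n)) = o(1)$. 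In part (d), under $|E[H]| < m/2$, Chernoff's upper tail with $\varepsilon = 1/3$ (since $2/3 = (1 + 1/3) \cdot 1/2$) gives $\Pr[f \ge 2/3 \mid H] = o(1)$. Integrating over realizations of $\Delta$ via the law of total probability yields the two joint-probability bounds as stated. The analogous statements with $f', H'$ follow verbatim by substituting $\Delta + 1$ for $\Delta$ throughout.

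There is no real obstacle here; the only conceptual subtlety is decoupling the randomness in $\Delta$ from that of the sets $H$ and $H'$, which is handled cleanly by the threshold-comparison idea in parts (a) and (b), and by simple conditioning in (c) and (d). The only calibration to verify is that the constant $\varepsilon$'s chosen above produce Chernoff exponents linear in $t = \lceil \log n \rceil$, so that each failure probability is polynomially small in $n$; this is comfortable and does not require tuning the sample sizes further.
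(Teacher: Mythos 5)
Your proof is correct and follows essentially the same approach as the paper's. For parts (a) and (b), the paper argues via the sets of the top $np/4$ and top $2np$ highest-degree vertices, while you argue via the equivalent degree thresholds $\Delta^\star$ and $\tau$; these are dual formulations of the same Chernoff argument on the vertex sample, and your threshold-based bookkeeping is if anything slightly cleaner. For parts (c) and (d), both proofs condition on the vertex sample, observe that an edge has degree $\ge \Delta$ precisely when it lies in $E[H]$, and apply Chernoff to the $\lceil\log n\rceil$ i.i.d.\ edge samples; the paper just says this holds ``more straightforwardly'' while you spell out the $\varepsilon$-calibration and the integration over $\Delta$, which is the right way to make the joint-event interpretation of the claim rigorous.
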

\begin{proof}
	Let $R$ be the subset of randomly sampled vertices. Let $A$ be the set of the largest $np/4$ vertices (breaking ties arbitrarily).
	If $R$ contains  $< \log n$ vertices from $A$, then note that $A\subseteq H$. So, $\Pr[|H|\leq \frac{np}{4}] \leq \Pr[|R\cap A| \geq \log n]$.
	Since $\Exp[|R\cap A|] = \frac{\log n}{4}$, this probability is $o(1)$ by Chernoff bounds (\Cref{fact:chern}(b)).
	Similarly, if we let $B$ be the set of largest $2np$ vertices, and if $R$ contains $> \log n$ vertices, then $H' \subsetneq B$. 
	So, $\Pr[|H'| \geq 2np] \leq \Pr[|R\cap B| \leq \log n]$. Since $\Exp[|R\cap B|] = 2\log n$, this probability is $o(1)$ by Chernoff bounds (\Cref{fact:chern}(b)).
	The third and fourth bullet points also hold due to Chernoff bounds (\Cref{fact:chern}(a),(b)) more straightforwardly.
\end{proof}

\noindent
\begin{claim}\label{clm:obs1}
	Let $\Delta, H'$ be as in~\Cref{clm:chern}. If $|E[H']| < \frac{3m}{4}$, then $\davg \leq 8\Delta$. 
\end{claim}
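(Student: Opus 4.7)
The plan is to decompose the edge set according to how many endpoints lie in $H'$ and use the fact that edges missing $H'$ can only contribute bounded degree. Concretely, I will partition $E$ into three classes: $A$ = edges with both endpoints in $H'$ (so $|A| = |E[H']|$), $B$ = edges with exactly one endpoint in $H'$, and $C$ = edges with no endpoint in $H'$. Then $|A| + |B| + |C| = m$, and from the hypothesis $|A| < 3m/4$ I get $|B| + |C| > m/4$.

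Next I will invoke the handshake identity on the complement of $H'$. Since each edge in $B$ contributes $1$ and each edge in $C$ contributes $2$ to the total degree of vertices outside $H'$,
\begin{equation*}
\sum_{v \notin H'} \deg(v) \;=\; 2|C| + |B| \;\geq\; |B| + |C| \;>\; \frac{m}{4}.
\end{equation*}
By the very definition of $H'$, every $v \notin H'$ satisfies $\deg(v) \leq \Delta$, so the same sum is at most $\Delta \cdot |V \setminus H'| \leq \Delta n$. Combining the two bounds yields $\Delta n > m/4$, i.e., $2m/n < 8\Delta$, which is exactly $\davg \leq 8\Delta$ (in fact strictly less than $8\Delta$).

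There is no real obstacle here; the only subtlety is reading $E[H']$ correctly as the edges induced by $H'$ (consistent with the algorithm's definition of the degree of an edge $(u,v)$ as $\min(\deg(u), \deg(v))$, so that ``edge degree $\geq \Delta+1$'' means both endpoints lie in $H'$). Once that is fixed, the argument is a two-line counting computation: the complement of $H'$ absorbs at least a quarter of the edge mass, and each such vertex has degree at most $\Delta$.
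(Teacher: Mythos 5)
Your proof is correct and follows the same counting argument as the paper: the handshake identity on $V\setminus H'$ gives $\sum_{v\notin H'}\deg(v)\geq |E\setminus E[H']| > m/4$, which combined with $\deg(v)\leq\Delta$ for $v\notin H'$ yields $n\Delta > m/4$. You have simply spelled out the $A,B,C$ edge decomposition that the paper leaves implicit in the inequality $\sum_{x\notin H'}\deg(x)\geq |E\setminus E[H']|$.
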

\begin{proof}
	Note that $\sum_{x\notin H'}\deg(x) \geq |E\setminus E[H']|$ and since $x\notin H'$ implies $\deg(x) \leq \Delta$, we get $n\Delta \geq m/4$ and the claim follows by rearranging
	and recalling $\davg = 2m/n$.
\end{proof}

\noindent
The above claims imply if $f' < \frac{2}{3}$, then whp we have an upper bound of $O(\Delta)$ on $\davg$. Furthermore, the set $H$ of vertices with degree $\geq \Delta$ is ``large'', that is, 
of size $\Omega(np)$. This is where the harmonic estimator of~\cite{BT24} is applicable: (i) since $H$ is large one can estimates its size in $\approx 1/p$ queries using \EstFracVert, 
and (ii) sampling vertices proportional to degree (which is possible due to edge sampling) and only considering those in $H$ with weight inversely proportional to the probability they are sampled with (i.e. their degree) gives an estimate of $|H|/m$. The above two together give the desired estimate for $\davg$. The following pseudocode is from~\cite{BT24} slightly modified
to our graph setting; we provide it just for completeness.

%
%

\begin{algorithm}
	\caption{{\sc HarmonicEstimator($\Delta, p, \eps \in (0,1)$)}}\label{alg:harmest}
	\begin{algorithmic}[1]
			\LineComment{Assume access to $\RV$, $\RE$, and $\DEG$.}
				\LineComment{Assume: $(f, f', \Delta)$ is the output of \hyperref[alg:esthighdeg]\AlgEstDensHighDeg($p$) {\bf and} $f' < \frac{2}{3}$}
				\LineComment{Returns estimate $\widehat{\davg}$ of average degree $\davg$ of $G$}
				\LineComment{Query Complexity: $O\left(\frac{1}{\eps^2p}\right)$ }
				\Statex
		\State Define $H = \{x:\deg(x) \geq \Delta\}$; we have query access to $H$.
		\Comment{$|H| = \Omega(np)$ and $\davg \leq 8\Delta$}
		\State $\widehat{p} \eq $ \hyperref[lem:ber-subs]{\EstFracVert}($H, \eps$) \Comment{Query Complexity: $O\left(\frac{1}{\eps^2p}\right)$} \Comment{Uses $\RV$}
		\State $k\eq \ceil{\frac{1000}{\eps^2 \widehat{p}}}$ \Comment{The constant $1000$ is arbitrary and could potentially be made smaller}
		\For{$i=1$ to $k$}:
		\State Sample an edge $e\in E$ uar and sample $x\in e$ uar. \Comment{Uses $\RE$} \label{alg:he:line-samp}
		\If{$\deg(x) \geq \Delta$}: \Comment{Uses $\DEG$}
		\State $Z_i \eq 1/\deg(x)$
		\Else:
		\State $Z_i \eq 0$
		\EndIf
		\EndFor
		\State $Z \eq \frac{1}{k} \cdot \sum_{i=1}^k b_i$
		\State \Return $\widehat{\davg} \eq \widehat{p}/Z$
	\end{algorithmic}
\end{algorithm}

\begin{lemma}[Paraphrasing Lemma 3.1 of \cite{BT24}] \label{lem:harmest}
	
	\noindent
	Let $(f,f',\Delta)$ be the output of {\AlgEstDensHighDeg}($p$) and suppose $f' < \frac{2}{3}$. Then, (i) the expected query complexity  
	of {\sc HarmonicEstimator} is $O(\frac{1}{\eps^2 p})$, and 
	(ii) with probability $> 9/10$, the output $\widehat{\davg}$ of {\sc HarmonicEstimator} satisfies $\widehat{\davg} \in (1\pm \eps)\davg$.
\end{lemma}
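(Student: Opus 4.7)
The plan is to recognize each $Z_i$ as an unbiased estimator of $|H|/(2m)$, where $H = \{x : \deg(x) \geq \Delta\}$, so that the returned ratio $\widehat{p}/Z$ will approximate $(|H|/n)\big/(|H|/(2m)) = 2m/n = \davg$. The hypothesis $f' < 2/3$ will enter only through the variance analysis, where it forces $\davg = O(\Delta)$ and so keeps each $Z_i \in [0, 1/\Delta]$ small compared to its mean.

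For the query complexity, I would first invoke \Clm{chern}(a) to get $|H| \geq np/4$ whp, so that $|H|/n = \Omega(p)$ and, by \Lem{ber-subs}, the call to \EstFracVert produces $\widehat{p} \in (1\pm \eps/3)\,|H|/n$ using $O(1/(\eps^2 p))$ queries. In particular $\widehat{p} = \Theta(p)$ whp, and the main loop then runs $k = \Theta(1/(\eps^2 p))$ iterations, each of which costs $O(1)$ queries via $\RE$ and $\DEG$; this gives the stated total.

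For correctness, the key observation is that sampling an edge uniformly and then a uniform endpoint picks any given vertex $x$ with probability $\deg(x)/(2m)$, so
\[
\Exp[Z_i] \;=\; \sum_{x \in H} \frac{\deg(x)}{2m}\cdot\frac{1}{\deg(x)} \;=\; \frac{|H|}{2m},
\]
whence $\widehat{p}/\Exp[Z] = \davg$ exactly, up to the $(1\pm\eps/3)$ error in $\widehat{p}$. For concentration I would use $Z_i \leq 1/\Delta$ to bound $\Var[Z_i] \leq \Exp[Z_i^2] \leq |H|/(2m\Delta)$. Here the hypothesis $f' < 2/3$ combined with \Clm{chern}(c) and \Clm{obs1} gives $\davg \leq 8\Delta$, equivalently $m \leq 4n\Delta$; plugging this together with $|H| \geq np/4$ yields a per-sample relative variance $\Var[Z_i]/\Exp[Z_i]^2 \leq 2m/(|H|\Delta) = O(1/p)$. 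Averaging $k = \Theta(1/(\eps^2 p))$ i.i.d.\ copies reduces this to $O(\eps^2)$, so Chebyshev (\Cref{fact:mkov-cheb}) delivers $Z \in (1\pm \eps/3)\Exp[Z]$ with large constant probability; a union bound against the $(1\pm\eps/3)$ bound on $\widehat{p}$ then gives $\widehat{\davg} \in (1\pm \eps)\davg$ with probability $> 9/10$.

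The one nontrivial step is pinpointing the role of $f' < 2/3$: it is not needed for unbiasedness, but enters critically in the variance bound via $\davg = O(\Delta)$, which is exactly what lets $1/\Delta$ serve as the effective uniform bound on each $Z_i$ relative to its mean $|H|/(2m)$. Without this link the relative variance would scale like $\davg/p$ rather than $1/p$, and the $O(1/(\eps^2 p))$ sample count would no longer suffice.
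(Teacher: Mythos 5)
Your proof is correct and follows essentially the same route as the paper's: both identify $\Exp[Z_i] = |H|/(2m)$, and both appeal to $\davg = O(\Delta)$ (via $f' < 2/3$, \Cref{clm:chern}, and \Cref{clm:obs1}) to bound the relative variance by $O(1/p)$. The only difference is that the paper defers the variance computation to \cite{BT24}, whereas you carry it out explicitly (via $Z_i \leq 1/\Delta$ giving $\Var[Z_i]/\Exp[Z_i]^2 \leq 2m/(|H|\Delta) \leq 32/p$), which is a welcome addition.
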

\begin{proof}
	The probability a vertex $x$ is sampled in~\Cref{alg:he:line-samp} is $\deg(x)/2m$.
	Note that $\Exp[Z_i] = \sum_{x\in H} \frac{\deg(x)}{2m}\cdot \frac{1}{\deg(x)} = \frac{|H|}{2m}$. 
	Using the fact that $\davg = O(\Delta)$, it can be shown (see~\cite{BT24}) that $\frac{\Var[Z_i]}{\Exp[Z_i]^2} = O(1/p)$, 
	and this means $Z$ is a good estimate to $\frac{|H|}{2m}$. Since $\widehat{p}$ is a good estimate to $\frac{|H|}{n}$, we get that
	$\widehat{\davg} = \widehat{p}/Z$ is a good estimate to $\frac{2m}{n} = \davg$.
\end{proof}
\subsection{The case when high-degree vertices induce a majority of the edges}\label{sec:3.1}

In this section we assume that $(f,f',\Delta)$ returned by \Cref{alg:esthighdeg} has $f'\geq 2/3$.
So, we know  that whp, the set $H' := \{x~:~\deg(x) \geq \Delta+1\}$ has size $|H'| = O(np)$,
and, since $f' \geq 2/3$, we have $|E[H']| \geq m/2$. A majority of the edges are induced by $H'$, which is a query-able subset.
The algorithm proceeds as follows.
\begin{enumerate}[noitemsep]
	\item It finds quasi-regular, query-able subsets $H'_i, H'_j$ of $H'$ such that $|E(H'_i:H'_j)| = \Omega(m/\log^2 n)$.
	\item It estimates $m_{ij} := |E(H'_i:H'_j)|$.
    \item Using $m_{ij}$, it estimates $m$, and thereby, estimates $\davg$.
\end{enumerate}

\begin{definition}\label{def:regular}
	A subset $A\subseteq V$ is called {\em quasi-$k$-regular} if for some parameter $k \in \NN$,  $\deg(x) \in [k, 2k]$ for all $x\in A$. 
\end{definition}

\noindent
The set $H'$ can be partitioned into quasi-regular sets $H'_1\cup H'_2 \cup \cdots \cup H'_L$ where
\[
	H_i := \{x\in V~:~ 2^{i-1} (\Delta+1) \leq \deg(x) \leq 2^{i} (\Delta+1) \} ~~~\text{and}~~~ L = \ceil{\log_2(n/(\Delta+1))}
\]
Note that given any vertex $x\in V$, its degree tells us whether $x\in H'_i$. So every $H'_i$ is query-able.

\begin{definition}\label{def:dense}
	A pair of subsets $A, B$ is called $\alpha$-dense, if $|E(A,B)| \geq m/\alpha$.
	$A$ and $B$ will either be disjoint or the same set, in which case $E(A,A) = E[A]$.
\end{definition}

\begin{lemma}\label{lem:recogij}
	There exists $i,j \in [L]$ such that $H'_i, H'_j$ are $\log^2 n$-dense. Furthermore, there is a randomized
	algorithm using $\RE$ and $\DEG$ oracles that runs in $O(\log^3 n)$ queries and finds an $i$ and $j$ such that with $1-o(1)$ probability $H'_i, H'_j$ are $\frac{\log^2 n}{16}$-dense. 
\end{lemma}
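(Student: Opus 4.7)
The plan is to prove this in two parts: existence via a straightforward counting/pigeonhole argument, and the algorithmic statement via uniform edge sampling followed by Chernoff concentration.

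\textbf{Existence.} Since the $H'_i$ partition $H'$, every edge $e \in E[H']$ belongs to a unique $E(H'_i, H'_j)$ with $i \leq j$ (writing $E(H'_i,H'_i) := E[H'_i]$). The number of such pairs is at most $L(L{+}1)/2 = O(\log^2 n)$, using $L \leq \log_2 n + 1$. The running hypothesis of \Cref{sec:3.1} gives $f' \geq 2/3$, so \Cref{clm:chern}(d) yields $|E[H']| \geq m/2$ whp. Averaging over bin-pairs then produces some $(i,j)$ with $|E(H'_i, H'_j)| \geq (m/2)/(L(L{+}1)/2) = \Omega(m/\log^2 n)$, i.e.\ $O(\log^2 n)$-dense.

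\textbf{Algorithm.} Sample $t = C \log^3 n$ edges $e_1, \ldots, e_t$ independently and uniformly via $\RE$, for a sufficiently large constant $C$. For each $e_\ell = (u_\ell, v_\ell)$, spend two $\DEG$ queries to learn $\deg(u_\ell), \deg(v_\ell)$; this identifies the unique (unordered) bin-pair $(i_\ell, j_\ell)$ with $u_\ell \in H'_{i_\ell}, v_\ell \in H'_{j_\ell}$, or else detects $e_\ell \notin E[H']$. For each pair $(i,j)$, let $X_{ij}$ count the samples hitting $E(H'_i, H'_j)$, and output $(\hat\imath, \hat\jmath) := \arg\max_{(i,j)} X_{ij}$. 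The query cost is $3t = O(\log^3 n)$.

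\textbf{Concentration.} Setting $p_{ij} := |E(H'_i, H'_j)|/m$, we have $X_{ij} \sim \Bin(t, p_{ij})$. Pick a threshold $\tau$ that sits strictly between the ``existence'' expected count $t \cdot \Theta(1/\log^2 n)$ and the ``bad'' expected count $t \cdot (16/\log^2 n)$ corresponding to a pair that is \emph{not} $(\log^2 n/16)$-dense. Apply multiplicative Chernoff (\Cref{fact:chern}(a),(b)) to each pair: for the pair guaranteed by existence, $\Exp[X_{ij}] = \Omega(\log n \cdot C)$, so $\Pr[X_{ij} < \tau] \leq n^{-\Omega(C)}$; for any bad pair, $\Exp[X_{ij}] \ll \tau$, so $\Pr[X_{ij} \geq \tau] \leq n^{-\Omega(C)}$. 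Choosing $C$ large enough and union-bounding over the $O(\log^2 n)$ pairs, whp every bad pair has $X_{ij} < \tau$ while the existence pair has $X_{ij} \geq \tau$, forcing $(\hat\imath, \hat\jmath)$ to be $(\log^2 n/16)$-dense.

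\textbf{Main obstacle.} The proof is mostly routine; the only subtlety is balancing the constants. The pigeonhole gives a pair of density $\Omega(1/\log^2 n)$, while the output is required to meet a density floor that is a constant factor stronger. Absorbing this constant gap requires choosing $C$ and $\tau$ carefully so that the Chernoff failure probabilities survive the union bound over all $O(\log^2 n)$ bin-pairs — standard, but the place where all the slack is consumed.
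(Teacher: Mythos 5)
Your proof is correct and takes essentially the same approach as the paper: existence by pigeonhole over the $O(\log^2 n)$ bin-pairs using $|E[H']| \geq m/2$, and the algorithm by sampling $\Theta(\log^3 n)$ edges via $\RE$, classifying each into a bin-pair via $\DEG$, and applying Chernoff plus a union bound. The only cosmetic difference is that you output the argmax count while the paper returns any pair whose count exceeds a fixed threshold $\log^2 n/4$, but the concentration argument is identical.
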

\begin{proof}
    The existence of $i$ and $j$ simply follows from $|E(H',H')| \geq m/2$ and the fact there are $\leq \frac{\log^2 n}{2}$-pairs.
	Consider an that algorithm samples $\ceil{\log^3 n}$ edges uniformly at random and sets
	$n_{ij}$ to be the number of such edges with one endpoint in $H'_i$ and the other in $H'_j$ (recall, these sets are query-able).
	It returns any $(i,j)$ with $n_{ij} \geq \frac{\log^2 n}{4}$. If $(i,j)$ were indeed $\log^2 n$-dense, then $\Exp[n_{ij}] \geq \log^2 n$, 
	and so, by Chernoff bounds, with $1-o(1)$ probability $n_{ij} \geq \log^2 n/4$. Similarly, if $E(H'_i,H'_j) < \frac{m}{16\log^2 n}$, 
	we would have $\Exp[n_{ij}] < \frac{\log^2 n}{8}$, and with $1-o(1)$ probability we won't output such a pair.
\end{proof}
\noindent
Henceforth, assume we know $i$ and $j$ such that $(H'_i, H'_j)$ form a $O(\log^2 n)$-dense pair of quasi-regular sets. 
In the next few lemmas, for simplicity, we consider arbitrary regular dense sets $(A,B)$.
At the very end, we will apply these lemmas on $H'_i$ and $H'_j$.

\begin{lemma}\label{lem:rand-v-subset}
	Let $A\subseteq V$ be a query-able quasi-$k$-regular subset with $|E(A)| \geq m/\alpha$. There is a randomized algorithm
	that returns a uar $x\in A$ 
	in $O(\alpha)$ expected queries.
\end{lemma}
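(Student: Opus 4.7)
The plan is a standard rejection-sampling argument that uses $\RE$ to produce a random vertex biased by degree, and then flattens that bias using the quasi-regularity of $A$. Concretely, I would run the following loop: sample an edge $e=(u,v)$ via $\RE()$, pick one of its two endpoints uniformly at random, call it $x$; test whether $x\in A$ (which costs $O(1)$ queries since $A$ is query-able); if $x\notin A$ discard and restart; if $x\in A$, query $\deg(x)$ and accept $x$ with probability $k/\deg(x)$ (which lies in $[1/2,1]$ by quasi-$k$-regularity), otherwise restart. Return the first accepted vertex.

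To see the output is uniform on $A$, note that picking a uniform endpoint of a uniform edge produces vertex $x$ with probability $\deg(x)/(2m)$. Hence, for any $x\in A$,
\[
\Pr[\text{output } x \text{ in one trial}] \;=\; \frac{\deg(x)}{2m}\cdot\frac{k}{\deg(x)} \;=\; \frac{k}{2m},
\]
which does not depend on $x$; conditioning on acceptance therefore gives the uniform distribution on $A$.

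For the query bound, I would lower bound the per-trial success probability
\[
q \;=\; \sum_{x\in A}\frac{k}{2m} \;=\; \frac{k\,|A|}{2m}.
\]
Quasi-$k$-regularity gives $\deg(A)=\sum_{x\in A}\deg(x)\le 2k|A|$, and by definition of $E(A)$ every edge in $E(A)$ contributes at least $1$ to $\deg(A)$, so $2k|A|\ge\deg(A)\ge |E(A)|\ge m/\alpha$. Plugging in, $q\ge 1/(4\alpha)$. Each trial uses one $\RE$ call, one membership test for $A$, and at most one $\DEG$ query, i.e.\ $O(1)$ queries. Hence the expected total query complexity is $O(1/q)=O(\alpha)$, as required.

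There is no real technical obstacle here; the only subtle point is matching the acceptance probability $k/\deg(x)$ to the quasi-regular window $[k,2k]$ so that (i) acceptance is a valid probability and bounded below by a constant, and (ii) after rejection the output distribution is exactly uniform on $A$. The bound $\deg(A)\ge |E(A)|$ — which uses only the definition of $E(A)$ as edges touching $A$ — is what converts the edge-set density assumption $|E(A)|\ge m/\alpha$ into the vertex-set lower bound $k|A|\gtrsim m/\alpha$ that drives the analysis.
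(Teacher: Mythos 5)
Your proposal is correct and follows essentially the same rejection-sampling argument as the paper's \textsc{RandVertSamp} (Algorithm~\ref{alg:rv-subset}): sample a degree-biased vertex via $\RE$, filter to $A$, and flatten the bias by accepting with probability $k/\deg(x)$, with the success probability lower-bounded by exactly the chain $2k|A| \ge \deg(A) \ge |E(A)| \ge m/\alpha$ used in the paper. The only cosmetic difference is that the paper first tests $e\in E(A)$ before picking an endpoint, while you go straight to the endpoint-in-$A$ test; these produce the same output distribution and the same $O(\alpha)$ expected cost.
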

\begin{proof}
	This is by rejection sampling; we state the algorithm below.
	\begin{algorithm}
		\caption{{\sc RandVertSamp($A, k$)}}\label{alg:rv-subset}
		\begin{algorithmic}[1]
			\LineComment{Assume access to $\RE$, and $\DEG$.}
			\LineComment{Assume $A\subseteq V$ is a query-able quasi-$k$-regular subset with $|E(A)| \geq m/\alpha$}
			\LineComment{Returns $a\in A$ uar in expected $O(\alpha)$ queries.}
			\Statex
			\State Sample an edge $e$ uar; {\bf reject} and start over if $e\notin E(A)$. \Comment{Uses $\RE$}
			\State Sample $x\in e$ uar; {\bf reject} and start over  if $x\notin A$. \Comment{Uses $\DEG$}
			\State Return $x$ with probability $\frac{k}{\deg(x)}$; {\bf reject} and start over otherwise.
		\end{algorithmic}
	\end{algorithm}
	
	\noindent
	By design only $x\in A$ is returned and the probability a particular $x\in A$ is returned is precisely
	$\sum_{e\sim x} \Pr[e ~\text{sampled}] \cdot \Pr[x~\text{sampled}~|~e~\text{sampled}~]\cdot \frac{k}{\deg(x)} = \frac{k}{2m}$, which is independent of $x$.
	Thus, conditioned on not being rejected, the probability is uniform over $A$. The probability of
	not being rejected is $k|A|/2m$ and since $2k|A| \geq \sum_{x\in A}\deg(x) \ge |E(A)| \geq m/\alpha$, we get that the probability of success is $\geq \frac{1}{4\alpha}$.
\end{proof}
\noindent
If $(A,B)$ is a $\alpha$-dense pair of quasi-regular sets, then $E(A,B) \geq m/\alpha$ implies $E(A) \geq m/\alpha$. The above lemma gives us
the power to uniformly sample from $A$ and $B$. We can now estimate the ``densities'' of $A$ and $B$.

\begin{lemma}\label{lem:estrhos}
	Let $A, B\subseteq V$ be an $\alpha$-dense pair of subsets where $A$ is quasi-$k$-regular and $B$ is quasi-$\ell$-regular. 
	Let $\rho_A := \frac{m(A,B)}{|A|}$ and $\rho_B := \frac{m(A,B)}{|B|}$ 
	be the ``densities'' of $A$ and $B$. There exists a randomized algorithm which with $> 23/25$ probability returns an $(1\pm \eps)$-estimate of $\rho_A$ and $\rho_B$
	and runs in expected $O(\alpha^2/\eps^2)$ queries.
\end{lemma}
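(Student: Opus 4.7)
The plan is to estimate $\rho_A$ (and, by symmetry, $\rho_B$) via a direct sampling estimator that leverages \Cref{lem:rand-v-subset} for uniform sampling from $A$. The algorithm will repeat the following $T = \Theta(\alpha/\eps^2)$ times: sample $x \in A$ uar using the rejection sampler from \Cref{lem:rand-v-subset} (expected cost $O(\alpha)$), query $\deg(x)$, sample a uniform neighbor $y$ of $x$ via the $\RN$ oracle, check whether $y \in B$ (a single query since $B$ is query-able), and record $Z_i := \deg(x)\cdot \mathbf{1}[y \in B]$. The output is $\bar Z := \tfrac{1}{T}\sum_i Z_i$.

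A direct calculation gives
\[
\Exp[Z_i] \;=\; \frac{1}{|A|}\sum_{x\in A}\deg(x)\cdot \frac{|\{z \in B : z \sim x\}|}{\deg(x)} \;=\; \frac{m(A,B)}{|A|} \;=\; \rho_A,
\]
so $\bar Z$ is unbiased. To get $(1\pm \eps)$-concentration via Chebyshev, I need to control $\Var[Z_i]/\Exp[Z_i]^2$. Since $A$ is quasi-$k$-regular we have $Z_i \leq 2k$, and hence $\Var[Z_i] \leq \Exp[Z_i^2] \leq 2k\, \Exp[Z_i] = 2k \rho_A$.

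The key step is a lower bound on $\rho_A$ combining quasi-regularity with density. Every vertex in $A$ has $\deg(x) \geq k$, so $k|A| \leq \sum_{x\in A}\deg(x) \leq 2m$, giving $|A| \leq 2m/k$. Combined with $m(A,B) \geq m/\alpha$ from $\alpha$-density, this gives $\rho_A \geq k/(2\alpha)$. Plugging in, $\Var[Z_i]/\Exp[Z_i]^2 \leq 2k/\rho_A \leq 4\alpha$, so Chebyshev implies $T = O(\alpha/\eps^2)$ samples suffice for a $(1\pm\eps)$-approximation with any desired constant success probability. The same procedure with the roles of $A, B$ swapped yields an estimate of $\rho_B$, and a union bound gives both estimates simultaneously with probability $> 23/25$ for a suitable choice of hidden constants. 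Since each round's expected query cost is dominated by the $O(\alpha)$ cost of the rejection sampler (the $\RN$ query and the membership check in $B$ are $O(1)$), the total expected query complexity is $O(\alpha^2/\eps^2)$ by linearity of expectation.

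The main obstacle is really the combinatorial lower bound $\rho_A \geq k/(2\alpha)$, which crucially uses \emph{both} hypotheses: without quasi-regularity we cannot upper bound $|A|$ in terms of $m$, and without $\alpha$-density we cannot lower bound $m(A,B)$ in terms of $m$. A minor subtlety is that \Cref{lem:rand-v-subset} gives only expected (not worst-case) query complexity, but since the lemma itself asks only for an expected bound, linearity of expectation suffices without further truncation.
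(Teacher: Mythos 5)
Your proof is correct and essentially identical to the paper's. You use the same estimator (sample $a \in A$ uniformly via the rejection sampler of \Cref{lem:rand-v-subset}, sample a uniform neighbor $y$, output $\deg(a)\cdot\mathbf{1}[y\in B]$), the same $O(\alpha/\eps^2)$ repetition count, and the same variance bound $\Var[Z_i] \leq 2k\rho_A \leq 4\alpha\rho_A^2$ derived from quasi-regularity and $\alpha$-density; your intermediate bound $\rho_A \geq k/(2\alpha)$ is just an algebraic rearrangement of the paper's chain $k \leq 2m/|A|$ and $\rho_A \geq m/(\alpha|A|)$.
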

\begin{proof}
	We describe the algorithm for $\rho_A$ and the algorithm for $\rho_B$ is symmetric. 
The idea is to simply sample a random vertex from $A$ and then use it to estimate the number of neighbors in $B$.

	\begin{algorithm}[h!]
	\caption{{\sc EstimateDensity($A, B, k, \ell, \eps \in (0,1)$)}}\label{alg:est-dens}
	\begin{algorithmic}[1]
		\LineComment{Assume access to $\RE$, $\RN$, and $\DEG$.}
		\LineComment{$(A,B)$ is assumed to be $\alpha$-dense; $A$ is quasi-$k$-regular and $B$ is quasi-$\ell$-regular.}
		\LineComment{Returns $\widehat{\rho}_A$, $\widehat{\rho}_B$ estimates of $\frac{m(A,B)}{|A|}$ and $\frac{m(A,B)}{|B|}$}
		\LineComment{Query Complexity: $O(\alpha^2/\eps^2)$}
		\Statex
		\State $t \eq \ceil{\frac{100\alpha}{\eps^2}}$
		\For{$i\in \{1,2,\ldots, t\}$}:
			\State Sample $a\in A$ uar using \hyperref[alg:rv-subset]{{\sc RandVertSamp}($A,k$)}. \label{alg:estdense:samp} \Comment{Takes $O(\alpha)$ queries}
			\State Sample $y \sim a$ uar. \Comment{Uses $\RN$}
			\If{$y\in B$}:
				\State $Z_i \eq \deg(a)$
			\Else:
				\State $Z_i \eq 0$
			\EndIf
		\EndFor
		\State \Return $\widehat{\rho}_A \eq \frac{1}{t} \sum_{i=1}^t Z_i$ \Comment{The algorithm also returns $\widehat{\rho_B}$ in an analogous manner}
	\end{algorithmic}
\end{algorithm}

\noindent
For $a\in A$, let $\deg(a, B)$ denote the number of neighbors of $a$ in $B$. Note that $\sum_{a\in A} \deg(a,B) = m(A,B)$.
Note that for every $i\in [k]$, $Z_i$ is an unbiased estimate of $\rho_A$,
\[
\Exp[Z_i] = \sum_{a\in A} \frac{1}{|A|} \cdot \frac{\deg(a,B)}{\deg(a)} \cdot \deg(a) = \frac{m(A,B)}{|A|} = \rho_A
\]
with variance
\[
\Var[Z_i] \leq \Exp[Z^2_i] = \sum_{a\in A} \frac{1}{|A|} \cdot \frac{\deg(a,B)}{\deg(a)} \cdot \deg^2(a) = \frac{1}{|A|}\sum_{a\in A} \deg(a,B)\deg(a)	
\]
Since $\deg(a) \leq 2k$, we get the variance can be upper-bounded by
$
	\Var[Z_i] \leq \frac{2k}{|A|}\cdot \sum_{a\in A} \deg(a,B) = 2k\rho_A$.
Furthermore, $k|A| \leq \sum_{a\in A} \deg(a) \leq 2m$ implies $k\leq 2m/|A|$. Finally, using the fact that $m(A,B) \geq m/\alpha$, 
we get that $\rho_A \geq \frac{m}{\alpha|A|}$. All this can be plugged above to give $\Var[Z_i] \leq 4\alpha \rho^2_A$.
Since we take a average of $\geq \frac{100\alpha}{\eps^2}$ samples, the variance of $\widehat{\rho}_A$ is $\leq \frac{\eps^2\rho^2_A}{25}$ while $\Exp[\widehat{\rho}_A] =\rho_A$.
Chebyshev now gives that $\Pr[\widehat{\rho_A} \notin (1\pm \eps)\rho_A] \leq \frac{1}{25}$.
The query complexity of the algorithm in each for-loop is dominated by~\Cref{alg:estdense:samp} which takes $O(\alpha)$ queries by~\Cref{lem:rand-v-subset}. Since there are $O(\alpha/\eps^2)$ loops, the lemma follows.
\end{proof}

\def\abort{\mathbf{abort}}


The most challenging part of the algorithm and analysis are in the next lemma.

\begin{lemma}\label{lem:estimate-mab}
	Let $A, B\subseteq V$ be an $\alpha$-dense pair of subsets and say $A$ is quasi-$k$-regular and $B$ is quasi-$\ell$-regular. Let $\mu := \frac{m(A,B)}{|A||B|}$.
	There is a randomized algorithm with $\RV$, $\RN$, $\DEG$, $\RE$ and either $\ADD$ or the weaker $\PAIR$ oracle access,  
	which takes $\eps \in (0,1)$ and $s\in \NN$ as parameters and either outputs $\abort$ or $\widehat{m}(A,B)$ with the following guarantees.
	\begin{enumerate}[label=(\alph*), noitemsep]
		\item The probability of $\abort$ is at least $1 - s^2\mu$.
		\item If $s^2\mu \!\geq\!\!0.1$ and algorithm doesn't $\abort$, then wp $>\!\frac{4}{5}$ it returns $\widehat{m}(A,B)\!\! \in \!\!(1\pm 3\eps)m(A,B)$.
		\item If $s^2\mu \geq 2$, the probability algorithm returns $\abort$ is $< \frac{C\alpha}{s}$.
	\end{enumerate}
\noindent
	The query complexity of the algorithm is $O(\alpha^2s/\eps^2)$ with $\ADD$ and $O(\alpha^2s^2/\eps^2)$ with $\PAIR$.
\end{lemma}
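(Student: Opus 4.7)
My plan is to estimate $m(A,B)$ via a bipartite ``mini-graph'' estimator: draw $s$ iid uar samples $X = (x_1,\dots,x_s)$ from $A$ and $Y = (y_1,\dots,y_s)$ from $B$, compute $Z := m(X,Y)$, and use the identity $\rho_A \rho_B = \mu \cdot m(A,B)$ together with $\Exp[Z] = s^2\mu$ to back out the estimate $\widehat{m}(A,B) = \widehat{\rho}_A \widehat{\rho}_B s^2 / Z$. The algorithm has three pieces: \textbf{(i)} call \Cref{lem:estrhos} to obtain $(1\pm\eps)$-accurate $\widehat{\rho}_A, \widehat{\rho}_B$ at cost $O(\alpha^2/\eps^2)$; \textbf{(ii)} sample $X, Y$ via \Cref{lem:rand-v-subset} at cost $O(\alpha s)$; \textbf{(iii)} compute $Z$ either as $\ADD(X \cup Y) - \ADD(X) - \ADD(Y)$ at cost $O(s)$ in the advanced model, or by $s^2$ $\PAIR$ queries in the standard model. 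If $Z = 0$ output $\abort$, otherwise output $\widehat{m}(A,B)$. Summing gives the stated query complexities.

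The heart of the analysis will be a second-moment computation for $Z = \sum_{i,j \in [s]} Z_{ij}$ with $Z_{ij} := \mathbf{1}[(x_i, y_j) \in E]$. Each indicator has mean $\mu$, so $\Exp[Z] = s^2\mu$. Expanding $\Exp[Z^2] = \sum_{i,j,i',j'} \Exp[Z_{ij} Z_{i'j'}]$ and splitting on whether $i = i'$ and whether $j = j'$ gives four kinds of contributions: the $s^2$ diagonal terms contribute $s^2 \mu$; the $s^2(s-1)$ terms with matched first index contribute $s^2(s-1) q_A$, where $q_A := \Exp_{a \in A}[(\deg(a,B)/|B|)^2]$; symmetrically the matched-second-index terms contribute $s^2(s-1) q_B$; and the $s^2(s-1)^2$ fully-disjoint terms each contribute $\mu^2$ by independence. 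To bound $q_A$, I will use $\deg(a,B) \leq \deg(a) \leq 2k$ to get $q_A \leq (2k/|B|)\Exp_a[\deg(a,B)/|B|] = 2k\mu/|B|$; then $k \leq 2m/|A|$ (since $k|A|\leq\sum_{a\in A}\deg(a)\leq 2m$) together with $m \leq \alpha \cdot m(A,B)$ gives $k \leq 2\alpha \rho_A$; and finally $|B|\mu = \rho_A$ yields $q_A \leq 4\alpha \mu^2$. The symmetric bound $q_B \leq 4\alpha \mu^2$ follows. Substituting back gives $\Var[Z] \leq s^2 \mu + 8\alpha s^3 \mu^2$, equivalently $\Var[Z]/\Exp[Z]^2 \leq \frac{1}{s^2\mu} + \frac{8\alpha}{s}$.

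The three guarantees then follow by standard moment inequalities. For (a), Markov gives $\Pr[Z \geq 1] \leq \Exp[Z] = s^2\mu$, so $\Pr[\abort] = \Pr[Z=0] \geq 1 - s^2\mu$. For (c), under $s^2\mu \geq 2$, Chebyshev yields $\Pr[Z=0] \leq \Pr[|Z-\Exp[Z]|\geq \Exp[Z]] \leq \Var[Z]/\Exp[Z]^2 \leq \frac{1}{s^2\mu} + \frac{8\alpha}{s}$; the second term is $O(\alpha/s)$, while the first is at most $1/2$ and is absorbed into $C\alpha/s$ by choosing $C$ large (and noting the stated bound is trivial when $s \leq C\alpha$). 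For (b), Chebyshev on deviations of $Z$ gives $\Pr[|Z - s^2\mu|\geq \eps s^2\mu] \leq \frac{1}{\eps^2}\bigl(\frac{1}{s^2\mu}+\frac{8\alpha}{s}\bigr)$, which is $<1/5$ in the parameter regime where the lemma is invoked (i.e.\ once $s^2\mu$ and $s/\alpha$ are large enough). Combined with $\widehat{\rho}_A \in (1\pm\eps)\rho_A$, $\widehat{\rho}_B \in (1\pm\eps)\rho_B$, and the identity $\rho_A \rho_B/\mu = m(A,B)$, a short error-propagation computation gives $\widehat{m}(A,B) \in (1\pm 3\eps)\,m(A,B)$ with probability $>4/5$.

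The main obstacle is precisely the variance calculation and in particular the coupled bound $q_A, q_B \leq O(\alpha) \mu^2$: this is the step where the quasi-regularity of $A$ and $B$ (capping $\deg(a) \leq 2k$, $\deg(b) \leq 2\ell$) and the $\alpha$-density of $(A,B)$ (forcing $k \leq 2\alpha \rho_A$ and $\ell \leq 2\alpha \rho_B$) are \emph{both} essential. Without these two structural hypotheses, the third-moment contribution $s^3(q_A+q_B)$ could dwarf $\Exp[Z]^2 = s^4\mu^2$ and destroy concentration; with them, the concentration threshold is exactly $s = \Theta(\sqrt{|A||B|/m(A,B)}) = \Theta(1/\sqrt{\mu})$, matching the intuition sketched in \Sec{ideas}.
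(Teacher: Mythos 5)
Your overall scheme matches the paper's: sample small random sets $X\subseteq A$, $Y\subseteq B$, set $Z = m(X,Y)$, abort on $Z=0$, and otherwise invert via $\widehat m(A,B)=s^2\widehat\rho_A\widehat\rho_B/Z$ using the identity $\rho_A\rho_B/\mu = m(A,B)$. Your second-moment computation is also essentially the paper's: they get $\Var[Z] \lesssim \Exp[Z] + \frac{8\alpha}{s}(\Exp[Z])^2$, and your $q_A, q_B\le 4\alpha\mu^2$ bound is the same bound they prove for $\sum_a\deg^2(a,B)/|B|$ and $\sum_b\deg^2(b,A)/|A|$. The iid-vs-without-replacement difference is harmless.

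The real gap is that you omit the repetition/averaging step, and part (b) fails without it. Your algorithm computes a \emph{single} draw of $Z$ and divides by it. Your own Chebyshev bound gives
\[
\Pr\big[\,|Z - s^2\mu| \ge \eps\, s^2\mu\,\big] \;\le\; \frac{1}{\eps^2}\left(\frac{1}{s^2\mu} + \frac{8\alpha}{s}\right),
\]
and the lemma's hypothesis is only $s^2\mu\ge 0.1$; at this threshold the right side is at least $10/\eps^2$, which is nowhere near $1/5$. You hedge with ``once $s^2\mu$ and $s/\alpha$ are large enough,'' but the lemma does not get to assume that. The paper's algorithm fixes this by averaging $t = \Theta(\alpha/\eps^2)$ independent copies $Z^{(1)},\dots,Z^{(t)}$ of $Z$ and using $Z_s=\frac1t\sum_i Z^{(i)}$; with $\Var[Z^{(i)}]\le O(\alpha)(\Exp[Z^{(i)}])^2$ (which \emph{does} hold whenever $s^2\mu\ge 0.1$), the averaged variance drops to $O(\eps^2)(\Exp[Z_s])^2$, giving concentration. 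This averaging is also exactly where the $/\eps^2$ in the advertised query complexity comes from: $t\cdot O(\alpha s) = O(\alpha^2 s/\eps^2)$ with $\ADD$ (resp.\ $t\cdot O(\alpha s + s^2)=O(\alpha^2 s^2/\eps^2)$ with $\PAIR$). Your pieces sum to $O(\alpha^2/\eps^2 + \alpha s)$, which is smaller than stated — a symptom of the missing loop, not a free improvement. Your part (c) also does not quite close: $\frac{1}{s^2\mu}\le \frac12$ is absorbed into $C\alpha/s$ only when $s\le 2C\alpha$, while the lemma must hold for all $s$; to get $O(\alpha/s)$ one needs to exploit, as the paper does, that for $s^2\mu\ge 2$ the ``diagonal'' part of the variance bound, $\Exp[Z](1-\Exp[Z]/2)$, is nonpositive and hence can be dropped entirely, rather than keeping a constant contribution. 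Fix: insert the $t=\Theta(\alpha/\eps^2)$-fold averaging (as in the paper's \textsc{\EstNumEdges}) before inverting, and redo (c) using the sharper form of the variance bound.
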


%
%
\begin{proof}[Proof of~\Cref{lem:estimate-mab}.]
The algorithm samples $s$ vertices from both $A$ and $B$ and uses the $\ADD$ query to find the number of edges between these subsets in $O(s)$ queries.
If it finds no edges, then it aborts. Otherwise it takes an average of $O(\alpha/\eps^2)$ such samples to reduce variance.

\begin{algorithm}[ht!]
	\caption{{\sc \EstNumEdges($A, B, k , \ell, s\in \NN, \eps \in (0,1)$)}}\label{alg:est-mab}
	\begin{algorithmic}[1]
		\LineComment{Assume access to $\RN$, $\RE$, $\DEG$ and either $\ADD$ or weaker $\PAIR$.}
		\LineComment{$A,B$ is an $\alpha$-dense pair of quasi-regular sets.}
		\LineComment{The algorithm either outputs $\abort$ or an estimate $\widehat{m}(A,B)$ of $m(A,B)$.}
		\LineComment{Query Complexity: $O(\alpha^2 s/\eps^2)$ with $\ADD$ and $O(\alpha^2s^2/\eps^2)$ with $\PAIR$.}	
		\Statex
		
		\State Sample uniform $X\subseteq A$ and $Y\subseteq B$ with $|X| = |Y| = s$. 
		\Comment{Use $O(s)$ runs of~\hyperref[alg:rv-subset]{\sc RandVertSamp} on $(A,k)$ and $(B,\ell)$. Query Complexity: $O(s\alpha)$.}
		\State Query $m(X,Y)$ using $\ADD$ or $\PAIR$ and $\abort$ if answer is $0$. Otherwise proceed. \LineComment{Query Complexity of previous line: $O(s)$ with $\ADD$ and $O(s^2)$ with $\PAIR$.}
		\LineComment{Otherwise, repeat the above $O(\alpha/\eps^2)$ times and average.}
		\State $t \eq \ceil{\frac{180\alpha}{\eps^2}}$
		\For{$i\in \{1,2,\ldots, t\}$}:
		\State Sample uniform $X\subseteq A$ and $Y\subseteq B$ with $|X| = |Y| = s$.  \Comment{Query Complexity: $O(\alpha s)$}
	
		\State Set $Z^{(i)} \eq m(X,Y)$ \Comment{Use $\ADD$ in $O(s)$ queries or $\PAIR$ in $O(s^2)$ queries.}
		\EndFor
		\State Set $Z_s \eq \frac{1}{t}\sum_{i=1}^t Z^{(i)}$
		\State $(\widehat{\rho}_A, \widehat{\rho}_B) \eq $ \hyperref[alg:est-dens]{{\sc EstimateDensity}}($A,B,k,\ell,\eps$) \Comment{Expected queries: $O(\alpha^2/\eps^2)$.}
		\State \Return $\widehat{m}(A,B) \eq \frac{s^2\widehat{\rho}_A\widehat{\rho}_B}{Z_s}$
	\end{algorithmic}
\end{algorithm}
\noindent
The expected query complexity can be seen to be $O(\alpha^2 s/\eps^2)$ with $\ADD$ queries
    and $O(\alpha^2s^2/\eps^2)$ with $\PAIR$ queries.  We now establish parts (a), (b), and (c).
	For $s\in \NN$ and let $Z = m(X,Y)$ where $(X,Y)$ are as described above.
%
	\begin{claim}\label{clm:exp}
		$\Exp[Z] = \frac{s^2}{|A||B|}\cdot m(A,B) = s^2\mu = \frac{s^2 \rho_A \rho_B}{m(A,B)}$
	\end{claim}
	\begin{proof}
		For every $e\in E(A,B)$ let $X_e$ denote the indicator that both endpoints are sampled.
		The proof follows by noting that $\Exp[Z] = \Exp[\sum_{e\in E(A,B)} X_e]$ and $\Exp[X_e] = \frac{s}{|A|}\cdot \frac{s}{|B|}$.
	\end{proof}
	
	\noindent
	Markov's inequality (\Cref{fact:mkov-cheb}) implies (a) of the lemma. We now establish (b) and (c) via a variance calculation.
	
	\begin{claim}\label{clm:var-calc}
		$
			\Var[Z] < s^2\mu (1 - s^2\mu/2) ~	+ \frac{8\alpha}{s} \cdot \left(s^2\mu\right)^2
		$
	\end{claim}
	\begin{proof}
	We first establish that
	\begin{equation}\label{eq:var}
		\Var[Z] < \Exp[Z](1 - \Exp[Z]/2) ~+~ \frac{s^3}{2|A||B|}\cdot \left(\sum_{a\in A} \frac{\deg^2(a,B)}{|B|} + \sum_{b\in B} \frac{\deg^2(b,A)}{|A|}\right) 
	\end{equation}
	\noindent
		As is standard, we start with \[\Exp[Z^2] = \Exp\left[\left(\sum_{e\in E(A,B)} X_e\right)^2\right] = \underbrace{\sum_{e\in E(A,B)} \Exp[X_e]}_{=\Exp[Z_s]} + \sum_{e\neq f \in E(A,B)} \Exp[X_eX_f]\]
		Note that if $e$ and $f$ don't share endpoints then $\Exp[X_eX_f] = \left(\frac{s^2}{|A||B|}\right)^2$. If they do share an endpoint $a\in A$, 
		then $\Exp[X_eX_f] = \frac{s^3}{|A|^2|B|}$, and if they share an endpoint $b\in B$, then $\Exp[X_eX_f] = \frac{s^3}{|A||B|^2}$.
		Crudely, $\sum_{e,f} \left(\frac{s^2}{|A||B|}\right)^2 \leq \binom{m(A,B)}{2} \cdot \left(\frac{s^2}{|A||B|}\right)^2 < \frac{(\Exp[Z_s])^2}{2}$. For the pair of edges that collide, 
		we can upper bound their contribution as
		\[
			\sum_{a\in A} \binom{\deg(a,B)}{2} \cdot  \frac{s^3}{|A|^2|B|} + \sum_{b\in B} \binom{\deg(b,A)}{2} \cdot  \frac{s^3}{|A||B|^2} < \frac{s^3}{2|A||B|}\left(\sum_{a\in A} \frac{\deg^2(a,B)}{|B|} + \sum_{b\in B} \frac{\deg^2(b,A)}{|A|}\right)
		\]
		which establishes \eqref{eq:var} after subtracting $(\Exp[Z])^2$.
		
		\noindent 
		Next, we use quasi-regularity and $\alpha$-density, to get 
		\[
		\sum_{a\in A} \frac{\deg(a,B)^2}{|B|} \le \sum_{a\in A} \frac{\deg(a,B)\deg(a)}{|B|} \leq \frac{2k\cdot m(A,B)}{|B|} 
		\]
		where the last inequaity used the upper bound $\deg(a) \leq 2k$ and $\sum_{a\in B} \deg(a,B) = m(A,B)$. 
		Using the lower bound, we get $k|A| \leq \sum_{a\in A} \deg(a) \leq 2m$ or $k\leq \frac{2m}{|A|} \leq \frac{2\alpha m(A,B)}{|A|}$, where
		we used the density of $(A,B)$ in the last inequality. 
		Plugging above, we get 
		$
		\sum_{a\in A} \frac{\deg(a,B)^2}{|B|} \le 4\alpha \frac{m^2(A,B)}{|A||B|}.
		$
		The same symmetric upper bound can be proved for $\sum_{b\in B} \frac{\deg(b,A)^2}{|A|}$.
		Substituting these in~\eqref{eq:var}, we get
		\[
			\Var[Z] \leq \Exp[Z] \cdot \left(1 - \frac{\Exp[Z]}{2}\right) + \frac{8\alpha}{s} \cdot \left(\frac{s^2}{|A||B|}\cdot  m(A,B)\right)^2 
		\]
		with~\Cref{clm:exp} completing the proof.
	\end{proof}
	
\noindent
If $s^2\mu \geq 2$, then $\Var[Z] \leq O(\alpha/s) (\Exp[Z])^2$ and therefore by Chebyshev $\Pr[Z = 0] \leq \frac{\Var[Z]}{(\Exp[Z]^2)} = O(\alpha/s)$.
This establishes part (c) of the lemma.

If $s^2\mu \geq 0.1$, we have $\Var[Z] \leq (10 + 8\alpha/s)(\Exp[Z])^2 < 18\alpha (\Exp[Z])^2$ (crudely). This is true for every $Z^{(i)}$ 
and so the variance of $Z_s$ is $\leq \frac{\eps^2}{10}\cdot (\Exp[Z_s])^2$. This shows that wp $\geq 9/10$, we have $Z_s \in (1\pm \eps) s^2\mu$.
 By~\Cref{lem:estrhos}, we that with probability $> 23/25$, $\widehat{\rho}_A \in (1\pm \eps)\rho_A$ and $\widehat{\rho}_B \in (1\pm \eps)\rho_B$.
 Union bounding over all errors, we get that with $> \frac{4}{5}$ probability we have
\[
	Z_s \in (1\pm \eps) \frac{s^2\rho_A\rho_B}{m(A,B)}, ~~\widehat{\rho}_A \in (1\pm \eps)\rho_A,~~\widehat{\rho}_B \in (1\pm \eps)\rho_B
\]
which implies $\widehat{m}(A,B) \in (1 \pm 3\eps)m(A,B)$.
\end{proof}

%
%

\subsection{The full algorithm for estimating average degree}\label{sec:3.2}

\def\guess{g}
\begin{algorithm}[ht!]
	\caption{{\sc \EstAvgDeg($\eps \in (0,1)$)}}\label{alg:est-d-known-n-full-alg}
	\begin{algorithmic}[1]
		\LineComment{Assume access to $\RV$, $\RN$, $\RE$, $\DEG$ and either $\ADD$ or weaker $\PAIR$.}
		\LineComment{$\eps \in (0,1)$ is a parameter; assumes knowledge of $n$.}
		\LineComment{Algorithm outputs estimate $\widehat{\davg}$ of average degree}
		\LineComment{Query Complexity: $O\left(\eps^{-2}\log^5 n\cdot \left(n/{\davg}\right)^{1/4}\right)$ with $\ADD$ and $O\left(\eps^{-2}\log^5 n\cdot \left(n/{\davg}\right)^{1/3}\right)$ with $\PAIR$.}	
		\Statex
		
		\State $\guess \eq n$ \Comment{$\guess$ is current guess of average degree}
		\State $p \eq (\guess/n)^{1/4}$ with $\ADD$ and $p\eq (\guess/n)^{1/3}$ with $\PAIR$. \label{alg:line:six}
		\While{True}:
			\State $(f, f', \Delta) \eq $ \hyperref[alg:esthighdeg]\AlgEstDensHighDeg($p$) \label{alg:line:esthighdeg} \Comment{Query Complexity: $O(\log n/p)$} 	
			\If{$f' < \frac{2}{3}$}: 
				\State \Return $\widehat{\davg} \eq $ \hyperref[alg:harmest]{\sc HarmonicEstimator}($\Delta, p, \eps$)  \label{alg:line:harm-est}\Comment{Query Complexity: $O(\frac{1}{\eps^2 p})$}
			\Else: \label{alg:line:else}
				\State $\alpha \eq \frac{\log^2 n}{16}$; use~\Cref{lem:recogij} to identify $i,j \in [\ceil{\log_2 n}]$ such that $H'_i, H'_j$ are $\alpha$-dense. \label{alg:line:twelve}
				\State Set $s \eq \ceil{\frac{1}{p}\sqrt{\alpha}}$ with $\ADD$; set $s \eq \ceil{\sqrt{\alpha/p}}$ with $\PAIR$
				\State output $\eq$ \hyperref[alg:est-mab]{\sc EstNumEdges}($H'_i, H'_j, 2^i(\Delta+1), 2^j(\Delta+1), s, \eps$) \LineComment{Query Complexity: $O(\frac{\alpha^2s}{\eps^2}) = O(\frac{\alpha^{2.5}}{\eps^2p})$ with $\ADD$; or
				$O(\frac{\alpha^2s^2}{\eps^2}) = O(\frac{\alpha^3}{\eps^2 p})$ with $\PAIR$}\label{alg:line:estnumedge}
				\If{output is $\abort$}:
					\State $\guess \eq \guess/2$; $p \eq (\guess/n)^{1/4}$; {\bf continue} to next while-loop
				\Else: \label{alg:siskin}
					\State Let $\widehat{m}$ be the output.
					\State Estimate $\widehat{\theta} \eq $ \hyperref[lem:ber-subs]{\EstFracEdge}($H'_i, H'_j$) \Comment{Query Complexity: $O(\alpha/\eps^2)$} 
					\State \Return $\widehat{\davg} \eq (\widehat{m}/\widehat{\theta}) \cdot \frac{2}{n}$
				\EndIf
			\EndIf
		\EndWhile
	\end{algorithmic}
\end{algorithm}

\noindent
The full algorithm is described in~\Cref{alg:est-d-known-n-full-alg}. 

\begin{theorem}\label{thm:est-d-known-n-n-by-d}
	\hyperref[alg:est-d-known-n-full-alg]{\sc \EstAvgDeg}($\eps$) returns an $(1\pm \eps)$-approximation to $\davg$ given access to $\RV$, $\RN$, $\RE$, $\DEG$ and either $\PAIR$ or $\ADD$ oracle.
	With $\PAIR$ oracle, it takes $\Ot((n/d)^{1/3})$-queries and with $\ADD$ oracle, it takes $\Ot((n/d)^{1/4})$ queries.
\end{theorem}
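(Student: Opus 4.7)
The plan is to verify three things in sequence: (i) whenever \hyperref[alg:est-d-known-n-full-alg]{\sc \EstAvgDeg}$(\eps)$ outputs an estimate, it is a $(1\pm O(\eps))$-approximation of $\davg$; (ii) the outer loop terminates within $O(\log n)$ iterations; and (iii) the sum of per-iteration query costs is geometric, so the total is dominated by the final round, giving the claimed $\Ot((n/\davg)^{1/4})$ and $\Ot((n/\davg)^{1/3})$ bounds. Rescaling $\eps$ by a small constant upfront absorbs the $O(\eps)$ slack, and since each internal helper succeeds whp per call, boosting each to failure probability $1/\polylog n$ (by median-of-means at a $\log\log n$ multiplicative overhead) and union-bounding over the $O(\log n)$ rounds keeps the overall failure probability $o(1)$.

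For correctness, suppose the algorithm exits through the harmonic-estimator branch at \Cref{alg:line:harm-est}. Then $f' < 2/3$, so by \Cref{clm:chern}(c) we have $|E[H']| < 3m/4$ whp, which via \Cref{clm:obs1} yields $\davg \leq 8\Delta$, and \Cref{lem:harmest} guarantees the returned $\widehat{\davg}$ lies in $(1\pm\eps)\davg$. If instead the algorithm exits past \Cref{alg:siskin}, the pair $(H'_i, H'_j)$ found by \Cref{lem:recogij} is quasi-regular and $\alpha$-dense with $\alpha = O(\log^2 n)$. Conditioned on not aborting, \Cref{lem:estimate-mab}(b) gives $\widehat{m} \in (1\pm 3\eps)\,m(H'_i, H'_j)$, and \Cref{lem:ber-subs} gives $\widehat{\theta} \in (1\pm \eps)\,m(H'_i, H'_j)/m$, so $\widehat{m}/\widehat{\theta}$ is a $(1\pm O(\eps))$-approximation of $m$ and $\widehat{\davg} = 2\widehat{m}/(n\widehat{\theta})$ is a $(1\pm O(\eps))$-approximation of $\davg$.

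For termination, I will show that once $g \leq \davg/16$ the hard-case branch stops aborting. Using \Cref{clm:chern}(b) to bound $|H'_i|,|H'_j| \leq 2np$ whp together with the $\alpha$-density lower bound $m(H'_i,H'_j) \geq m/\alpha$, the key quantity $\mu = m(H'_i,H'_j)/(|H'_i||H'_j|)$ satisfies $\mu \geq m/(4\alpha n^2 p^2) = \davg/(8\alpha n p^2)$. In the $\ADD$ case, $s = \lceil\sqrt{\alpha}/p\rceil$ gives $s^2\mu \geq \davg/(8 n p^4) = \davg/(8g)$ since $p^4 = g/n$; the identical computation with $s = \lceil\sqrt{\alpha/p}\rceil$ and $p^3 = g/n$ yields $s^2\mu \geq \davg/(8g)$ in the $\PAIR$ case. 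Hence once $g \leq \davg/16$ we have $s^2\mu \geq 2$, and \Cref{lem:estimate-mab}(c) makes the abort probability $O(\alpha/s) = o(1)$. Since $g$ starts at $n$ and halves on each abort, the loop terminates within $O(\log n)$ iterations at some $g = \Theta(\davg)$.

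For the query bound, the cost of a hard-case round is dominated by \EstNumEdges, which costs $\Ot(\alpha^{5/2}/(\eps^2 p)) = \Ot((n/g)^{1/4}/\eps^2)$ with $\ADD$ and $\Ot(\alpha^{3}/(\eps^2 p)) = \Ot((n/g)^{1/3}/\eps^2)$ with $\PAIR$ (absorbing $\polylog n$ factors coming from $\alpha$). As $g$ halves, these costs grow by factors $2^{1/4}$ and $2^{1/3}$ respectively, so the geometric sum is dominated by the final round with $g = \Theta(\davg)$, yielding the theorem's bounds. The main obstacle I anticipate is a clean union bound: the analysis simultaneously needs the high-probability events of \Cref{clm:chern}(a,b,c), the success of \Cref{lem:recogij} in identifying a dense pair, the accuracy guarantees of \Cref{lem:estimate-mab} and \Cref{lem:ber-subs}, and the non-abort conclusion from the $s^2\mu \geq 2$ threshold, all across every one of the $O(\log n)$ rounds. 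This is what forces the $\polylog$-boosting step mentioned at the outset, and it must be handled carefully so that the boosted failure probabilities do not themselves push the query cost beyond the target $\Ot$.
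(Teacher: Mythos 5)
Your overall structure mirrors the paper's own proof: argue correctness by branch, argue termination via the $s^2\mu$ lower bound and \Cref{lem:estimate-mab}(c), and argue the total query cost is a geometric sum dominated by the last round. The parameter calculation $s^2\mu \gtrsim \davg/g$ (using \Cref{clm:chern}(b) for $|H'_i|,|H'_j| = O(np)$ and $\alpha$-density) is the right key step, and you get the exponents right for both oracles.

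There is, however, a genuine gap in the correctness argument for the \Cref{alg:siskin} exit branch. You write ``Conditioned on not aborting, \Cref{lem:estimate-mab}(b) gives $\widehat{m} \in (1\pm 3\eps)\,m(H'_i,H'_j)$.'' But part (b) of that lemma only applies under the hypothesis $s^2\mu \geq 0.1$. Nothing guarantees that the round in which the algorithm happens to return satisfies $s^2\mu \geq 0.1$: the abort probability at large $g$ is not $1$, merely $\geq 1-s^2\mu$, so the algorithm can return prematurely at a round where $\mu$ is tiny, and at such a round part (b) gives no guarantee at all. The paper's proof handles this explicitly by a case split: if $s^2\mu < 0.1$, part (a) shows the algorithm returns (rather than aborts) with probability at most $s^2\mu < 0.1$, which is folded into the constant error budget; only if $s^2\mu \geq 0.1$ is part (b) invoked. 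You need this split, and it is not recoverable from the boosting plan you sketch: the event ``did not abort despite $s^2\mu$ being small'' is a single Markov-bounded event from part (a), not a helper subroutine whose success probability you can amplify by median-of-means. The clean fix (which is what the paper does) is to bound each failure mode by a fixed constant, union-bound to a total error like $2/5$, and then apply median-of-means boosting \emph{at the outer level} — repeating the whole algorithm $O(\log(1/\delta))$ times and taking a median. This handles the accidental-return issue automatically and avoids needing a careful per-round union bound, which your per-helper boosting would require.

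One minor point: your bound $\mu \geq m/(4\alpha n^2p^2)$ has a spurious factor of $4$ relative to the paper's $\mu \geq m/(\alpha n^2 p^2)$ (since $|H'_i|+|H'_j|\leq 2np$ gives $|H'_i||H'_j| \leq n^2p^2$ by AM-GM). This is harmless, just a slightly weaker constant in the threshold $g \leq \davg/16$ vs.\ the paper's $\davg/4$.
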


\noindent
Before proving the above theorem, note that along with~\Cref{lem:bt} this proves Part (a) of~\Cref{thm:stan} and~\Cref{thm:advanced}.

\begin{proof}[Proof of~\Cref{thm:est-d-known-n-n-by-d}]
{\em Time.} Let $g^\star$ be the random variable indicating the final value of $g$
and let $p^\star = (g^\star/n)^{1/4}$ (or $(g^\star/n)^{1/3}$ with $\PAIR$). The algorithm stops either at~\Cref{alg:line:harm-est} or because the output at~\Cref{alg:line:estnumedge} is not $\abort$, 
upon which it also runs the ``else-block'' from~\Cref{alg:siskin}. Since the variable $p$ drops geometrically, the total query complexity is dominated by the final iteration.
And so, the expected query complexity of the algorithm is $O(\eps^{-2}\alpha^{2.5}/p^{\star})$ (or $O(\eps^{-2}\alpha^{2.5}/p^{\star})$ with $\PAIR$). 
We now show with probability $1-o(1)$, the final $g^\star$ is $\Omega(\davg)$ and this proves the query complexity for $\ADD$ and $\PAIR$.
To see this, fix a $g$ and corresponding $p=(g/n)^{1/4}$,  and consider a run of~\Cref{alg:line:else}.
Let $\mu := \frac{m(H'_i:H'_j)}{|H'_i||H'_j|}$ and by~\Cref{clm:chern} we may assume that $|H'_i|+|H'_j| \leq 2np$ which implies $|H'_i||H'_j|\leq n^2p^2$.
Therefore, $\mu \geq \frac{m}{\alpha n^2p^2}$ since $(H'_i, H'_j)$ are $\alpha$-dense. Thus, $s^2\mu \geq \frac{m}{n^2p^4} = \frac{\davg}{2g}$ in case of $\ADD$ 
(and $s^2\mu \geq \frac{m}{n^2p^3} = \frac{\davg}{2g}$ in case of $\PAIR$).
By part (c) of~\Cref{lem:estimate-mab}, if $g < \davg/4$, the probability~\Cref{alg:line:estnumedge} returns abort is $O(\alpha^{0.5} p) = o(1)$ (assuming $\davg \leq n/\polylog n$).
And so, this iteration aborts with probability $1-o(1)$ proving $g^\star = \Omega(\davg)$, which proves the query complexity of our algorithm. \smallskip

\noindent{\em Correctness.}
First note that if we ever return using~\Cref{alg:line:harm-est}, then~\Cref{lem:harmest} implies the output $\widehat{\davg} \in (1\pm \eps)\davg$ with $>9/10$ probability. 
Suppose we returned because~\Cref{alg:siskin} didn't abort and let $g^\star$ be the value of $g$ at this point, and let $s,\mu$ as in the analysis of queries complexity in the previous paragraph.
If $s^2\mu < 0.1$, then by part (a) of~\Cref{lem:estimate-mab}, we would've aborted with $\geq 9/10$ probability. If $s^2\mu \geq 0.1$, then part (b) of the same lemma states that 
$\widehat{m}$ is a good estimate of $E(H'_i, H'_j) =: m_{ij}$ with probabilith $> $. 
Finally, $\widehat{\theta} \in (1 \pm \eps)\frac{m_{ij}}{m}$ with $1-o(1)$ probability. And this implies at the end, $\widehat{\davg}$ is in $(1\pm O(\eps))\davg$.
The total error probability can be union-bounded to $\leq \frac{1}{10} + \frac{1}{10} + \frac{1}{5} = \frac{2}{5}$.
\end{proof}

\subsection{When we can only sample edges but not vertices}\label{sec:3.3}
We end this section by noting that if only had the power to sample random edges but not random vertices, but still had the power of sampling random neighbors {\em and}
access to $\ADD$ queries, we can get $\Ot(n^{1/3})$-query algorithms assuming the graph has no isolated vertices. This is a restatement of~\Cref{thm:onlyedges}.

\begin{theorem}\label{thm:est-davg-known-n-no-rv}
	There is a $\Ot(n^{1/3})$ query randomized algorithm which returns an $(1\pm \eps)$-approximation to $\davg$ 
	of a simple undirected graph with no isolated vertices, when given access to $\RE$, $\RN$, $\DEG$ and $\ADD$ oracles.
	We assume the number of vertices is known.
\end{theorem}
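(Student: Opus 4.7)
My plan is to combine the Beretta--\Tetek~estimator (\Cref{lem:bt}) for small $\davg$ with an $\RV$-free adaptation of the machinery of \Sec{3.1} for large $\davg$. The pivotal observation is that the three subroutines of \Sec{3.1}---\textsc{RandVertSamp} (\Cref{lem:rand-v-subset}), \textsc{EstimateDensity} (\Cref{lem:estrhos}), and \textsc{EstNumEdges} (\Cref{lem:estimate-mab})---use only $\RE$, $\RN$, $\DEG$, and $\ADD$ queries, never $\RV$; likewise, the dense-pair finder of \Cref{lem:recogij} and the estimator \textsc{EstFracEdge} (\Cref{lem:ber-subs}) use only $\RE$ and $\DEG$. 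So the entire ``hard case'' branch of \Cref{alg:est-d-known-n-full-alg} already runs without $\RV$, and I plan to reduce the problem to essentially that branch alone.

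First I will run \textsc{\BTEstAvgDeg}$(\eps)$ with a query budget of $O(n^{1/3}/\eps^2)$; if it completes within budget, \Cref{lem:bt} guarantees its output is a $(1\pm\eps)$-approximation to $\davg$ and I return it. Otherwise $\davg = \Omega(n^{1/3})$. In the second phase I partition $V$ into the quasi-regular bins $H_i := \{x \in V : \deg(x) \in [2^{i-1}, 2^i)\}$ for $i=1,\ldots,L=O(\log n)$, each query-able via $\DEG$. Since some pair of bins must induce an $\Omega(1/\log^2 n)$ fraction of all edges, a direct variant of \Cref{lem:recogij} finds an $\alpha$-dense pair $(H_i, H_j)$ with $\alpha = O(\log^2 n)$ using $\widetilde{O}(\log^3 n)$ random edges. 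I then invoke \textsc{EstNumEdges} with the usual doubling schedule on $s$ (start at $s=1$ and double until the call does not $\abort$) to get an estimate $\widehat{m}_{ij}$ of $m(H_i, H_j)$, and run \textsc{EstFracEdge} to estimate $\theta_{ij} := m(H_i, H_j)/m$. The output is $\widehat{\davg} := 2\widehat{m}_{ij}/(n\widehat{\theta}_{ij})$.

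The query analysis closely follows that of \Cref{thm:est-d-known-n-n-by-d}, but with a weaker bound on $|H_i||H_j|$. Setting $\mu := m(H_i,H_j)/(|H_i||H_j|)$, the $\alpha$-density together with the trivial bound $|H_i|,|H_j|\leq n$ gives $\mu \geq (m/\alpha)/n^2 = \davg/(2\alpha n)$. \Cref{lem:estimate-mab}(c) then guarantees that the doubling schedule terminates whp with $s = O(\sqrt{\alpha n/\davg})$, and \Cref{lem:estimate-mab}(b) ensures $\widehat{m}_{ij} \in (1\pm O(\eps))\cdot m(H_i, H_j)$. With $\ADD$ the resulting cost of \textsc{EstNumEdges} is $\widetilde{O}(\alpha^{5/2}\sqrt{n/\davg}/\eps^2) = \widetilde{O}(\sqrt{n/\davg}/\eps^2)$, which is $\widetilde{O}(n^{1/3}/\eps^2)$ whenever $\davg \geq n^{1/3}$; \textsc{EstFracEdge} and the dense-pair search contribute only polylog overhead.

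The main obstacle is the absence of the sharper $|H'_i||H'_j| \leq (pn)^2$ bound that powered \Cref{thm:stan} and \Cref{thm:advanced}: without $\RV$ I cannot estimate the top-$p$-degree threshold $\Delta$ needed to restrict to a heavy set, and without the \textsc{HarmonicEstimator} fallback I cannot exploit low-degree regimes. This forces the coarser $|H_i||H_j| \leq n^2$, which is precisely why the exponent degrades from $1/4$ or $1/5$ to $1/3$. A secondary point is confirming that the doubling procedure for $s$ does not overshoot by more than a constant factor whp; this follows from parts (a) and (c) of \Cref{lem:estimate-mab} exactly as in the proof of \Cref{thm:est-d-known-n-n-by-d}.
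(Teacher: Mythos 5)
Your proposal is correct and follows essentially the same route as the paper: run the Beretta--\Tetek estimator to handle $\davg \lesssim n^{1/3}$, and otherwise drop the $\RV$-dependent heavy-set reduction entirely (equivalently, set $\Delta = 0$ and bin all of $V$), find an $\alpha$-dense quasi-regular pair with $\RE$, and invoke \textsc{EstNumEdges} with a geometric schedule, using the trivial bound $|H_i||H_j| \leq n^2$ to get $\Ot(\eps^{-2}\sqrt{n/\davg})$ in this regime. The paper phrases the schedule as halving a guess $\guess$ (hence increasing $p$ and $s$) and runs \textsc{\BTEstAvgDeg} with $\theta = n^{1/3}$ rather than a raw query budget, but these are cosmetic differences.
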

\begin{proof}
	We first show a $\Ot(\eps^{-2}(n/d)^{1/2})$-query algorithm. To get this,
	run the \hyperref[alg:est-d-known-n-full-alg]{\EstAvgDeg}$(\eps)$ algorithm with the following changes: (i) in~\Cref{alg:line:six}, set $p\eq (\guess/n)^{1/2}$,  (ii) remove the if-then-else lines~\Cref{alg:line:esthighdeg} to~\Cref{alg:line:else}, and straightaway going to~\Cref{alg:line:twelve} (iii) use parameter $\Delta \eq 0$. Note that \Cref{lem:recogij}, \hyperref[lem:ber-subs]{\EstFracEdge}, and \hyperref[alg:est-mab]{\sc \EstNumEdges} 
	use only $\RE$, $\RN$, $\DEG$, and $\ADD$ oracles. In the analysis (proof of~\Cref{thm:est-d-known-n-n-by-d}), we use the (trivial) upper bounds on the set $|H'_i|, |H'_j|$ of $\leq n$
	giving $\mu \geq \frac{m}{\alpha n^2}$ and so $s^2\mu \geq \frac{m}{n^2p^2} = \frac{\davg}{2g}$. The rest of the analysis is the same. The query complexity is $O(\eps^{-2}\log^5 n \sqrt{n/\davg})$.
	
	To get the $\Ot(n^{1/3})$ query complexity, apply~\Cref{lem:bt}. More precisely, set $\theta = n^{1/3}$ and if~\Cref{lem:bt} succeeds, we are done. Otherwise, we know $\davg > n^{1/3}$
	and the algorithm described in the previous paragraph runs in $\Ot(\eps^{-2}\sqrt{n/d})$ queries.
\end{proof}

\section{Estimating average degree with unknown number of vertices} \label{sec:estd}

When the number of vertices is known, ERS give an algorithm {\sc \ERSEstAvgDeg} to $(1\pm \eps)$-approximate $\davg$
using only $\RV$, $\RN$ and $\DEG$ oracles (base model) in $O(\eps^{-2}\sqrt{n/d})$ queries~\cite{ERS19}. In particular, for $d = \omega(1)$, 
it is a $o(\sqrt{n})$-query algorithm. When $n$ is {\em not known}, one cannot obtain such a result; 
there is a lower bound of $\sqrt{n}$ even for high average degree graphs (\Cref{clm:apple} in~\Cref{sec:lowerbounds}).
In this section we show that the ERS bound can be recovered if we have access to either
(i) $\RE$ queries or (ii) $\ADD$ queries.

The main issue with running the ERS algorithm without any knowledge of $n$ or $\davg$ is that we don't know when to stop.
More precisely, ERS has an estimator that needs to be repeated $O(\sqrt{n/\davg})$ times. With no knowledge of $n$ (and of course none of $\davg$), there is no way to know this value $n/\davg$.
The ``guessing and checking'' methods do not work with two unknowns; neither does the idea of  stopping when empirical variance is low enough (an idea often used in practice) as it runs the risk of stopping too early. Indeed,~\Cref{clm:apple} shows there is no uniform way to fix this issue without assuming more powerful queries.

In~\Cref{sec:est-d-unknown-n-re}, we show how to bypass the need to know ``$n/\davg$'' using the ability to sample random edges, giving a $O(\eps^{-2}\sqrt{n/d})$ query algorithm 
to estimate the average degree. Combined with the $O(\eps^{-2}\davg)$ query algorithm {\sc \BTEstAvgDeg} (\Cref{lem:bt}) due to BT, we get a $O(\eps^{-2}n^{1/3})$ query algorithm.
In~\Cref{sec:est-d-unknown-n-add}, we show how to estimate either $n$ or $n/\davg$
in roughly $\sqrt{n/\davg}$ queries in the Advanced model ($\RE$ isn't required). This allows us to implement ERS.
An added benefit is that we can estimate the {\em number of vertices} in the same amount of queries.
In contrast, with only $\RV, \RN, \RE$ and $\DEG$, one can't beat ``birthday paradox bound'' for vertex estimation (\Cref{clm:cherry}).

\subsection{With access to uniformly random edges}\label{sec:est-d-unknown-n-re}\label{sec:4.1}

The algorithm 
described below obtains the requisite ``lower bound'' on the number of samples by designing a geometric random variable with a ``small enough'' success probability.
More precisely, the algorithm below computes a parameter $\tau$, and 
repeatedly runs the ERS estimator  
till the estimator crosses $\tau$ enough ($\approx 1/\eps^2$) times.
We use $\RE$ to choose the $\tau$: we sample $O(1/\eps)$ random edges and choose $\tau$ using the degrees of the vertices seen. 
We now give the details and the analysis.

\renewcommand{\dest}{{\sc EstAvgDeg}}
\begin{algorithm}
	\caption{\dest($\eps \in (0,1)$)} \label{alg:dest}
	\begin{algorithmic}[1]
		\LineComment{Assume access to $\RV$, $\RN$, $\DEG$, and $\RE$.}
		\LineComment{ $\eps \in (0,1)$ is a parameter. No other assumptions made}
		\LineComment{Returns $\widehat{\davg}$}
		\LineComment{Query Complexity: $O(\eps^{-2.5}\sqrt{n/\davg})$ but see remark after \Cref{lem:dest-conc}.}
		\Statex 
		
		\State  Sample $10/\eps$ uar edges $e\in E$ and call this set $R$. \Comment{Uses $\RE$}
		\State $\tau \eq \max_{e\in R} \deg(e)$ \Comment{Recall: for $e=(x,y)$, $\deg(e) = \min(\deg(x), \deg(y))$} \label{alg:settau} \Comment{Uses $\DEG$}
		\State Initialize $Y \eq 0$; $k\eq 0$
		\While{$Y < 200/\eps^2$}:
			\State $k\eq k+1$
			\State Sample $x \in V$ uar. \Comment{Uses $\RV$}
			\If{$\deg(x) \geq \tau$}:\label{alg:no-adv:estd:degcheck}
				\State $Y \eq Y + 1$
			\EndIf
			\State Sample $y\sim x$ uar. \Comment{Uses $\RN$}
			\If{($\deg(y) > \deg(x)$) or ($\deg(y) = \deg(x)$ and $\id(y) > \id(x)$)}: \Comment{Uses $\DEG$}	
				\State $X_k \eq \deg(x)$
			\Else:
				\State $X_k \eq 0$
			\EndIf
		\EndWhile
		\State \Return $\widehat{\davg} \eq  \frac{2}{k}\sum_{i=1}^k \min(X_i, \tau)$
	\end{algorithmic}
\end{algorithm}
\begin{theorem}\label{thm:estavgdeg-re-rn-pair}
	\hyperref[alg:est-d-known-n-full-alg]{\dest$(\eps)$} returns an $(1\pm \eps)$-approximation to $\davg$ given access to $\RV$, $\RN$, $\RE$ and $\DEG$ oracles, in $O(\eps^{-2.5}\sqrt{n/d})$ queries.
	Using this, one can obtain a $O(\eps^{-2}\sqrt{n/d})$ query algorithm as well.
\end{theorem}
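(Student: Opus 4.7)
The plan is to piece together three ingredients visible in the algorithm --- the random threshold $\tau$, the truncated ERS estimator $\tilde X_i := \min(X_i, \tau)$, and the stopping rule that waits for $200/\eps^2$ sampled vertices of degree $\geq \tau$ --- into a three-step proof: (1) $\mu_\tau := \Exp[\tilde X_i]$ lies in $((1-O(\eps))d/2,\ d/2]$; (2) the stopping rule automatically forces $k$ large enough for Chebyshev to give $(1\pm\eps)$-concentration; and (3) the stopping rule itself takes at most $O(\eps^{-2.5}\sqrt{n/d})$ iterations. Step (1) rests on two-sided control of $\pi(\tau) := \Pr_e[\deg(e)\geq \tau]$ for a uniform random edge $e$. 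Since $\tau$ is the maximum of $10/\eps$ i.i.d.\ copies of $\deg(e)$, standard tail inequalities give with high probability both an upper bound $\pi(\tau) \leq O(\eps)$ (from $(1-c\eps)^{10/\eps}\leq e^{-10c}$) and a matching lower bound $\pi(\tau) \geq \Omega(\eps)$ (from $(\eps/c)^{10/\eps}$ being small, after an optional constant-factor boost of $|R|$). The usual ERS counting gives $\Exp[X_i] = m/n = d/2$, and the edges whose contribution to $X_i$ is clipped by the $\min$ are precisely those with $\deg(e) > \tau$; by the upper-tail bound their total contribution is at most $O(\eps)\cdot d/2$, yielding step~(1).

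For step~(2), $\tilde X_i \in [0,\tau]$ immediately gives $\Exp[\tilde X_i^2]\leq \tau\,\mu_\tau$ and hence $\Var[\tilde X_i]/\mu_\tau^2 \leq \tau/\mu_\tau = O(\tau/d)$, so Chebyshev needs $k \geq C\tau/(\eps^2 d)$ to pin the empirical mean within $(1\pm\eps)$ of $\mu_\tau$. The stopping rule supplies this for free: write $H_\tau := \{v:\deg(v)\geq \tau\}$, $n_\tau := |H_\tau|$, and $q := n_\tau/n$; at termination $k$ satisfies $k \geq (200/\eps^2)/q = (200/\eps^2)(n/n_\tau)$, and the degree-sum inequality $n_\tau\,\tau \leq \sum_{v\in H_\tau}\deg(v) \leq 2m = nd$ rearranges to $n/n_\tau \geq \tau/d$, so $k \geq 100\,\tau/(\eps^2 d)$ --- exactly the Chebyshev requirement. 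Combined with step~(1), this yields the claimed $(1\pm O(\eps))$-approximation of $d$.

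The main obstacle is step~(3), i.e.\ bounding $\Exp[k] = (200/\eps^2)(n/n_\tau)$, for which I need to show $n_\tau \geq \Omega(\sqrt{\eps m})$ with high probability. This is the only place where both tails of $\pi(\tau)$ interact nontrivially with the graph structure, via a purely combinatorial observation: an edge $e$ with $\deg(e)\geq \tau$ must have both endpoints in $H_\tau$, so
\[
\pi(\tau)\cdot m \;=\; |\{e: \deg(e)\geq \tau\}| \;\leq\; \binom{n_\tau}{2} \;\leq\; n_\tau^2/2.
\]
Combining with the lower-tail estimate $\pi(\tau) \geq \Omega(\eps)$ gives $n_\tau^2 \geq \Omega(\eps m)$, hence $n_\tau \geq \Omega(\sqrt{\eps m}) = \Omega(\sqrt{\eps n d})$ and $q \geq \Omega(\sqrt{\eps d/n})$, so $\Exp[k]\leq O(\eps^{-2.5}\sqrt{n/d})$. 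A standard Markov-style doubling argument on the negative-binomial random variable $k$ upgrades this to a w.h.p.\ bound. For the sharper $O(\eps^{-2}\sqrt{n/d})$ claim at the end of the theorem, the plan is to bootstrap: first run \dest\ with a constant target accuracy to get a crude estimate $\tilde d$ of $d$ (together with side information $\tau, n_\tau, q$ that the algorithm already computes, giving a crude handle on $\sqrt{n/d}$), then rerun an ERS-type estimator whose sample count is set deterministically from $\tilde d$, thereby removing the $\eps^{-1/2}$ slack that the random spread of $\pi(\tau)$ contributes to step~(3).
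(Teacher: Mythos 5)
Your proposal follows the same route as the paper: an upper-tail bound on $\tau$ controls the truncation error in $\Exp[\min(X_i,\tau)]$, the degree-sum inequality $n_\tau\,\tau \leq nd$ links the stopping rule to the Chebyshev requirement $k = \Omega(\tau/(\eps^2 d))$, a lower-tail bound on $\tau$ gives $n_\tau = \Omega(\sqrt{\eps m})$ and hence the $O(\eps^{-2.5}\sqrt{n/d})$ query bound, and a constant-accuracy bootstrap removes the extra $\eps^{-1/2}$.

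One genuine slip in step~(2): you write ``at termination $k$ satisfies $k \geq (200/\eps^2)/q$,'' but this quantity is the \emph{expectation} of the (negative-binomial) stopping time $k$, not a deterministic lower bound on it --- $k$ can be as small as $200/\eps^2$ if every sampled vertex happens to land in $H_\tau$. The correct statement, which is what the paper proves, is that $k \geq 50\tau/(\eps^2 d)$ with probability at least $9/10$, by a Chernoff bound: in $50\tau/(\eps^2 d)$ trials of a coin with bias $q \leq d/\tau$ (the same degree-sum bound), the expected number of hits is at most $50/\eps^2$, so accumulating $200/\eps^2$ hits that quickly is very unlikely. This fix slots in without disturbing the rest of your argument, but as written the deterministic inequality in your step~(2) is false and the Chebyshev conclusion does not yet follow from it.
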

\begin{proof}
Given $\tau$, let us define two subsets of vertices:
\[
H := \{x\in V~:~\deg(x) \geq \tau\}~~\text{and}~~H' := \{x\in V~:~\deg(x) \geq \tau+1\}
\]
\begin{claim}\label{clm:obs-estd}
	With probability $\geq 9/10$, we have  $|E[H']| \leq \eps m/4$.
	Furthermore, the probability $|H| < \sqrt{\delta\eps m}$ is $\leq 10\delta$, for any $\delta$.
\end{claim}
\begin{proof}
Consider the edges in decreasing order of their degrees (breaking ties arbitrarily). The probability that the random sample $R$
{\em does not} contain any of the first $\eps m/4$ edges is $(1 - \eps/4)^{10/\eps} < e^{-5/2} < 1/10$. So, with $\geq 9/10$ probability, $\tau \geq \deg(e)$ for 
at least $(1-\eps/4)m$ edges. These edges must have
at least one endpoint must have $\deg(x) \leq \tau$, that is, this endpoint lies outside $H'$. So, the number of edges with both endpoints in $H'$, that is $|E[H']|$, is $\leq \eps m/4$.

Similarly, the probability the set $R$ does not contain any of the first $\delta \eps m$ edges is $(1 - \delta \eps)^{10/\eps} > 1 - 10\delta$.
That is, with this probability, at least $\delta \eps m$ edges $e$ have $\deg(e) \geq \tau$. The endpoints of these edges are all in $H$ (since they all have degree $\geq \tau$),
and so there are at least $\sqrt{\delta \eps m}$ vertices in $H$. So, $\Pr[|H| < \sqrt{\delta \eps m}] \leq 10\delta$.
\end{proof}

\begin{lemma}
	 \label{clm:run-approx-d} 
	 With probability $\geq 9/10$, the final value of $k$ is at least $\frac{50\tau}{\eps^2 \davg}$.
	 Furthermore, 
	the expected query complexity of \dest{} is $O(\eps^{-2.5} \sqrt{n/d})$. 

\end{lemma}
\begin{proof}
	Consider the increment of $Y$ in~\Cref{alg:no-adv:estd:degcheck}. This is a geometric random variable with parameter $|H|/n$ and the query complexity/number of samples 
	is the number of trials till a sum of geometric random variables exceeds $200/\eps^2$.
	First we observe that $|H|/n \leq \davg/\tau$ since $|H|\tau \leq \sum_{x\in H}\deg(x) \leq 2m$. This implies the lower bound on the number of samples by a Chernoff bound calculation\footnote{The probability of seeing $200/\eps^2$ heads in less than $\frac{50\tau}{\eps^2\davg}$ tosses of a coin with bias $\leq \davg/\tau$ is $< 2^{-4}$.}.
	
	Using~\Cref{clm:obs-estd}, we can establish the expected query complexity. Note if $|H| \in (\sqrt{2^{-(i+1)}\eps m}, \sqrt{2^{-i}\eps m}]$, we can upper-bound the expected query complexity/final value of $k$ by $\frac{200}{\eps^2}\cdot \frac{n}{\sqrt{2^{-(i+1)}\eps m}}$.
	Therefore, the expected value of $k$ can be upper bounded by
	\[
		\Exp[k] \leq \sum_{i=0}^\infty \Pr[|H| \in (\sqrt{2^{-(i+1)}\eps m}, \sqrt{2^{-i}\eps m} ] \cdot \frac{200}{\eps^2}\cdot \frac{n}{\sqrt{2^{-(i+1)}\eps m}} \leq \frac{200}{\eps^{2.5}}\sqrt{n/\davg} \sum_{i=0}^\infty \frac{10\cdot 2^{-i}}{2^{-(i+1)/2}} 
	\]
	which evaluates to  $O(\eps^{-2.5} \sqrt{n/d})$
\end{proof}

\begin{lemma} \label{lem:dest-conc} With probability at least $2/3$, the output $\widehat{\davg}$
	of \dest{} is in $(1\pm \eps) \davg$.
\end{lemma}

\begin{proof} 
	We condition on the events $|E[H']| \leq \eps m/4$ and $k \geq \frac{50\tau}{\eps^2 d}$; by~\Cref{clm:obs-estd} and~\Cref{clm:run-approx-d}, these occur with probability $4/5$.
	
	\noindent
	Consider an orientation of the edges of $G$ such that given $(x,y)$ we point it from 
	$x$ to $y$ if $\deg(y) > \deg(x)$ or $\deg(x) = \deg(y)$ and $\id(y) > \id(x)$.
	Let $\deg^+(x)$ denote the out-degree of $x$ in this orientation. 
	Note that
		$\sum_{x\notin H'} \deg^+(x) \geq (1-\eps/4)m$
	since the only edges not counted in the LHS are the ones in $E[H']$.
	For convenience, let $W_i := \min(X_i, \tau)$. 
	\[
		\Exp[W_i] \geq \sum_{v \notin H'} \Pr[x = v]\cdot \frac{\deg^+(x)}{\deg(x)} \cdot \min(\deg(x), \tau) = \frac{1}{n} \sum_{x\notin H'} \deg^+(x) \geq (1 - \eps/4)\frac{m}{n}
	\]
We also have $\Exp[W_i] \leq \sum_{v \notin H'} \Pr[x = v]\cdot \frac{\deg^+(x)}{\deg(x)} \cdot \deg(x) = \frac{m}{n}$. Thus, $2\Exp[W_i]$, although not unbiased, is a pretty good estimate of $\davg = 2m/n$.
Let's use $\mu$ to denote $\Exp[W_i]$ (which is the same for all $i$), and so $2\mu \in \left((1-\eps/4)\davg, \davg\right]$.
Next, using the fact that the $W_i$'s are bounded by $\tau$, we get that $\Var[W_i] \leq \tau\mu$. Therefore, if $\sigma^2$ is the variance of $\widehat{\davg}$, we get that $\sigma^2 \leq \frac{4\tau\mu}{k}$.
So, Chebyshev gives 
\[
	\Pr[\left|\widehat{\davg} - 2\mu\right| \geq \eps \mu] \leq \frac{\sigma^2}{\eps^2 \mu^2} \leq \frac{4\tau}{k\eps^2 \mu} < \frac{1}{20}
\]
Since $2\mu \in [(1-\eps/4)\davg, \davg]$, we get that with probability all but $\frac{1}{20} + \frac{1}{10} + \frac{1}{10} = \frac{1}{4}$, our estimate $\widehat{\davg} \in (1\pm \eps)\davg$.\qedhere
\end{proof}
\noindent

\paragraph{Improving the dependence on $\eps$.}\label{par:owl}

The proof of~\Cref{lem:dest-conc} shows that if we took $\frac{50\tau}{\eps^2 \davg}$ samples of $\min(X_i,\tau)$ where $\tau$ is defined as in~\Cref{alg:settau}, then $\widehat{\davg}$ 
is a $(1\pm \eps)$-estimate of $\davg$ with $> 3/4$ probability. Since we didn't know $\davg$, we used the geometric variable of~\Cref{alg:no-adv:estd:degcheck} to ensure we took enough samples. Unfortunately, this leads to an extra $1/\sqrt{\eps}$ in the expected query complexity
which is undesirable. To fix this, one runs the above algorithm with a constant $\eps_0 = 0.5$, say. This gives an estimate of $\widehat{\davg}_0 \in \left[\frac{\davg}{2}, 2\davg\right]$ in $O(\sqrt{n/\davg})$ queries. Once we have that, 
we can replace the while-loop with a for-loop of $\frac{100\tau}{\eps^2 \widehat{\davg}_0}$ iterations and delete the if-condition of~\Cref{alg:no-adv:estd:degcheck}. 
This concludes the proof of~\Cref{thm:estavgdeg-re-rn-pair}.\qedhere\end{proof}

\noindent
\Cref{lem:bt} and~\Cref{thm:est-davg-known-n-no-rv} proves~\Cref{thm:unknown-base}.

\subsection{With access to additive queries}
\label{sec:est-n} \label{sec:est-d-unknown-n-add}

\noindent
Consider the Advanced model (but not $\RE$).
We describe an algorithm {\sc \EstSpar} which, in $O(\eps^{-2}\sqrt{n/\davg})$ queries, either returns an estimate $\widehat{n}$ of the number of vertices, or return $\widehat{\rho}$ of  the inverse
density $\rho = n/\davg$ of a graph. In either case, we can run \hyperref[alg:ers]{\sc \ERSEstAvgDeg} algorithm of~\cite{ERS19} (see~\Cref{rem:ers})  to get an estimate $\widehat{\davg}$ of the average degree in $O(\eps^{-2}\sqrt{n/\davg})$. Also note we can estimate the number of vertices; either we get it directly, or we can obtain by  $\widehat{n} = \widehat{\rho}\widehat{\davg}$.
This proves part (a) of~\Cref{thm:unknown-no-edge}. In~\Cref{sec:lowerbounds}, we show these bounds are optimal. \smallskip

The idea behind the algorithm is simple. We sample a subset $X$ of $s$ nodes where $s$ is a parameter which grows exponentially as $\{2,4,8,\cdots\}$, and use $\ADD$ to obtain $m(X)$, the number of edges
with both endpoints in $X$, in $O(s)$ queries. It is not hard to see that the expected value of $m(X)$ is roughly $s^2\davg/2n$, and so if $s \approx \sqrt{n/\davg}$ {\em and} the variance of $m(X)$ is low, 
then we would get a good estimate to $n/\davg$. The variance is controlled by $D^{(2)} := \sum_{x\in V} \deg^2(x)$, and if this is $O(m^{1.5})$), then convergence occurs.
Unfortunately, this may not be the case. We then use a combinatorial lemma (\Cref{lem:onehop}) that says that if the sum-of-squares of degrees is ``high'', then a random vertex 
or a random neighbor of a random vertex will have ``high'' degree ($> \sqrt{m})$ with probability $\geq \sqrt{d/n}$. Suppose our $m(X)$ estimator is bad. Then sampling $O(\sqrt{n/d})$ vertices
and their random neighbors, and picking the vertex among them with the largest degree, would whp have degree $\geq \sqrt{m}$.
With such a vertex $x$, we can then estimate the number of vertices by rejection sampling (\Cref{lem:ber-subs}). Thus, we either get a good estimate of $\rho$ or a good estimate of $n$.
In the pseudocode below,~\Cref{alg:line:14,alg:line:23} implement the first idea, while~\Cref{alg:line:6,alg:line:12} implements the second.

\begin{algorithm}[ht!]
	\caption{{\sc \EstSpar($\eps \in (0,1)$)}}\label{alg:est-spar}
	\begin{algorithmic}[1]
		\LineComment{Assume access to $\RV, \RN$, $\DEG$, and $\PAIR$/$\ADD$. }
		\LineComment{$\eps \in (0,1)$ is a parameter. No other assumptions made.}
		\LineComment{Returns: either an estimate $\widehat{n}$ of $n$, or estimate  $\widehat{\rho}$ of $n/\davg$}
		\LineComment{Query Complexity: with $O(\eps^{-2} (n/\davg))$ with $\PAIR$, and $O(\eps^{-2} \sqrt{n/\davg})$ with $\ADD$}
		\Statex
		
		\State $s\eq 2$
		\State Sample $A\subseteq V$ of $s$ uar nodes from $V$ \Comment{Takes $O(s)$ queries using $\RV$} \label{alg:est-n:l1}\label{alg:line:6}
		\For{$a\in A$}: \label{alg:est-n:line:deg-idea-begin}
			\State Sample $b \sim a$ and add it to a set $B$. \Comment{Uses $\RN$}
		\EndFor
		\State $x \eq \arg \max_{v\in A\cup B} \deg(v)$ \Comment{Uses $\DEG$} 
		\State Sample $X\subseteq V$ of $s$ uar nodes. \Comment{Uses $\RV$} \label{kookaburra}
		\If{$x$ has a neighbor in $X$}: \Comment{Takes $O(s)$ queries either using $\PAIR$}\label{alg:est-n:line:nbrchck} \label{duck}
			\State \Return $\widehat{n} \eq $ \hyperref[lem:ber-subs]{\EstNumVer}$(x)$  \label{alg:est-n:l:find-v} \label{alg:est-n:line:deg-idea-end} \label{mallard} \Comment{Return estimate of $n$}\label{alg:line:12}
		\Else:
			\State Query $m(X)$ \Comment{Takes $O(s^2)$ queries using $\PAIR$ and $O(s)$ queries using $\ADD$}  \label{alg:line:14}
			\If{$m(X) = 0$}: \label{alg:est-n:checkzero}
				\State $s\eq 2s$ and go to~\Cref{alg:est-n:l1}
			\Else: \Comment{Assert low variance of estimator (see analysis)} \label{alg:est-n:l:else}
				\State $k \eq \ceil{\frac{A}{\eps^2}}$ \Comment{Think of $A$ is a rather large constant (see~\Cref{lem:low-var} for details)}
				\For{$i \in \{1,2,3,\ldots, k\}$}:
					\State Sample $X_i$ uar of size $|X_i| = s$. 
					\State Query $m^{(i)}_s \eq m(X_i)$ 
				\EndFor
				\State $\widehat{m}_s \eq \frac{1}{k}\sum_{i=1}^k m^{(i)}_s$
				\State \Return $\widehat{\rho} \eq s^2/2\widehat{m}_s$ \Comment{Return estimate of $\rho = (n/\davg)$.} \label{alg:line:23}
			\EndIf
		\EndIf
	\end{algorithmic}
\end{algorithm}
\noindent
\begin{theorem}\label{thm:est-dens-unknown-n}
	\hyperref[alg:est-spar]{\sc \EstSpar}$(\eps)$ either returns an $(1\pm \eps)$-approximation $\widehat{n}$ to $n$, or returns an $(1\pm \eps)$-approximation $\widehat{\rho}$ to $n/\davg$.
	It uses $\RV$, $\RN$, $\DEG$ and either $\PAIR$ or $\ADD$ oracle.
	With $\PAIR$ oracle, it takes $O(\eps^{-2}(n/\davg))$ queries and with $\ADD$ oracle, it takes $O(\eps^{-2}\sqrt{n/\davg})$ queries.
\end{theorem}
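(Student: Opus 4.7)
Set $D^{(2)} := \sum_{v\in V}\deg^2(v)$. The plan is to combine three ingredients: a second-moment analysis of $m(X)$ for uniform $X\subseteq V$ of size $s$, the combinatorial trichotomy of \Cref{lem:onehop}, and the guarantees on \EstNumVer from \Cref{lem:ber-subs}. Writing $m(X)=\sum_{e\in E}\mathbf{1}[e\subseteq X]$, pairwise-inclusion gives $\Exp[m(X)] = \binom{s}{2} m/\binom{n}{2} = \Theta(s^2 \davg/n)$, while a collision argument analogous to the variance bound in the proof of \Cref{lem:estimate-mab} yields $\Var[m(X)] \le \Exp[m(X)] + O(D^{(2)} s^3 / n^3)$. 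A short calculation shows the ratio $\Var[m(X)]/(\Exp[m(X)])^2$ is $O(1)$ whenever $s = \Omega(\sqrt{n/\davg})$ \emph{and} $D^{(2)} = O(m^{3/2})$; in that regime, averaging $k = \Theta(\eps^{-2})$ independent copies and applying Chebyshev will give $\widehat{m}_s \in (1\pm\eps)\Exp[m(X)]$ with probability $\geq 9/10$, and hence $\widehat{\rho} = s^2/(2\widehat{m}_s) \in (1\pm O(\eps))(n/\davg)$.

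Next, I will use \Cref{lem:onehop} to case-split. If (i) $D^{(2)} = O(m^{3/2})$, the $\widehat{\rho}$ branch fires correctly once $s$ reaches $\Theta(\sqrt{n/\davg})$ by the above. In cases (ii) or (iii), there is a ``rich'' source of vertices of degree $\Omega(\sqrt{m})$, hit either by the uniform vertices of $A$ (in case (ii)) or by the uniform edge-endpoints that form $B$ (in case (iii)) with constant probability once $s = \Theta(\sqrt{n/\davg})$. In either sub-case, $x = \arg\max_{v\in A\cup B}\deg(v)$ will satisfy $\deg(x) = \Omega(\sqrt{m})$, so $\deg(x)/n = \Omega(\sqrt{\davg/n})$, and the probability that the independently sampled $X$ contains at least one neighbor of $x$ is $1 - (1 - \deg(x)/n)^s = \Omega(s\sqrt{\davg/n}) = \Omega(1)$, triggering the $\widehat{n}$ branch; the output \EstNumVer$(x)$ is then a $(1\pm\eps)$-approximation to $n$ by \Cref{lem:ber-subs}. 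A constant number of successive doublings upgrades constant success to high probability.

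For the query complexity: at $s = c\sqrt{n/\davg}$ either $m(X)\geq 1$ or $x$ has a neighbor in $X$ with constant probability, so the final value $s^\star$ of $s$ is $O(\sqrt{n/\davg})$ whp; costs of earlier iterations form a geometric series dominated by the last. The $k$ calls to $m(X_i)$ cost $O(ks^2) = O(\eps^{-2}\, n/\davg)$ via $\PAIR$ and $O(ks) = O(\eps^{-2}\sqrt{n/\davg})$ via $\ADD$, while the $\widehat{n}$ branch costs $O(n/(\eps^2\deg(x))) = O(s/\eps^2) = O(\eps^{-2}\sqrt{n/\davg})$, since this branch only fires at expected rate $\Theta(s\deg(x)/n)$. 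The hard part will be to rule out a ``bad'' $\widehat{\rho}$ return in cases (ii)/(iii), where the variance bound of the first paragraph fails: if $X$ happens to miss the neighborhood of $x$ yet still contains an edge, the algorithm could commit to an unreliable $\widehat{\rho}$. I expect to handle this by showing that in the high-$D^{(2)}$ regime the event ``$x$ has a neighbor in $X$'' dominates ``$m(X)\geq 1$'' at $s\sim\sqrt{n/\davg}$, so conditioning on which branch triggers preserves correctness up to a constant-probability loss absorbed by standard boosting.
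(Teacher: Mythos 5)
Your plan matches the paper's proof in all essentials: the same second-moment computation for $m(X)$ (Claims~\ref{clm:exp-est-n} and \ref{clm:var-calc-est-n}), the same appeal to the combinatorial Lemma~\ref{lem:onehop} to split on whether $D^{(2)}$ is $O(m^{3/2})$, the same use of \EstNumVer{} (Lemma~\ref{lem:ber-subs}) once a high-degree $x$ is located, and the same geometric-sum accounting for query complexity. Where you diverge is precisely the point you flag as ``the hard part,'' and your proposed fix is not quite what works. The paper rules out a premature $\widehat{\rho}$ return with a single Markov-plus-union-bound argument (Claim~\ref{cor:exp}): $\Pr[m(X)\neq 0]\le s^2\davg/(2n)$, and summing the geometric schedule shows that with probability $\geq 1-\delta$ the loop never exits via the $\widehat{\rho}$ branch while $s\le\sqrt{\delta n/\davg}$ --- unconditionally, not just in the high-$D^{(2)}$ regime. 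Then Claim~\ref{clm:anal-onehop} shows that when $D^{(2)}\geq B_\delta m^{3/2}$, the $\widehat{n}$ branch fires (with probability $\geq 1-2\delta$) already at the first $s$ of order $\sqrt{\delta n/\davg}$, with the constant $C_\delta$ in Lemma~\ref{lem:onehop} tuned to absorb the $\sqrt{\delta}$; and when $D^{(2)}\leq B_\delta m^{3/2}$, once $s>\sqrt{\delta n/\davg}$ the variance ratio is $O_\delta(1)$ and $\widehat{\rho}$ is reliable (Claim~\ref{lem:low-var}).

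Your alternative --- arguing that ``$x$ has a neighbor in $X$'' dominates ``$m(X)\geq 1$'' at $s\sim\sqrt{n/\davg}$ and absorbing the residual by ``a constant number of successive doublings'' --- has a genuine gap. At $s=\Theta(\sqrt{n/\davg})$ both events have $\Theta(1)$ probability, so ``domination'' is hard to establish cleanly, and the algorithm commits to $\widehat{\rho}$ the first time it observes $m(X)\neq 0$ while $x$ has no neighbor in $X$; once that happens, no amount of further doubling can retract the bad return. Boosting by rerunning does not help either, because there is no way to detect which returns are bad. The paper's trick is to look one factor of $\sqrt{\delta}$ earlier in $s$: there the Markov bound alone guarantees $m(X)=0$ with high probability, removing the need for any domination comparison, while $C_\delta$ is chosen large enough that the $\widehat{n}$ branch is already high-probability at that scale in the large-$D^{(2)}$ case. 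This is the missing ingredient you should borrow.
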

\noindent
Before proving the above, lets state a corollary for the standard model with unknown $n$.
\begin{corollary}\label{cor:unknown-n-no-edge}
	Consider the standard model with $n$ unknown. 
	There is an algorithm that outputs a $(1\pm \eps)$-approximation to $\davg$ in $\Ot\left(\eps^{-2}\min(\sqrt{n},n/\davg)\right)$ queries.
\end{corollary}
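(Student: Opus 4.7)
The plan is to obtain the two complexities $\sqrt{n}$ and $n/\davg$ from two different subroutines already developed in the paper, and take their minimum by an interleaving trick. Concretely, in the standard model with $n$ unknown we have: (i) \hyperref[alg:est-spar]{\sc \EstSpar}$(\eps)$ from \Cref{thm:est-dens-unknown-n}, which runs in $O(\eps^{-2} n/\davg)$ queries using $\RV, \RN, \DEG, \PAIR$ and returns either an $(1\pm\eps)$-approximation $\widehat{n}$ to $n$ or an $(1\pm\eps)$-approximation $\widehat{\rho}$ to $n/\davg$; and (ii) {\sc EstNumColl}$(\eps)$ from \Cref{thm:rt}, which uses only $\RV$ samples and in $O(\eps^{-2}\sqrt{n})$ queries returns an $(1\pm\eps)$-approximation $\widehat{n}$ to $n$. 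Since we do not know $\davg$ (and therefore do not know which subroutine is cheaper), we simply run them in parallel, alternating one query at a time, and halt the whole process as soon as either terminates.

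By this interleaving, the total query cost is within a factor of two of the minimum of the two running times, i.e.\ $O(\eps^{-2}\min(\sqrt{n}, n/\davg))$. At halt we have in hand either $\widehat{n}$ or $\widehat{\rho}$ (or both). In either case we can now invoke the ERS estimator of \Cref{thm:ers} to produce $\widehat{\davg}$: if $\widehat{n}$ is available, we call \hyperref[alg:ers]{\sc \ERSEstAvgDeg}$(\eps)$ directly at cost $O(\eps^{-2}\sqrt{n/\davg})$; if only $\widehat{\rho}$ is available, we run the ERS unbiased estimator (sample uar vertex $v$, uar neighbor $u$, output $2\deg(v)$ if $\deg(v)\leq\deg(u)$ and $0$ else) for exactly $\Theta(\eps^{-2}\sqrt{\widehat{\rho}})$ iterations, which is precisely the sample size needed in the ERS variance analysis and does not require $n$ itself. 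Since $\davg\geq 1$ we have $\sqrt{n/\davg}\leq \min(\sqrt{n}, n/\davg)$, so the ERS phase is dominated by the first phase, and the overall complexity is $\Ot(\eps^{-2}\min(\sqrt{n}, n/\davg))$ once all success probabilities are boosted by the standard $O(\log)$ overhead.

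The main obstacle I anticipate is bookkeeping rather than anything conceptual: I need to verify that the query-complexity bounds for \hyperref[alg:est-spar]{\sc \EstSpar} and {\sc EstNumColl} in \Cref{thm:est-dens-unknown-n} and \Cref{thm:rt} hold in expectation (or with high probability) as stated, so that the interleaved execution truly stops within $O(\min(\sqrt{n}, n/\davg))$ queries — in particular, that the internal doubling schedule $s\gets 2,4,8,\ldots$ of \hyperref[alg:est-spar]{\sc \EstSpar} exits at $s=\Theta(\sqrt{n/\davg})$ and that the collision-based algorithm of \Cref{thm:rt} adapts to the unknown $n$. Beyond this, the corollary is a direct composition of three results that are already proven in the paper, requiring only a union bound over the $O(1)$ many randomized subroutines invoked.
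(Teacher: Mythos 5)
Your proposal is correct and matches the paper's approach: the paper's two-sentence proof of \Cref{cor:unknown-n-no-edge} likewise invokes {\sc EstNumColl} (\Cref{thm:rt}) for the $\Ot(\eps^{-2}\sqrt{n})$ branch and \hyperref[alg:est-spar]{\sc \EstSpar} with $\PAIR$ (\Cref{thm:est-dens-unknown-n}) for the $\Ot(\eps^{-2} n/\davg)$ branch, then runs \hyperref[alg:ers]{\sc \ERSEstAvgDeg} via \Cref{rem:ers}. You have merely made explicit the interleaving-to-achieve-the-minimum step that the paper leaves implicit, and your concern about the doubling schedule and the unknown-$n$ adaptivity of {\sc EstNumColl} are both verified in \Cref{thm:est-dens-unknown-n} and \Cref{thm:rt} respectively.
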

\begin{proof}
	In $O(\eps^{-2}\sqrt{n})$ queries, we can obtain an $(1\pm \eps)$-approximation of $n$ and run \hyperref[alg:ers]{\sc \ERSEstAvgDeg} (see~\Cref{rem:ers}).
	Or, we can run \hyperref[alg:est-spar]{\sc \EstSpar}$(\eps)$ to obtain either an approximation of $n$ or $n/d$ in $O(\eps^{-2}(n/d))$ queries, and again run \hyperref[alg:ers]{\sc \ERSEstAvgDeg} (see~\Cref{rem:ers}).
\end{proof}
\begin{proof}[Proof of~\Cref{thm:est-dens-unknown-n}]
We begin with analysis of~\Cref{kookaburra,duck,mallard}. 
\begin{claim}\label{clm:smalls}
	If the algorithm returns an estimate $\widehat{n}$ in~\Cref{alg:est-n:line:deg-idea-end}, it is an $(1\pm \eps)$-estimate of $n$. The expected queries made
	is $O(s^\star/\eps^{2})$ where $s^\star$ is the current (and final) value of the variable $s$.
\end{claim}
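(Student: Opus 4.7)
The plan is to handle correctness and query complexity separately, with the query complexity being the slightly tricky part because the cost of \EstNumVer depends on $\deg(x)$, which is a random quantity correlated with the event that the algorithm reaches \Cref{mallard}.

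\textbf{Correctness.} This is immediate: \EstNumVer is invoked on a specific vertex $x$, and \Cref{lem:ber-subs} guarantees that it returns a $(1\pm \eps)$-approximation to $n$ with high probability in $O(n/(\eps^2\deg(x)))$ queries. In particular, $\deg(x)\geq 1$ (we assume no isolated vertices), so \EstNumVer always terminates. Hence whenever the algorithm returns at~\Cref{mallard}, the output $\widehat{n}\in (1\pm\eps)n$ whp.

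\textbf{Query complexity.} Fix any iteration of the main loop with current value $s$. Sampling $A$, $B$, and $X$, computing $x$, and checking whether $x$ has a neighbor in $X$ together take $O(s)$ queries using $\RV$, $\RN$, $\DEG$, and $\PAIR$. The only additional cost comes from the call to \EstNumVer, and this cost is incurred only when $x$ has a neighbor in $X$. The key trade-off is as follows: condition on the choice of $x$ (which is determined before $X$ is sampled). Since $X$ is a uar sample of $s$ vertices, the probability that $x$ has a neighbor in $X$ is at most $\min\bigl(1,\, s\deg(x)/n\bigr)$ by a union bound. Conditioned on this event, the expected cost of \EstNumVer is $O(n/(\eps^2\deg(x)))$ by \Cref{lem:ber-subs}. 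Multiplying, the expected cost contributed by \EstNumVer in this iteration is
\[
\min\!\bigl(1,\, s\deg(x)/n\bigr)\cdot O\!\left(\frac{n}{\eps^2\deg(x)}\right)\;=\;O\!\left(\frac{s}{\eps^2}\right),
\]
regardless of the value of $\deg(x)$. Hence the expected total queries in one iteration with parameter $s$ is $O(s/\eps^2)$.

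\textbf{Summing over iterations.} Since $s$ doubles each time the algorithm loops, summing over all iterations with $s = 2, 4, \ldots, s^\star$ is a geometric series dominated by its last term, giving expected total queries $O(s^\star/\eps^2)$. The main (mild) obstacle is the probability--cost balancing in the iteration analysis above; once that is set up, the doubling and geometric-sum argument is standard.
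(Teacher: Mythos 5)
Your proof is correct and follows essentially the same route as the paper's: both proofs bound the probability of reaching \Cref{mallard} by $s\deg(x)/n$ (the paper computes $1-(1-\deg(x)/n)^s$, you use a union bound, and the two coincide), multiply by the $O(n/(\eps^2\deg(x)))$ cost of \EstNumVer{} from \Cref{lem:ber-subs}, and observe that $\deg(x)$ cancels, leaving $O(s/\eps^2)$. Your explicit note that $x$ is determined before $X$ is sampled, and your closing geometric-sum argument over the doubling values of $s$, make a couple of steps more explicit than the paper's terse proof, but the substance is identical.
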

\begin{proof}
	The accuracy of the estimate follows from~\Cref{lem:ber-subs}.
    Suppose~\Cref{alg:est-n:line:nbrchck} returned True. The probability of this event 
    is $1 - \left(1 - \frac{\deg(x)}{n}\right)^s \leq \frac{s\deg(x)}{n}$.
	The estimated query complexity for \EstNumVer$(x)$  is $O(\frac{n}{\eps^2\deg(x)})$. Therefore conditioned on~\Cref{alg:est-n:line:nbrchck} returning true
	the expected query complexity is $O(\frac{n}{\eps^2\deg(x)}\cdot \frac{s\deg(x)}{n}) = O(s/\eps^2)$.
\end{proof}

\noindent
Next, we understand the mean and variance of the  random variable $Z = m(X)$ for a particular $s\in \NN$. 
Note that each $m^{(i)}$ is an independent draw from this distribution. 
These calculations below are similar to~\Cref{clm:exp} and~\Cref{clm:var-calc} presented slightly differently. 
\begin{claim}\label{clm:exp-est-n}
	$\Exp[Z] = \binom{s}{2} \cdot \frac{\davg}{n}$
\end{claim}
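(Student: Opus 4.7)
The plan is to compute the expectation by linearity. Writing $X = (v_1, \ldots, v_s)$ as the ordered tuple of sampled vertices (sampled i.i.d.\ uniformly from $V$, since the variable $s$ is doubled and the proofs above treat samples as independent), I would identify $Z = m(X)$ as the number of unordered pairs $\{i,j\} \subseteq \{1,\ldots,s\}$ such that $(v_i, v_j) \in E$. For each such pair, let $Y_{ij}$ be the indicator that $(v_i,v_j)$ is an edge of $G$, so that $Z = \sum_{1 \le i < j \le s} Y_{ij}$, and by linearity
\[
\Exp[Z] \;=\; \sum_{1 \le i < j \le s} \Pr[(v_i, v_j) \in E] \;=\; \binom{s}{2}\,\Pr[(v_1,v_2) \in E],
\]
where the last equality uses that the pairs $(v_i,v_j)$ are identically distributed.

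Next I would compute $\Pr[(v_1,v_2) \in E]$ by enumerating edges: for each $e=(u,v) \in E$, the event $\{v_1,v_2\} = \{u,v\}$ occurs with probability $2/n^2$, so summing over the $m$ edges gives
\[
\Pr[(v_1,v_2) \in E] \;=\; \frac{2m}{n^2} \;=\; \frac{\davg}{n}
\]
by definition of $\davg = 2m/n$. Substituting into the previous display yields $\Exp[Z] = \binom{s}{2}\cdot \davg/n$, which is the claim.

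Since the argument is just linearity of expectation plus one direct probability computation, there is no real obstacle. The only point that needs clarification is the sampling convention: I would briefly note that $X$ is a size-$s$ sample of independent uniform vertices, so even if $v_i = v_j$ for some collision (contributing $0$ to $m(X)$), the expectation formula is unaffected. If instead the algorithm intends sampling without replacement, the same argument gives $\Exp[Z] = m \cdot s(s-1)/(n(n-1))$, and the stated bound is off by the benign factor $n/(n-1) = 1 + o(1)$, which can be absorbed into the error budget without changing any downstream claim in the subsequent variance analysis of \Cref{alg:est-spar}.
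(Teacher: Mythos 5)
Your proof is correct and takes essentially the same approach as the paper: both decompose $Z = m(X)$ into a sum of $\binom{s}{2}$ pair indicators and apply linearity of expectation. The only cosmetic difference is that the paper computes $\Pr[X_{ij}=1]$ by summing over vertices ($\sum_{x\in V}\frac{1}{n}\cdot\frac{\deg(x)}{n}$) while you sum over edges ($\sum_{e\in E}\frac{2}{n^2}$); these are the same calculation, yielding $2m/n^2 = \davg/n$. Your closing remark about with- versus without-replacement sampling is a sensible clarification and matches the paper's implicit convention.
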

\begin{proof}
	Let $X_{ij}$ be the indicator variable that the $i$th and $j$th sampled vertex form an edge. Note that $Z = \sum_{1\leq i < j \leq s} X_{ij}$.
	Furthermore, 
	\[\Pr[X_{ij} = 1] = \sum_{x\in V} \Pr[i\text{th sample}~=x]\Pr[j\text{th sample}\sim x] = \sum_{x\in V} \frac{1}{n}\cdot \frac{\deg(x)}{n} = \frac{2m}{n^2} = \frac{\davg}{n}\]
	The claim follows by linearity of expectation.
\end{proof}

\begin{claim}\label{clm:var-calc-est-n}
	$\Var[Z] \leq \Exp[Z] + 2 \binom{s}{3} \cdot \frac{1}{n^3} \cdot D^{(2)}$ where $D^{(2)} := \sum_{x\in V} \deg^2(x)$.
\end{claim}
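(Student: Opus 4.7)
Building on the setup of \Cref{clm:exp-est-n}, I would again expand $Z = m(X) = \sum_{1\le i<j\le s} X_{ij}$, where $X_{ij}$ is the indicator that the $i$-th and $j$-th sampled vertices are adjacent, and use the variance decomposition
\[
\Var[Z] \;=\; \sum_{i<j}\Var[X_{ij}] \;+\; \sum_{\substack{(i,j)\neq (k,l)\\ i<j,\, k<l}} \Cov[X_{ij},X_{kl}].
\]
The diagonal terms are easy: each $X_{ij}$ is a Bernoulli variable, so $\Var[X_{ij}] \le \Exp[X_{ij}]$, and summing gives $\sum_{i<j}\Var[X_{ij}] \le \Exp[Z]$, which supplies the first term in the claimed bound.

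For the cross terms I would split on how the index sets $\{i,j\}$ and $\{k,l\}$ overlap. Because the samples are drawn independently, whenever $\{i,j\}\cap\{k,l\}=\emptyset$ the variables $X_{ij}$ and $X_{kl}$ depend on disjoint samples and are therefore independent, so their covariance vanishes. The only surviving contributions come from pairs sharing exactly one index, say $X_{ij}$ and $X_{il}$ with common sample index $i$. For these I would compute the joint expectation by conditioning on the shared vertex: letting $v$ denote the $i$-th sample,
\[
\Exp[X_{ij}X_{il}] \;=\; \sum_{x\in V} \Pr[v=x]\,\Pr[\text{$j$-th}\sim x]\,\Pr[\text{$l$-th}\sim x] \;=\; \sum_{x\in V} \frac{1}{n}\cdot\frac{\deg(x)}{n}\cdot\frac{\deg(x)}{n} \;=\; \frac{D^{(2)}}{n^3},
\]
using independence of the $j$-th and $l$-th samples from the $i$-th (and from each other). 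Since covariances are at most joint expectations, $\Cov[X_{ij},X_{il}]\le D^{(2)}/n^3$.

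The final step is a bookkeeping count of ordered pairs $((i,j),(k,l))$ with $i<j$, $k<l$, $(i,j)\neq(k,l)$, and $|\{i,j\}\cap\{k,l\}|=1$. Every such pair is determined by a triple of distinct indices in $\{1,\dots,s\}$ together with a choice of which element is shared and which ordered pair comes first, which I would tally as a constant multiple of $\binom{s}{3}$. Multiplying that count by the per-pair covariance bound $D^{(2)}/n^3$ produces the second term in the claimed inequality. The only ``hard'' part here is not mathematical but clerical—making sure the counting convention for ordered versus unordered pairs of two-element sets is matched consistently with the variance formula used—so I would be careful to state the counting step explicitly rather than hand-wave it.
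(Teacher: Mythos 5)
Your route is essentially the paper's: both decompose $\Var[Z]$ over pairs of index pairs, observe that the diagonal gives $\sum\Var[X_{ij}]\le\Exp[Z]$, that disjoint-index pairs contribute nothing (under the implicit with-replacement sampling model used throughout this analysis), and that each overlapping pair contributes $\Exp[X_{ij}X_{ik}]=D^{(2)}/n^3$. Your covariance formulation is marginally cleaner than the paper's second-moment expansion, since you never need to explicitly show that the four-distinct-index terms are swallowed by $(\Exp[Z])^2$, but the substance is identical.

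The place where you punt — ``a constant multiple of $\binom{s}{3}$'' — is exactly where the paper's constant is suspect, and your instinct to be careful there is correct. The number of \emph{ordered} pairs $((i,j),(k,l))$ with $i<j$, $k<l$, and $|\{i,j\}\cap\{k,l\}|=1$ is $6\binom{s}{3}$: each triple of indices yields three unordered pairs of two-element subsets sharing one vertex, each contributing in two orders. The paper counts only ``two mappings'' per triple and forgets the ordering factor, landing on $2\binom{s}{3}$, which is low by a factor of $3$. (You can sanity-check via $\binom{s}{2}+6\binom{s}{3}+6\binom{s}{4}=\binom{s}{2}^2$, which balances the total number of ordered pairs of two-element subsets; the paper's own footnote counting disjoint pairs as $3\binom{s}{4}$ suggests it is thinking in unordered terms, in which case the overlap count should still be $3\binom{s}{3}$, not $2\binom{s}{3}$.) The correct bound is therefore $\Var[Z]\le\Exp[Z]+6\binom{s}{3}D^{(2)}/n^3$. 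This constant-factor slip does not affect any downstream conclusion — \Cref{lem:low-var} only needs $\Var[Z]/(\Exp[Z])^2$ to be $\poly(1/\delta)$ — but the claim as literally stated is not correct, and if you finish your count carefully you will reproduce the factor $6$ rather than the paper's $2$.
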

\begin{proof}
	As is usual, we bound $\Exp[Z^2]$ by computing $\Exp[\left(\sum_{i,j} X_{ij}\right)^2]$.
	We see that $\Exp[X_{ij}X_{k\ell}]$ for distinct $i,j,k,\ell$, by independence, equals $\left(\frac{n}{\davg}\right)^2$, 
	and summing over all\footnote{the more correct bound would be $3\binom{s}{4}$} $< \left(\binom{s}{2}\right)^2$ such pairs 
	would give $< \left(\Exp[Z]\right)^2$. What remains are the terms of the form $\Exp[X_{ij}X_{ik}]$ (and $\Exp[X_{ik}X_{jk}]$ but their calculation is symmetric leading to the factor $2$).
	We can evaluate this similar to the expectation calculation as
	\[
		\Exp[X_{ij}X_{ik}] = \sum_{x\in V} \Pr[i\text{th sample}~=x] \cdot \Pr[j\text{th and~}k\text{th samples}\sim x] = \sum_{x\in V} \frac{1}{n}\cdot \left(\frac{\deg(x)}{n}\right)^2 = \frac{D^{(2)}}{n^3}
	\]
	The claim now follows by linearity of expectation going over the $\binom{s}{3}$ triples and the two mappings ($ij,ik$ and $ik,jk$).
\end{proof}
\noindent
The variance has the sum-of-square of degrees bound, which is large if the degree distribution is skewed. 
In that case, we would like to argue that~\Cref{kookaburra,duck,mallard} will give the correct estimate for ``small'' $s$ itself.
The following combinatorial lemma, whose proof we defer, is key to this.

\begin{mdframed}[backgroundcolor=yellow!10,topline=false,bottomline=false,leftline=false,rightline=false] 
\begin{lemma} \label{lem:onehop}
	Let $G = (V,E)$ be an simple undirected graph on $n$ vertices and $m$ edges. Let $C$ be a parameter which should be thought of as a constant.
Let $H := \{x\in V:\deg(x) \geq C \sqrt{m}\}$. \smallskip

\noindent
If $D^{(2)} := \sum_{x\in V} \deg^2(x) \geq  (5C + 2C^3)m^{3/2}$  and $|H| < C\sqrt{m}$, ~~then $\sum_{v\in V} \frac{\deg(v,H)}{\deg(v)} > C\sqrt{m}$.	
\end{lemma}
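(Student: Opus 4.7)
The plan is to prove the contrapositive-friendly statement by isolating the contribution of $H$ to $D^{(2)}$, and then converting that into a lower bound on $Q := \sum_{v\in V}\frac{\deg(v,H)}{\deg(v)}$ via Cauchy--Schwarz.

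First I would split $D^{(2)} = S_1 + S_2$, where $S_1 := \sum_{v\in H}\deg^2(v)$ and $S_2 := \sum_{v\notin H}\deg^2(v)$. For $v\notin H$ we have $\deg(v) < C\sqrt{m}$, so
\[
S_2 \;\leq\; C\sqrt{m}\sum_{v\notin H}\deg(v) \;\leq\; C\sqrt{m}\cdot 2m \;=\; 2Cm^{3/2}.
\]
Combined with the hypothesis $D^{(2)} \geq (5C + 2C^3)m^{3/2}$, this forces
\[
S_1 \;\geq\; (5C+2C^3)m^{3/2} - 2Cm^{3/2} \;=\; (3C+2C^3)m^{3/2}.
\]

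Next, I would swap the order of summation:
\[
Q \;=\; \sum_{u\in V}\sum_{\substack{v\in H\\ v\sim u}}\frac{1}{\deg(u)} \;=\; \sum_{v\in H}\sum_{u\sim v}\frac{1}{\deg(u)}.
\]
The inner sum is bounded below by Cauchy--Schwarz (equivalently, AM--HM): for each $v\in H$,
\[
\deg(v)^2 \;=\; \Bigl(\sum_{u\sim v}1\Bigr)^2 \;\leq\; \Bigl(\sum_{u\sim v}\deg(u)\Bigr)\Bigl(\sum_{u\sim v}\frac{1}{\deg(u)}\Bigr),
\]
so $\sum_{u\sim v}1/\deg(u) \geq \deg(v)^2/\sum_{u\sim v}\deg(u)$. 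The key crude bound is $\sum_{u\sim v}\deg(u) \leq \sum_{w\in V}\deg(w) = 2m$, giving $\sum_{u\sim v}1/\deg(u)\geq \deg(v)^2/(2m)$.

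Putting everything together,
\[
Q \;\geq\; \sum_{v\in H}\frac{\deg(v)^2}{2m} \;=\; \frac{S_1}{2m} \;\geq\; \frac{(3C+2C^3)m^{3/2}}{2m} \;=\; \frac{3C+2C^3}{2}\sqrt{m} \;>\; C\sqrt{m},
\]
where the final inequality holds whenever $3C+2C^3 > 2C$, i.e., for all $C > 0$.

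The only subtle step is the Cauchy--Schwarz lower bound in step 5: directly controlling $1/\deg(u)$ for $u\sim v$ from below is hopeless (neighbors of $v$ could have very high degree, making $1/\deg(u)$ tiny), so one must exploit the duality with $\sum_{u\sim v}\deg(u)$, which can be bounded above globally by $2m$. Interestingly, the hypothesis $|H|<C\sqrt{m}$ is not needed for this argument; presumably it is included because the lemma is invoked precisely in the regime where direct vertex sampling fails to hit $H$, and some of the constants could likely be tightened using it.
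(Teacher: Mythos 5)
Your proof is correct, and it takes a genuinely different and in fact simpler route than the paper. The paper's argument lower-bounds $\sum_{x\in H}\deg^2(x,S)$ (where $S = V\setminus H$) by first subtracting the $H$--$H$ contribution, which it controls via the simplicity bound $\deg(x,H) < |H|$ and the hypothesis $|H| < C\sqrt{m}$; it then relates $\sum_{x\in H}\deg^2(x,S)$ to the target quantity through a common-neighbor counting identity and an application of Cauchy--Schwarz on the sum $\sum_{v\in S}\sqrt{\deg(v,H)}$. You instead lower-bound the target directly, swapping the double sum to write $\sum_{v}\frac{\deg(v,H)}{\deg(v)} = \sum_{u\in H}\sum_{v\sim u}\frac{1}{\deg(v)}$ and then applying the AM--HM form of Cauchy--Schwarz to each neighborhood, together with the crude global bound $\sum_{v\sim u}\deg(v)\le 2m$, to get $\ge \frac{1}{2m}\sum_{u\in H}\deg^2(u)$. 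From that point the hypothesis on $D^{(2)}$ and the ceiling $\deg(v)<C\sqrt m$ outside $H$ finish things off. Your observation that the hypothesis $|H|<C\sqrt m$ is never used is accurate, and so is your explanation for why it is nonetheless stated: the lemma is invoked exactly in the branch of Claim~\ref{clm:anal-onehop} where $|H|<C\sqrt m$. One further remark worth recording: your route gets the conclusion from the weaker hypothesis $D^{(2)} > 4Cm^{3/2}$ (the cubic term $2C^3$ in the lemma statement is not needed), so it both simplifies and slightly strengthens the lemma; if the authors wished, they could propagate the improved constant $B_\delta$ back into \Cref{clm:anal-onehop} and \Cref{lem:low-var}.
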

\end{mdframed}
\noindent
Let us now complete the analysis of the algorithm. We begin with a claim that for ``small'' $s$,
\Cref{alg:est-n:checkzero} will return true whp. 
Henceforth, $\delta$ is a small constant; our probability of failure would be $\leq O(\delta)$.
This dictates the parameter $A$ which would roughly come out to be  $\frac{1}{\delta^3}$ (see \Cref{lem:low-var}).
\begin{claim}\label{cor:exp}
	Let $s^\star$ be the final value of $s$. If $s^\star \leq \sqrt{\delta n/d}$, then with $1-\delta$ probability, the {\bf return} statement is~\Cref{mallard}.
\end{claim}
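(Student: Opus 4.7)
The plan is to show that the only way the algorithm can terminate at an iteration with $s = s^\star \leq \sqrt{\delta n/d}$ is via the return at \Cref{mallard} (the $\widehat{n}$ branch), because triggering the $\widehat{\rho}$ branch at \Cref{alg:line:23} is very unlikely for small $s$. The key observation is that reaching \Cref{alg:line:23} at iteration $s$ requires $m(X_s) \geq 1$, and $\Exp[m(X_s)]$ is tiny for small $s$.

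First, I would invoke \Cref{clm:exp-est-n} to write $\Exp[m(X_s)] = \binom{s}{2}\frac{d}{n} \leq \frac{s^2 d}{2n}$, and then apply Markov's inequality to get
\[
\Pr[m(X_s) \geq 1] \;\leq\; \frac{s^2 d}{2n}.
\]
Since $s$ takes only the doubled values $2,4,8,\ldots$, I would then union bound this probability over all $s = 2^i \leq \sqrt{\delta n/d}$. Because the bound grows as $s^2$, the geometric sum $\sum_{i=1}^{k} 4^i \leq \tfrac{4}{3}\cdot 4^k$ is dominated by its largest term, yielding
\[
\Pr\!\left[\exists\, s \leq \sqrt{\delta n/d}: m(X_s) \geq 1\right] \;\leq\; \tfrac{4}{3}\cdot\frac{(\sqrt{\delta n/d})^2\, d}{2n} \;=\; \tfrac{2\delta}{3} \;<\; \delta.
\]
Independence across iterations is immediate from the algorithm: each $X_s$ is a fresh uar sample at \Cref{kookaburra}, so Markov applies to each iteration unconditionally and the union bound is clean.

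Finally, I would conclude as follows. On the complementary event (of probability $\geq 1-\delta$), every iteration with $s \leq \sqrt{\delta n/d}$ has $m(X_s) = 0$, ruling out termination via \Cref{alg:line:23} in any such iteration. Since the algorithm can only terminate at \Cref{mallard} or \Cref{alg:line:23}, whenever $s^\star \leq \sqrt{\delta n/d}$ the return must be at \Cref{mallard}, which is exactly the claim.

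I do not anticipate any serious obstacle: the argument is essentially elementary Markov plus a geometric-series union bound. The only subtlety is to state precisely that we are bounding the joint event ``return at \Cref{alg:line:23} \emph{and} $s^\star \leq \sqrt{\delta n/d}$''; this is what the claim requires and what the calculation above delivers.
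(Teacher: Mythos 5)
Your proof is correct and takes essentially the same route as the paper: Markov on $m(X_s)$ via \Cref{clm:exp-est-n}, followed by a geometric union bound over the doubled values of $s$ up to $s^\star$. The paper phrases the final bound slightly more loosely (arriving at $\delta$ rather than your $\tfrac{2\delta}{3}$), but the argument is the same; your explicit remark about the freshness of each sample $X_s$ making the per-iteration Markov bound unconditional is a small clarity improvement over the paper's terse presentation.
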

\begin{proof}
	By Markov's inequality (\Cref{fact:mkov-cheb}), $\Pr[m(X) \neq 0] \leq \Exp[m(X)] = \frac{s^2\davg}{2n}$. Union bounding this for $s\in \{1,2,4,\ldots, s^\star\}$, 
	gives a geometric sum showing that the probability ~\Cref{alg:est-n:checkzero} returns true in any of these iterations is $\leq 2(s^\star)^2\davg/2n < \delta$.
\end{proof}
\noindent
\def\Cd{C_\delta}
\def\Bd{B_\delta}
The next claim uses~\Cref{lem:onehop} to show that if $D^{(2)}$ is large, then $s^\star$ is indeed ``small''. Along with the previous claim and~\Cref{clm:smalls}, we would take care of this ``large variance'' case.
With hindsight, define $C = \Cd := \ceil{\frac{3}{\sqrt{\delta}}\ln(1/\delta)}$ and $\Bd := 5C + 2C^3$.
\begin{claim}\label{clm:anal-onehop}
	If $D^{(2)} \geq \Bd \cdot m^{3/2}$, then with $1-2\delta$ probability $s^\star \leq \sqrt{\delta n/\davg}$.
\end{claim}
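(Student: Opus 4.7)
The plan is to apply Lemma \ref{lem:onehop} to argue that when $D^{(2)}$ is large, the algorithm discovers a vertex $x \in H$ and then a neighbor of $x$ in the uniformly sampled set $X$ already by round $s_0 \approx \sqrt{\delta n/\davg}/2$, so the algorithm returns via Line \ref{mallard} with $s^\star = s_0 \leq \sqrt{\delta n/\davg}$. Set $s^* := \sqrt{\delta n/\davg}$ and let $s_0$ be the largest power of $2$ at most $s^*$, so $s_0 > s^*/2$. We will bound the failure probability of this "return" event at round $s_0$.

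Applying Lemma \ref{lem:onehop} with $C = \Cd$, the hypothesis $D^{(2)} \geq \Bd \cdot m^{3/2} = (5\Cd + 2\Cd^3)\, m^{3/2}$ puts us in one of two regimes. In Case 1, $|H| \geq C\sqrt{m}$, so each uniformly sampled $a_i \in A$ independently lies in $H$ with probability $|H|/n \geq C\sqrt{m}/n$. In Case 2, $\sum_v \deg(v,H)/\deg(v) > C\sqrt{m}$, so each $b_i \sim a_i$ in $B$ independently lies in $H$ with probability $\frac{1}{n}\sum_v \deg(v,H)/\deg(v) > C\sqrt{m}/n$. In either regime, the probability that none of the $s_0$ relevant samples lands in $H$ is at most $(1 - C\sqrt{m}/n)^{s_0} \leq \exp(-s_0 C\sqrt{m}/n)$. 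Using $m = \davg n/2$ and $\Cd = \lceil 3\ln(1/\delta)/\sqrt{\delta}\rceil$, I would compute
\[
s_0 \cdot \frac{C\sqrt{m}}{n} \;\geq\; \frac{s^*}{2} \cdot C\sqrt{\davg/(2n)} \;=\; \frac{C\sqrt{\delta}}{2\sqrt{2}} \;\geq\; \ln(1/\delta),
\]
so the missed-$H$ probability is at most $\delta$. Since $x = \arg\max_{v \in A\cup B} \deg(v)$ inherits degree $\geq C\sqrt{m}$ whenever any sample lands in $H$, this gives $\Pr[E_1] \geq 1 - \delta$ where $E_1 := \{x \in H\}$.

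Conditional on $E_1$, the independently sampled set $X$ of size $s_0$ (Line \ref{kookaburra}) fails to contain any neighbor of $x$ with probability at most $(1 - \deg(x)/n)^{s_0} \leq \exp(-s_0 C\sqrt{m}/n) \leq \delta$; call this event $E_2$. A union bound gives $\Pr[E_1 \cap E_2] \geq 1 - 2\delta$, and on this event the check at Line \ref{duck} passes, so the algorithm returns at Line \ref{mallard} with $s^\star = s_0 \leq s^*$. Since the round-$s_0$ samples are independent of all earlier rounds, this bound is unconditional; whether the algorithm reaches round $s_0$ or returns earlier with $s^\star < s_0$, we conclude $\Pr[s^\star \leq s^*] \geq 1 - 2\delta$.

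The main obstacle is simply matching constants: $\Cd$ must be chosen large enough that $s_0 \cdot C\sqrt{m}/n \geq \ln(1/\delta)$ at the target $s^* = \sqrt{\delta n/\davg}$, yet not so large that $\Bd = 5\Cd + 2\Cd^3$ trivialises the hypothesis $D^{(2)} \geq \Bd m^{3/2}$. The choice $\Cd = \lceil 3\ln(1/\delta)/\sqrt{\delta}\rceil$ is tuned precisely so that the exponential tail beats both the missed-$H$ and missed-neighbor events at once.
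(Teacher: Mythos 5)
Your proof is correct and follows essentially the same route as the paper's: fix the largest power-of-two round $s_0$ at most $\sqrt{\delta n/\davg}$, invoke \Cref{lem:onehop} to split into the two regimes ($|H|\geq C\sqrt m$ or $\sum_v \deg(v,H)/\deg(v)\geq C\sqrt m$), bound the probability of missing $H$ in $A\cup B$ by $\exp(-s_0 C\sqrt m/n)\leq\delta$, then conditionally bound the probability that $X$ misses a neighbor of $x$ by the same quantity, and union-bound. The constant-matching calculation with $\Cd=\lceil 3\ln(1/\delta)/\sqrt\delta\rceil$ is the same as in the paper.
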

\begin{proof}
	Fix $s$ to be the largest power of $2$ smaller than $\sqrt{\delta n/\davg} = \frac{n \sqrt{\delta}}{\sqrt{2m}}$. So, $s\geq \frac{n\sqrt{\delta}}{3\sqrt{m}}$.
	We argue that \Cref{duck} returns True with probability $1-o(1)$ for such an $s$. This will prove the claim. To that end, consider
	the set $H$ as defined in~\Cref{lem:onehop} with the parameter $C$ defined before the claim.\smallskip
	
	\noindent
	Case 1: $|H| \geq C\sqrt{m}$. In this case, consider the set $A$ sampled in~\Cref{alg:est-n:line:deg-idea-begin}; note that
	$\Pr[A\cap H = \emptyset] = \left(1 - \frac{|H|}{n}\right)^s \leq e^{-s|H|/n} \leq  e^{-Cs\sqrt{m}/n} \leq \delta$ since $s\geq \frac{n\sqrt{\delta}}{3\sqrt{m}}$ (this explains the choice of $C$).
	Thus the $x$ in~\Cref{kookaburra} has $\deg(x) \geq C\sqrt{m}$. The probability that~\Cref{duck} returns False is $\leq (1 - \deg(x)/n)^s \leq e^{-s\deg(x)/n} \leq e^{-Cs\sqrt{m}/n} \leq \delta$.
	So, in this case we are done by union-bounding the two failure probabilities. \smallskip
	
	\noindent
	Case 2: $|H| < C\sqrt{m}$. 
	By~\Cref{lem:onehop}, since $D^{(2)}$ is ``large'', we have that
		$\sum_{a\in V} \frac{\deg(a,H)}{\deg(a)} \geq C\sqrt{m}$.
	Next let us examine the probability $\Pr[B\cap H = \emptyset]$. Note that $B$ can be thought of as $s$ independent experiments where we draw a uar vertex $v$ and then draw a uar neighbor $u$.
	The probability such a $u\in H$ is
	\[
		\Pr[u\in H] = \sum_{v\in V} \Pr[v~\text{sampled}]\cdot \frac{\deg(v,H)}{\deg(v)} \geq \frac{C\sqrt{m}}{n}
	\]
	Thus, $\Pr[B\cap H = \emptyset] \leq \left(1 - \frac{C\sqrt{m}}{n}\right)^s \leq e^{-Cs\sqrt{m}/n} \leq \delta$. Thus, as before, the $x$ in~\Cref{kookaburra} has $\deg(x) \geq C\sqrt{m}$, 
	and this shows the proof in this case as well.
\end{proof}
\begin{claim}\label{lem:high-var}
	When $D^{(2)} \geq B_\delta \cdot  m^{1.5}$, with $\geq 1-3\delta $ probability the algorithm returns an $(1\pm \eps)$-approximation to $n$ in $O(\eps^{-2}\sqrt{n/d})$ expected queries.
\end{claim}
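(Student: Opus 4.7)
The plan is to chain together three previously proved pieces --- \Cref{clm:anal-onehop}, \Cref{cor:exp}, and \Cref{clm:smalls} --- to show that when $D^{(2)}$ is large, the algorithm terminates quickly via the \EstNumVer branch, thereby inheriting its accuracy and its $O(\eps^{-2}\sqrt{n/\davg})$ cost. The whole statement is essentially a clean composition once these are in hand; the ``high variance'' regime is precisely the one where the direct estimator $m(X)$ does poorly, but the degree distribution is skewed enough that a single high-degree vertex is revealed early.

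First I would apply \Cref{clm:anal-onehop} to the hypothesis $D^{(2)} \geq B_\delta \cdot m^{3/2}$ and conclude that the final value of the doubling loop satisfies $s^\star \leq \sqrt{\delta n/\davg}$ with probability at least $1-2\delta$. This is the deepest link in the chain and the step I expect to be the main obstacle (although it is already packaged inside \Cref{lem:onehop}): the combinatorial lemma forces either a heavy set $H$ with $|H| \geq C\sqrt{m}$, which is hit by the $\RV$ sample $A$, or a large one-hop mass $\sum_v \deg(v,H)/\deg(v) \geq C\sqrt{m}$, which is hit by the $\RN$ sample $B$. Either route surfaces a vertex $x$ with $\deg(x) \geq C\sqrt{m}$ whose neighbor is covered by the fresh uar set $X$, triggering line~\Cref{alg:est-n:line:nbrchck} and causing exit through line~\Cref{mallard}.

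Next I would condition on the event $\{s^\star \leq \sqrt{\delta n/\davg}\}$ and invoke \Cref{cor:exp}: Markov applied to $\Exp[m(X)] = \binom{s}{2}\davg/n$ and union-bounded over $s \in \{2,4,\ldots,s^\star\}$ shows that with probability at least $1-\delta$ we have $m(X) = 0$ in every such iteration, so the algorithm can never exit through line~\Cref{alg:line:23} and must therefore exit through line~\Cref{mallard}. Then \Cref{clm:smalls} yields a $(1\pm \eps)$-approximation $\widehat{n}$ at an expected cost of $O(s^\star/\eps^2)$ queries. Adding the $O(s^\star)$ queries spent across all doubling iterations (each iteration is $O(s)$ for the $\RV$, $\RN$, $\DEG$, and $\ADD$ operations on sets of size $s$, and these costs sum geometrically), the overall expected query complexity is $O(\eps^{-2} s^\star) = O(\eps^{-2}\sqrt{n/\davg})$. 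A union bound over the failure events from \Cref{clm:anal-onehop} and \Cref{cor:exp} gives a total failure probability of at most $2\delta + \delta = 3\delta$, completing the proof.
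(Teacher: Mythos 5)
Your proposal is correct and matches the paper's proof exactly: the paper's entire argument for this claim is ``Follows from Claim~\ref{clm:anal-onehop}, Claim~\ref{cor:exp}, and Claim~\ref{clm:smalls},'' and you have correctly unpacked that chain, including the $2\delta + \delta$ union bound and the geometric summation of the per-iteration costs plus the $O(s^\star/\eps^2)$ cost of the final \EstNumVer call.
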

\begin{proof}
	Follows from~\Cref{clm:anal-onehop},~\Cref{cor:exp}, and~\Cref{clm:smalls}.
\end{proof}

\begin{claim}\label{lem:low-var}
	When $D^{(2)} \leq \Bd \cdot m^{1.5}$, with $1-o(1)$ probability the algorithm either returns an $(1\pm \eps)$-approximation to $n$ 
	or an $(1\pm \eps)$-approximation to $n/\davg$ in $O(\eps^{-2}\sqrt{n/d})$ expected queries.
\end{claim}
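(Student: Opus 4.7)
The plan is to show that in the low-variance regime the algorithm terminates at $s^\star = \Theta(\sqrt{n/\davg})$ with high probability, after which the averaged estimator $\widehat{m}_{s^\star}$ concentrates tightly enough that $\widehat\rho = (s^\star)^2/(2\widehat{m}_{s^\star}) \in (1\pm\eps)\, n/\davg$. The hypothesis $D^{(2)} \leq \Bd \cdot m^{3/2}$ is the key input that keeps the collision term in \Cref{clm:var-calc-est-n} under control at the scale $s \approx \sqrt{n/\davg}$.

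First I would pin down $s^\star$ in two steps. For an upper bound, let $s_0$ be the smallest power of $2$ with $\mu_{s_0} := \binom{s_0}{2}\davg/n \geq T$, where $T$ is a large constant depending on $\Bd$ and $\delta$; then $s_0 = O(\sqrt{n/\davg})$. Substituting $D^{(2)} \leq \Bd m^{3/2}$ into \Cref{clm:var-calc-est-n}, together with $s_0 = \Theta(\sqrt{n/\davg})$ and $m = n\davg/2$, gives $\Var[m(X)]/\mu_{s_0}^2 = O(1/T + \Bd)$. Chebyshev then yields $\Pr[m(X) = 0 \mid s_0] \leq \delta/2$ once $T$ is large enough, so $s^\star \leq s_0$ with probability $\geq 1-\delta/2$. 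For the lower bound, Markov applied to $\Exp[m(X)\mid s] = \binom{s}{2}\davg/n$ (exactly as in \Cref{cor:exp}) and a geometric sum over dyadic $s$ give $s^\star \geq s_1 = \Omega(\sqrt{\delta n/\davg})$ with probability $\geq 1 - \delta/2$.

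Conditioned on $s^\star = s' \in [s_1, s_0]$, the $k = \lceil A/\eps^2\rceil$ subsequent samples $m^{(i)}_{s'}$ are fresh independent copies at size $s'$, so \Cref{clm:exp-est-n}--\Cref{clm:var-calc-est-n} give $\Exp[\widehat{m}_{s'}] = \mu_{s'}$ and $\Var[\widehat{m}_{s'}]/\mu_{s'}^2 = O((1/\mu_{s'} + \Bd)/k) = O((1/\delta + \Bd)/A) \cdot \eps^2$, which is at most $(\eps/3)^2$ once $A$ is a sufficiently large constant. Chebyshev then produces $\widehat{m}_{s'} \in (1\pm \eps/3)\mu_{s'}$ with probability $\geq 1-\delta$. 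Since $(s')^2/(2\mu_{s'}) = (n/\davg)\cdot s'/(s'-1)$ and one can arrange $s^\star \gg 1/\eps$ in the regime where Line 23 fires (denser graphs cause termination via Line 12, whose correctness is \Cref{lem:ber-subs} as recorded in \Cref{clm:smalls}), the final estimate $\widehat\rho$ approximates $n/\davg$ to within $(1\pm\eps)$. The total cost is $O(s^\star)$ for the doubling loop (a geometric sum in which every iteration costs $O(s)$ with $\ADD$) plus $k\cdot O(s^\star) = O(\sqrt{n/\davg}/\eps^2)$ for the averaging.

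The main obstacle is coordinating the analysis of the data-dependent stopping time $s^\star$ with the accuracy of the final averaged estimator: one must simultaneously rule out ``too early'' termination (which would leave $\mu_{s^\star}$ so small that $k = O(1/\eps^2)$ draws do not concentrate) and ``too late'' termination (which would inflate the query cost), and verify that the constant $A$ can indeed be chosen independently of the instance. Both rely crucially on the hypothesis $D^{(2)} \leq \Bd m^{3/2}$: without it, the second term of \Cref{clm:var-calc-est-n} would dominate at $s \approx \sqrt{n/\davg}$ and neither the Chebyshev upper bound on $s^\star$ nor the per-sample concentration in Step~2 would go through.
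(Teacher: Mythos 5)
Your proposal takes essentially the same route as the paper: bound $s^\star$ from below via Markov plus a geometric union bound (\Cref{cor:exp}), use the hypothesis $D^{(2)} \leq \Bd m^{3/2}$ in \Cref{clm:var-calc-est-n} to control the relative variance of $m(X)$ at that scale, and then apply Chebyshev to the $k$ averaged copies. You are actually more explicit than the paper on two points the paper glosses over: an \emph{upper} bound on $s^\star$ (needed to justify the query-complexity claim, which the paper asserts without proof), and the discrepancy between $s^2/(2\mu_s)$ and $n/\davg$, namely the factor $s/(s-1)$.

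Two corrections. First, the step ``Markov $\ldots$ give $s^\star \geq s_1$ with probability $\geq 1-\delta/2$'' is not what \Cref{cor:exp} gives and is in fact false: on a dense graph the loop very likely terminates at $s^\star = 2$ via Line~\ref{mallard} (the $\widehat{n}$ branch), so $s^\star$ is \emph{not} $\Omega(\sqrt{\delta n/\davg})$ with high probability. The correct statement is that the probability of returning via Line~\ref{alg:line:23} (the $\widehat{\rho}$ branch) at some $s^\star \leq s_1$ is at most $\delta$; i.e., with probability $1-\delta$, either the algorithm terminates through Line~\ref{mallard} (handled by \Cref{clm:smalls}), or $s^\star > s_1$. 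That two-case split is exactly what the paper means by ``we may assume $s^\star > \sqrt{n\delta/\davg}$.'' Your parenthetical about dense graphs shows you understand this, but it should be the backbone of the argument rather than an aside. Second, ``one can arrange $s^\star \gg 1/\eps$ in the regime where Line~\ref{alg:line:23} fires'' does not come for free: for a $d$-regular graph with $d \approx n/3$ the algorithm reaches Line~\ref{alg:line:23} with $s^\star = 2$ with constant probability, and then the $s/(s-1)$ factor is a genuine multiplicative $2$, so $\widehat\rho = s^2/(2\widehat{m}_s)$ is off from $n/\davg$ by that factor. (This is a latent issue with the estimator as written — $\binom{s}{2}$ in place of $s^2/2$ would remove it — and you are right to flag it, but it should be stated as an adjustment to the estimator rather than as something ``one can arrange.'')
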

\begin{proof}
	Let $s^\star$ be the final $s$.
	By~\Cref{cor:exp} and~\Cref{clm:smalls}, we may assume  $s^\star > \sqrt{n\delta /\davg}$.
When $s$ is this ``large'', we can argue that $\Var[Z] \approx (\Exp[Z])^2$. More precisely, from~\Cref{clm:exp-est-n} and~\Cref{clm:var-calc-est-n}, we get
\[ 
\frac{\Var[Z]}{(\Exp[Z])^2} \leq \frac{1}{\Exp[Z]} + \frac{s^3 D^{(2)}}{3n^3} \cdot \frac{4n^2}{s^4 \davg^2} ~~\leq~~ \frac{2n}{s^2 d} + \frac{4\Bd}{3s} \cdot \frac{m^{1.5}}{n\davg^2} ~< \frac{2}{\delta} + \frac{B_{\delta}}{\sqrt{\delta}}
\]
where the first inequality uses $D^{(2)}$ is ``small'', the second inequality uses $m = n\davg/2$ and that $s > \sqrt{\delta n/d}$.
Note $B_\delta \approx \delta^{-1.5}\ln^3(1/\delta)$; and so the ratio of variance by squared expectation is $\Ot(\frac{1}{\delta^2})$.
Since we are repeating $O\left(\frac{A}{\eps^2}\right)$ times iid, if $A \approx \frac{1}{\delta^3 \ln^3(1/\delta)}$, then averaging would drop the ratio to $\eps^2/\delta$ and then Chebyshev gives us that our estimate is $(1\pm \eps) \frac{s^2 d}{n}$
with all but $\delta$ probability.
\end{proof}
\noindent
\Cref{lem:high-var} and~\Cref{lem:low-var} proves \Cref{thm:est-dens-unknown-n}.
\end{proof}
\noindent
We end this section with the proof of the combinatorial lemma.

\begin{proof}[\bf Proof of~\Cref{lem:onehop}]
	Let $S = V\setminus H$ and by definition $\deg(x) < C\sqrt{m}$ for all $x\in S$. First. observe that
	\begin{equation}\label{eq:001}
	\sum_{x\in S} \deg^2(x) < C\sqrt{m} \sum_{x\in S} \deg(x) \leq Cm^{1.5} ~~\Rightarrow~~ \sum_{x\in H} \deg^2(x) > D^{(2)} - Cm^{1.5}
	\end{equation}
	\noindent
	Next, observe that
	\[
	\sum_{x\in H} \deg^2(x) = \sum_{x\in H} \left(\deg(x,S) + \deg(x,H)\right)^2 \leq 2\left(\sum_{x\in H} \deg^2(x,S) + \sum_{x\in H}\deg^2(x,H)\right)
	\]
	Since the graph is simple, we have $\deg(x,H) < |H|$. Plugging above gives us
	\[
		\sum_{x\in H}\deg^2(x,S) \geq \frac{1}{2} \left(\sum_{x\in H} \deg^2(x) - 2|H|^3\right) > \frac{1}{2} \cdot \left(D^{(2)} - (C + 2C^3)m^{1.5}\right)
	\]
	where the last inequality follows from\eqref{eq:001} and since $|H| < C\sqrt{m}$. Using the bound on $D^{(2)}$, we get
	
	\begin{equation}\label{eq:premise}
		\sum_{x\in H} \deg^2(x,S) > 2Cm^{1.5}
	\end{equation}
	\noindent
	Next note that
	\[
		\sum_{x\in H} \deg^2(x,S) = \sum_{x\in H} \left(\sum_{v\in S} \bone(x,a)\right)^2 = \sum_{x\in H} \sum_{v\in S} \bone(x,v)^2 + \sum_{x\in H} \sum_{u,v\in S, u \neq v} \bone(x,u)\bone(x,v)
	\]
	where $\bone(x,v)$ is the indicator whether $(x,v)\in E$ and so $\bone^2 = \bone$. Rearranging, we get
	\def\comm{\mathsf{comm}}
	\[
	\sum_{x\in H} \deg^2(x,S) = \sum_{v\in S} \deg(v,H) + \sum_{u,v\in S, u\neq v} \comm(u,v;H)
	\]
	where $\comm(u,v;H)$ is counting the number of common neighbors of $u$ and $v$ in $H$. 
	Note that $\comm(u,v;H) \leq \min(\deg(u,H), \deg(v,H)) \leq \sqrt{\deg(u,H)\deg(v,H)}$. Plugging above we get
	\[
	\sum_{x\in H} \deg^2(x,S) \leq \sum_{v\in S} \deg(v,H) + \sum_{u,v\in S;u\neq v} \sqrt{\deg(u,H)\deg(v,H)} = \left(\sum_{v\in S} \sqrt{\deg(v,H)}\right)^2
	\]
	Using  Cauchy-Schwarz we get
	\[
			\sum_{x\in H} \deg^2(x,S)  \leq \left(\sum_{v\in S} \sqrt{\deg(v,H)}\right)^2 = \left(\sum_{v\in S} \sqrt{\frac{\deg(v,H)}{\deg(v)}}\cdot \sqrt{\deg(v)} \right)^2\leq \left(\sum_{v\in S} \frac{\deg(v,H)}{\deg(v)}\right)\left(\sum_{v\in S} \deg(v)\right)
	\]
	The proof follows from~\eqref{eq:premise} and the fact that $\sum_{v\in S} \deg(v) \leq 2m$.
\end{proof}

\section{Lower Bounds}\label{sec:lowerbounds}
In this section, we prove matching lower bounds for our algorithms. We begin with the case when the algorithm knows the value of $n$, the number of vertices.

\begin{theorem}\label{thm:lb-known-n}
	Suppose there is a randomized algorithm which has access to $\RV$, $\RN$, $\DEG$, and $\RE$ oracle on an unknown graph $G=(V,E)$ {\em and} knows the number $n$ of vertices
	and with $>2/3$ probability returns a $O(1)$-approximation to $\davg$. 
	\begin{enumerate}[label=(\alph*),noitemsep]
		\item If there is no other oracle access, then the algorithm must make
        $\Omega(\min(d, \sqrt{n/d}))$ queries, which is  $\Omega(n^{1/3})$ for worst-case $d$.
		\item If it has access to $\PAIR$ oracle, then the algorithm must make 
        $\Omega(\min(d, \sqrt[3]{n/d}))$ queries, which is  $\Omega(n^{1/4})$ for worst-case $d$.
		\item If it has access to $\FN$ oracle, then the algorithm must make
        $\Omega(\min(d, \sqrt[4]{n/d}))$ queries, which is  $\Omega(n^{1/5})$ for worst-case $d$.
	\end{enumerate}
\end{theorem}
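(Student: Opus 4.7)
The plan is to invoke Yao's minimax principle and, for each of the three parts, exhibit two distributions $\mathcal{D}_1, \mathcal{D}_2$ over $n$-vertex graphs satisfying: (i) the average degree of a random $G \sim \mathcal{D}_1$ concentrates at some constant $d_1$ and of $G \sim \mathcal{D}_2$ at some constant $d_2 \geq 2 d_1$; and (ii) for every deterministic algorithm making $q$ queries, the total-variation distance between its transcript distributions on $G \sim \mathcal{D}_1$ and $G \sim \mathcal{D}_2$ is $o(1)$ whenever $q$ is below the claimed threshold $n^\alpha$. This precludes constant-factor approximation with probability $2/3$.

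Following the template announced in the paper, every graph in the support of $\mathcal{D}_i$ is a disjoint union of cliques. The base construction fixes parameters $c, s, k$ (to be tuned per model) and lets $\mathcal{D}_1$ be the random graph consisting of $n/c$ cliques of size $c$ (the ``light'' background) under a uniformly random vertex labeling; $\mathcal{D}_2$ replaces $\Theta(ks/c)$ of these background cliques by $k$ ``heavy'' cliques of size $s \gg c$, again under a uniform relabeling. Preserving $n$ and doubling the average degree forces $ks(s-c) = \Theta(nc)$. Because vertex labels are permuted uniformly, heavy-vertex identities are independent of the algorithm's past queries: conditioned on no heavy vertex having appeared in the transcript so far, the next query's response has the same marginal distribution under $\mathcal{D}_1$ and $\mathcal{D}_2$.

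The transcript TVD is then bounded by the probability, under $\mathcal{D}_2$, of a ``discovery'' event in which some query touches a heavy vertex or reveals a heavy edge. A coupling and union bound over the $q$ queries decompose this probability into one contribution per oracle type: $\RV$ hits a heavy vertex with probability $ks/n$; $\RE$ returns a heavy edge (both endpoints heavy) with probability $\Theta(k\binom{s}{2}/m)$; $\RN$ on a light vertex stays inside its clique and reveals nothing new; a $\PAIR(x,y)$ query in part (b) detects a heavy edge only if $x, y$ were both previously sampled from the same heavy clique; and $\FN(v)$/$\ADD(S)$ in part (c) reveal heavy structure only at cost proportional to a probed heavy vertex's degree. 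In part (a), only the $\RV$ and $\RE$ contributions apply, and calibrating $(c,s,k)$ so each scales as $1/q$ while maintaining $ks(s-c) = \Theta(nc)$ pins $q = \Omega(n^{1/3})$. In part (b), the $\PAIR$ contribution of order $q^2 \cdot ks^2/n^2$ forces a smaller $s$ and rebalancing yields $n^{1/4}$. In part (c), the $\FN$/$\ADD$ contribution (which allows probing up to $q$ vertices' worth of neighborhoods) forces a still smaller $s$ and yields $n^{1/5}$.

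The main obstacle is handling the adaptivity of the algorithm, whose $i$-th query distribution depends on earlier responses. The standard remedy is inductive: conditional on no prior discovery, the next response under $\mathcal{D}_2$ couples with the corresponding response under $\mathcal{D}_1$ by invoking the random relabeling, since heavy vertices are exchangeable with light ones in all coordinates not yet exposed. Summing the per-step discovery probabilities over the $q$ queries gives the desired transcript-TVD bound, and once this coupling is in place the per-part parameter trade-offs reduce to routine algebra, carried out separately for each exponent.
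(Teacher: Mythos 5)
Your high-level framework (Yao's principle, union-bound over per-query discovery probabilities, coupling via random relabeling) is in the right spirit, but the concrete construction has a fatal asymmetry that breaks the whole argument. You let $\mathcal{D}_1$ be a disjoint union of small cliques of size $c$ only, and place heavy cliques of size $s$ exclusively in $\mathcal{D}_2$. But the constraint you correctly impose to double the average degree, namely $ks(s-c) = \Theta(nc)$, forces the heavy cliques of $\mathcal{D}_2$ to carry a \emph{constant fraction} of all edges: the light edges total roughly $nc/2$ and the heavy edges total $k\binom{s}{2} \approx ks^2/2 \approx nc/2$. Consequently $\Pr_{\mathcal{D}_2}[\RE \text{ returns a heavy edge}] = \Theta(1)$, whereas under $\mathcal{D}_1$ this probability is exactly $0$ (there are no heavy vertices to begin with). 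A single $\RE()$ followed by a $\DEG()$ query therefore distinguishes $\mathcal{D}_1$ from $\mathcal{D}_2$ with constant probability, so the transcript TVD is $\Omega(1)$ at $q=1$ and your intended calibration ``$k\binom{s}{2}/m \sim 1/q$'' is unachievable --- that ratio is pinned to $\Theta(1)$ by the average-degree constraint, no matter how you tune $(c,s,k)$.

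The fix, which is what the paper does, is to put heavy cliques in \emph{both} distributions so that a random edge under either law looks the same: YES has $k$ cliques of size $s$ plus a matching, NO has $(c+1)k$ cliques of size $s$ plus a (smaller) matching, with $sk \ll n$ and $s^2k \gg n$ so that $\RV$ lands in the matching and $\RE$ lands in some clique with high probability. The two distributions now differ only in the \emph{number} of heavy cliques, and a random heavy vertex or edge carries no information about that number until you observe two samples colliding in the same clique. Indistinguishability then comes down to a birthday-paradox bound on clique-ID collisions among the $q$ heavy vertices exposed (plus, for $\PAIR$, a second-moment bound showing all pair queries between uncollided heavy vertices answer $0$ w.h.p.), which is precisely what yields the three exponents $n^{1/3}, n^{1/4}, n^{1/5}$ for the parameter choices $(s,k) = (n^{2/3},1), (n^{1/2},n^{1/4}), (n^{2/5},n^{2/5})$.
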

\begin{proof}
The structure of the bad example is the same for all the bounds differing only in parameters. Suppose the algorithm is a $c$-approximation. 
The graph $G = (V,E)$ is partitioned into $G_A \cup G_B$ where $G_A$ is a collection of $\gamma k$ cliques of size $s$, and $G_B$ is a matching of $n - \gamma sk$ vertices.
The parameter $\gamma \in \{1, c+1\}$ and the two choices lead to two different cases: the YES and NO, respectively. The number of edges in the graph is $m  = \gamma k\cdot  \binom{s}{2} + \frac{n-\gamma sk}{2}$.
In all cases, the setting of $s,k$ will be such that $sk \ll n$ and $s^2k \gg n$. Thus, $\davg = \frac{\gamma s^2 k}{2n}\cdot (1+o(1))$. Thus, any $c$-approximation must be able to distinguish between the two 
graphs. We now show if it makes fewer than the asserted queries, the probability of distinguishing is $o(1)$. \smallskip

\noindent
Below, let $Q_1$ be the vertices from $\RV$ and $Q_2$ be the endpoints of $\RE$. Two observations would explain the parameters of $s, k$. Suppose the algorithm makes $q$ queries. A simple probabilistic calculation shows
\begin{itemize}[noitemsep]
	\item[-] If $q \ll \frac{n}{sk}$, then $\Pr[Q_1 \subseteq B] = 1-o(1)$.
	\item[-] If $q\ll \frac{s^2k}{n}$, then $\Pr[Q_2 \subseteq A] = 1-o(1)$.
\end{itemize}
Another fact we will use is that, conditioned on $Q_1\subseteq B$ and $Q_2\subseteq A$ if the set of vertices queried (as a multiset) has no collisions (whp), then the algorithm can't distinguish between the two cases. 
To see this, one may consider the sampling process as picking a random permutation of the vertices and letting the first $sk$ or $\gamma sk$ vertices be $A$ and the remaining is $B$, and the queries $Q_1$ pick a suffix and $Q_2$ picks a prefix of at most $q$ elements. When $q\leq sk$ (it will be much smaller), these cases lead to the same distribution. More precisely, for any fixed set $S$ of $q$ ids of vertices, the probability they last (or first) $q$ vertices is $S$ is the same in both the YES and the NO case.
\smallskip

\noindent
We now prove the statements, but in the order (c), (a), and then (b).

\begin{asparaitem}
	\item When the algorithm has access to $\FN$, we let 
    $s = n^{1/4}$, $k = \sqrt{n/\rho}$ and notice that $d = \frac{\gamma k s^2}{2n} = \Theta(\rho)$.
    Suppose there is an algorithm making 
    $\ll \min(d, \sqrt[4]{n/d})$ 
    queries.
	Note that the conditions above hold and so we may assume $Q_1 \subseteq B$ and $Q_2 \subseteq A$.
	We now assume that we query $\FN(x)$ for each $x\in Q_1\cup Q_2$. So, we get the matched partner for each $x\in Q_1$, and we get the {\em id} of the clique for every $x\in Q_2$.
	Since $q\ll \sqrt{\gamma k}$, the id's obtained are going to be distinct with $1-o(1)$ probability. Conditioned on this, we see that the queries cannot distinguish between YES and NO better 
	than random chance. In particular, it cannot be a $c$-approximation.
	
	\item Suppose the algorithm has no other oracles. In this case, set 
    $s = \sqrt{n\rho}$, $k = 1$ and notice that $d = \frac{\gamma k s^2}{2n} = \Theta(\rho)$.
    Suppose there is an algorithm making 
    $\ll \min(d, \sqrt{n/d})$ 
    queries.
	Note that the conditions above hold and so we may assume $Q_1 \subseteq B$ and $Q_2 \subseteq A$.
	Next, note that $\RN(x)$ for $x\in Q_1$ gives its partner, and we assume all these are already present in $Q_1$ (it only doubles the size).
	For $y\in Q_2$, $\RN(y)$ returns a random vertex in $y$'s clique. Let $\widehat{Q}_2$ be the subset of vertices obtained after all $\RN(\cdot)$ calls have been made. Note this size is $\ll \min(d, \sqrt{n/d})$.
	Since every clique is $\sqrt{n\rho} = \Theta( d \cdot \sqrt{n/d}) \gg |\widehat{Q}_2|^2$, the set of vertices $\widehat{Q}_2$ has no collisions with $1-o(1)$ probability. Conditioned on this, we see that the queries cannot distinguish between YES and NO better 
	than random chance. In particular, it cannot be a $c$-approximation.

	\item Finally, we consider the case when the algorithm has access to $\PAIR$, we let 
    $s = n^{1/3} \cdot \rho^{2/3}$, $k = \sqrt[3]{n/\rho}$ and notice that $d = \frac{\gamma k s^2}{2n} = \Theta(\rho)$.
    Suppose there is an algorithm making 
    $\ll \min(d, \sqrt[3]{n/d})$ 
    queries.
	As before, we may assume $Q_1 \subseteq B$ and $Q_2\subseteq A$, and as in the previous case, let $\widehat{Q}_2$ be the set of vertices obtained after $\RN(y)$-ing, for $y\in Q_2$.
	As before, since $q\ll \sqrt{s}$, we may assume $\widehat{Q}_2$ is distinct. We now tackle the $\PAIR$ queries.
	Note that $\PAIR(x,z)$ for $x\in B$ is already known for all $z\in Q_1 \cup \widehat{Q_2}$, and so we may assume the algorithm doesn't make these queries.
	Also, we already know $\PAIR(x, \cdot)$ for such an $x$, and so we assume the algorithm doesn't make such queries. Suppose the pair queries are $\{e_1, \ldots, e_t\}$ (where $t\ll k$) where each $e_i = (a_i, b_i)$ for 
	$a_i,b_i \in \widehat{Q}_2$. Note: the algorithm won't query this if $a_i = \RN(b_i)$ or $b_i = \RN(a_i)$ or $a_i, b_i$ are both $\RN(x_i)$ for some other $x_i\in \widehat{Q}_2$ (since it knows the answer).
	We now show that with probability $1-o(1)$, $\PAIR(e_i) = 0$ for {\em all} $e_i$ for both the YES and NO case. This means, that the probability any algorithm can distinguish between the YES and NO case is $o(1)$.
	The proof of the fact is balls-and-bins analysis. The probability $\Pr[\PAIR(e_i) = 1] = \frac{1}{k}$ and so we expect $t/k = o(1)$ pairs to return $1$. The second moment can be shown to be 
	$O(t^2/k^2)$, and thus by Chebyshev we get that $\Pr[\sum_i \PAIR(e_i) \geq 1] = o(1)$.	\qedhere
\end{asparaitem}
\end{proof}

\noindent
We now move to our lower bound arguments for the case when $n$ is unknown. Note that in this case we must also assume that the id of a vertex doesn't reveal any information about $n$.
In particular, we assume the $\id:V\to [U]$ where $U$ is a universe of size larger than any possible size of $|V|$. And in particular, just given a subset of samples with no repeated ids, 
we do not get any information about the size of $V$. 
Let's begin with an easy claim.

\begin{claim}\label{clm:lb:no-re-no-pair}\label{clm:apple}
	Any randomized algorithm that has access to only $\RV$, $\RN$, and $\DEG$ oracles, and 
    outputs a $O(1)$ approximation to the average degree $\davg$ must make $\Omega(\sqrt{n})$ queries.
	This is true irrespective of $\davg$, and in particular, even when $\davg = \Omega(n)$.
\end{claim}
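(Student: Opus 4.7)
The plan is to apply Yao's minimax principle by exhibiting two graphs $G_Y$ and $G_N$ whose average degrees differ by a constant factor and which no algorithm using $\RV$, $\RN$, and $\DEG$ can distinguish in $o(\sqrt{n})$ queries. Following the convention used in the hard instances of \Thm{lb-known-n}, I would take vertex identifiers to be drawn from a universe $[U]$ much larger than any possible $n$, so that labels alone leak no information about the graph size. Concretely, I would let $G_Y$ be a perfect matching on $n$ vertices (so $\davg_Y = 1$) and let $G_N$ be the disjoint union of a perfect matching on $n - \sqrt{n}$ vertices and a clique on $\sqrt{n}$ vertices (so $\davg_N = (2n - 2\sqrt{n})/n \to 2$). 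Since $\davg_N/\davg_Y \to 2$, any $c$-approximation with $c<2$ must separate the two instances with constant probability.

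For the indistinguishability step, the key observation is that the only structural difference between $G_Y$ and $G_N$ is the presence of $\sqrt{n}$ clique vertices in $G_N$, each of degree $\sqrt{n}-1$. Each $\RV$ call in $G_N$ returns a clique vertex with probability $\sqrt{n}/n = 1/\sqrt{n}$; an $\RN$ call from a matching vertex returns its unique matched partner (another matching vertex, hence still outside the clique); and $\DEG$ on a matching vertex returns $1$ in both graphs. So the only event that can distinguish $G_N$ from $G_Y$ is an $\RV$ call landing on a clique vertex, which by a union bound over $t = o(\sqrt{n})$ queries has probability $o(1)$. Conditioning on the complement, the transcripts produced by any algorithm on $G_Y$ and $G_N$ follow identical distributions; the TV distance between the two view distributions is therefore $o(1)$, ruling out any randomized algorithm that achieves success probability bounded away from $1/2$ in $o(\sqrt{n})$ queries.

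The main obstacle I anticipate lies in the final clause ``even when $\davg = \Omega(n)$''. The construction above has $\davg = O(1)$, and a rare perturbation involving only $o(n)$ vertices can contribute at most $O(n^{3/2})$ to the total degree sum, which is only an $O(n^{-1/2})$ fraction of the $\Theta(n^2)$ degree sum of a dense graph; such a perturbation therefore cannot shift $\davg$ by a constant factor in the dense regime. To cover the dense case I would need a genuinely different pair of graphs, likely with $n_Y \neq n_N$, both having $\davg$ linear in their respective vertex counts, but whose transcripts under any $o(\sqrt{n})$-query protocol remain coupled. Constructing such a pair while simultaneously controlling the ratio of average degrees and the coupling of views is, I expect, the technically delicate step of the complete proof.
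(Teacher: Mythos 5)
Your sparse-case construction (perfect matching versus matching plus a $\sqrt{n}$-clique) is sound, and your indistinguishability argument is correct: $\RV$ rarely hits the planted clique, $\RN$ from a matched vertex stays in the matching, and $\DEG$ reveals nothing new, so a union bound over $o(\sqrt{n})$ queries gives $o(1)$ total variation distance between the views. A small caveat is that a $\sqrt{n}$-clique separates $\davg\approx 1$ from $\davg\approx 2$ only, so to kill a general $c$-approximation you would need a clique of size $\Theta(\sqrt{cn})$; this is a routine tweak and not a real problem.

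The genuine gap is exactly the one you flag yourself: your two graphs have the same vertex count, and a perturbation that touches $o(n)$ vertices cannot shift $\davg$ by a constant factor once $\davg=\Omega(n)$, so your argument simply does not reach the dense regime the claim explicitly promises. The paper sidesteps this by exploiting the fact that $n$ is \emph{unknown}: the hard pair consists of $G_1$, a disjoint union of $n/d$ cliques each of size $d$ (so $|V_1|=n$ and $\davg\approx d$), and $G_2$, a disjoint union of $n$ cliques of size $d$ together with one large clique of size $d\sqrt{cn}$ (so $|V_2|\approx nd$ and $\davg\approx (c+1)d$). Here the two instances differ not by a local perturbation but by having vastly different numbers of vertices, and the only way the base-model algorithm could notice is either by hitting the big clique (probability $O(1/\sqrt{n})$ per $\RV$ or $\RE$ query) or by observing a repeated vertex id, neither of which occurs with $o(\sqrt{n})$ queries. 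Since the parameter $d$ is free and may be taken as large as $\Theta(n)$, this single construction covers the entire claim, including $\davg=\Omega(n)$. Your proposal recognizes that the resolution must use $n_Y\neq n_N$ but stops short of supplying a concrete pair, so as written the proof is incomplete.
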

\begin{proof}
	Let $d \in \NN$ be an arbitrary natural number. Suppose the algorithm is $\leq c$-factor approxination for average degree, for some constant $c$.
	We describe two graphs. Graph $G_1$	has $n/d$ cliques with $d$ vertices each, and so $|V_1| = n$ and $|E_1| = nd/2$. Thus, $\davg(G_1) = d$.
	Graph $G_2$ has $n$ cliques with $d$ vertices and one clique with $d\sqrt{cn}$ vertices. So, $|V_2| = nd + d\sqrt{n}$ and $|E_2| = nd^2/2 + cnd^2/2$, and so $\davg(G_2) = c+1$.
	Therefore, the algorithm must be able to distinguish between $G_1$ and $G_2$. 
	
	Now note that the algorithm when run on $G_2$ must return an answer in $\ll \sqrt{|V_1|} = \sqrt{n}$ queries. With $\ll \sqrt{n}$ queries, with all but $o(1)$ probability
	the samples would (a) neither be from the big clique of size $d\sqrt{n}$ since the probability of hitting the big clique is $d\sqrt{cn}/dn = \Omega(\sqrt{n})$, 
	and (b) no vertices will be repeated with either $G_1$ or $G_2$. And so the algorithm cannot distinguish between $G_1$ and $G_2$ and thus can't be a $c$-approximation.
\end{proof}

\noindent
The next claim shows that with $\PAIR$ queries, one can only beat $\sqrt{n}$ if $d\gg \sqrt{n}$.
\begin{claim}\label{clm:lb:pair-no-re}\label{clm:banana}
	Any randomized algorithm which has access to $\RV, \DEG, \RN$ and $\FN$ oracles but {\em not} access to the $\RE$ oracle, and is a $O(1)$ approximation to $\davg$
	must make $\Ot(\sqrt{n/d})$ many queries. 	If we didn't have $\ADD$ but only $\PAIR$ oracle, then the algorithm would need to make $\Omega(\min(\sqrt{n}, n/d))$ queries.
\end{claim}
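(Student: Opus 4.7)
The plan is to exhibit, for each target $(n,d)$, two hard instances $G_1$ and $G_2$ with $\davg(G_1)=d$ and $\davg(G_2)\asymp 2d$, and to argue that under the oracle set of each part of the claim the two query transcripts will have $o(1)$ total variation distance whenever the algorithm uses fewer than the claimed number of queries. The common ``yes'' instance will be $G_1 :=$ the disjoint union of $\lceil n/(d+1)\rceil$ cliques of size $d+1$ on $n$ vertices, so $|V(G_1)|=n$ and $\davg(G_1)=d$. Since $n$ is unknown, we are free to choose $|V(G_2)|$ much larger than $n$, and this extra room is what will let us hide the distinguishing signal.

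For the first statement (the $\FN$ setting, bound $\sqrt{n/d}$) I will take $G_2^A := G_1 \cup B$, where $B$ is a fresh clique on $K_A = \Theta(\sqrt{nd})$ vertices; then $|V(G_2^A)| = n(1+o(1))$ and $\davg(G_2^A) = 2d(1+o(1))$. With any $Q = o(\sqrt{n/d})$ queries, the probability that $\RV$ ever lands inside $B$ is $Q\cdot K_A/|V(G_2^A)| = Q\sqrt{d/n} = o(1)$. Conditioned on no $B$-hit, every $\DEG$, $\RN$, and $\FN$ answer is produced inside a $(d+1)$-clique of $G_1$ and so has identical distribution in $G_2^A$; the $\RV$-collision rates $\Theta(Q^2/n)$ also match because $|V(G_2^A)| \sim n$. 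A routine coupling then gives $o(1)$ TV between the transcripts, so no constant-factor approximation can separate $G_1$ from $G_2^A$ with probability $>2/3$, yielding $\Omega(\sqrt{n/d})$.

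For the second statement (only $\PAIR$ available, bound $\min(\sqrt n,\,n/d)$) the above $G_2^A$ no longer suffices: once $Q\asymp \sqrt n$ a $\PAIR$ query on two independently $\RV$-sampled vertices returns $1$ at rate $d/n$ in $G_1$ versus $2d/n$ in $G_2^A$, and that factor-two gap is already distinguishable. I will therefore dilute $G_2$: let $G_2^B$ consist of $a_2$ cliques of size $d+1$ together with one clique $B$ of size $K_B = n+1$, with $a_2 = \Theta(n^2/d^2)$ chosen so that $\davg(G_2^B) = 2d$; then $|V(G_2^B)| = \Theta(n^2/d)$. For $Q = o(\min(\sqrt n,\,n/d))$ I will bound four distinguishing channels: (i) the probability $\RV$ ever hits $B$ is $Q\cdot K_B/|V(G_2^B)| = \Theta(Qd/n) = o(1)$ in both sub-regimes of the $\min$; (ii) $\RV$-collision counts are $\Theta(Q^2/n)$ in $G_1$ and $\Theta(Q^2 d/n^2)$ in $G_2^B$, both $o(1)$ and within $o(1)$ TV of each other; (iii) $\DEG$ and $\RN$ produce identical distributions conditional on no $B$-hit; and (iv) a $\PAIR$ query on two independently $\RV$-sampled vertices returns $1$ with probability $\Theta(d/n)$ in $G_1$ versus only $\Theta(d^2/n^2)$ in $G_2^B$ (conditioned on the whp event that both endpoints are low-degree), so the expected number of $1$'s over $Q$ queries is $o(1)$ in both graphs and the two hit-count distributions are within $o(1)$ TV. Summing these contributions yields $o(1)$ TV between the full transcripts and hence $\Omega(\min(\sqrt n, n/d))$.

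The main technical point, which I expect to be the hardest step to calibrate, is the choice of parameters in $G_2^B$: for $\PAIR$ queries to remain useless below $\min(\sqrt n,\,n/d)$, the low-degree vertex set of $G_2^B$ must have size $\Theta(n^2/d)$, not $\Theta(n)$, because only then is the low-degree pair-edge density driven down from $d/n$ to $d^2/n^2$. Making $|V(G_2^B)|$ so much larger than $|V(G_1)|$ is permissible precisely because $n$ is unknown, and the algorithm cannot detect the inflation through $\RV$-collisions within the allowed budget. The rest of the work consists of verifying that each of the four channels above contributes only $o(1)$ TV, and each bound reduces to a one-line Poisson-tail computation once the parameters are fixed.
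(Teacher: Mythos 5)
Your proposal is correct and follows the same high-level template as the paper (hard instances built from disjoint collections of equal-size cliques plus one hidden ``big'' clique, with indistinguishability argued by coupling/TV bounds on the possible distinguishing channels). But your specific constructions differ from the paper's in an interesting way. The paper \emph{reuses} the pair $(G_1,G_2)$ from \Cref{clm:apple}, where $G_2$ has $n$ cliques of size $d$ plus one big clique (so $|V(G_2)| \approx nd$), and extracts both the $\Omega(\sqrt{n/d})$ bound (for $\FN$) and the $\Omega(\min(\sqrt{n},n/d))$ bound (for $\PAIR$) from that one construction: the binding signal is a within-clique collision in $G_1$, which happens at $q\approx\sqrt{n/d}$ and is detectable by $\FN$, or detectable by a lucky $\PAIR$ query at $t\approx n/d$ queries. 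Your approach instead tailors a different $G_2$ to each oracle set. Your $G_2^A=G_1\cup B$ with $|B|=\Theta(\sqrt{nd})$ is \emph{identical} to $G_1$ away from $B$, so conditioned on never hitting $B$ the transcripts are literally the same distribution, and the only channel to control is the $B$-hit probability $Q\sqrt{d/n}$; this is a cleaner coupling than the paper's. Your $G_2^B$, with $\Theta(n^2/d^2)$ small cliques plus a clique of size $n{+}1$, dilutes the same-clique rate to $d^2/n^2$ before hiding the signal from $\PAIR$. Both sets of parameters work; what your route buys is a more transparent case analysis per oracle, at the cost of maintaining two constructions instead of one.

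Two small calibration remarks. First, your gap $\davg(G_2)\asymp 2d$ only separates $c$-approximations with $c<\sqrt2$; to handle arbitrary constant $c$ you should blow up the big clique by a $\Theta(c)$ factor (e.g.\ $|B|=\Theta(c\sqrt{nd})$ so $\davg(G_2^A)\approx(1+c^2)d$), exactly as the paper scales $G_2$ by the parameter $\gamma\in\{1,c+1\}$; the $B$-hit probability picks up only a constant factor, so the bound is unchanged. Second, your claim that the low-degree part of $G_2^B$ ``must'' have size $\Theta(n^2/d)$ is an overstatement: the paper's $\Theta(nd)$-vertex $G_2$ (same-clique rate $\approx 1/n$) is also below the $o(1)$-TV threshold for $t=o(n/d)$ $\PAIR$ queries, so that choice works too. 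Any dilution that pushes the same-clique rate to $o(d/n)$ by a polynomial factor suffices; $n^2/d$ is sufficient but not necessary.
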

\begin{proof}
	The instance is the same as in~\Cref{clm:lb:no-re-no-pair}. We know that if $q\ll \sqrt{n}$ then the samples obtained (via $\RV$ and $\RN$) are 
	going to be (a) distinct, and (b) will all have degree $d$. Furthermore, if $q\ll \sqrt{n/d}$, these vertices will all be in different (small) cliques and would be indistinguishable
	for $G_1 $and $G_2$. 
	
	If $q \geq \sqrt{n/d}$, thought, 	
	in the case of $G_1$, some of these samples will be from the same clique, 
	but in $G_2$ they will be all from different cliques. So $\ADD$ queries can distinguish $G_1$ and $G_2$ using $O(\sqrt{n/d})$ queries would distinguish between $G_1$ and $G_2$.
	However, with only $\PAIR$ queries, even when there are vertices from the same clique, if $q\ll n/d$, we cannot distinguish between them. This argument is similar to that in 
	the part (b) of~\Cref{thm:lb-known-n}. Let $\{e_1, \ldots e_t\}$ be the $\PAIR$ queries made. If $t \ll n/d$, then with all but $o(1)$ probability, the answer to all $\PAIR(e_i) = 0$,
	which is what we would have in $G_2$ as well. Therefore, we can't distinguish between $G_1$ and $G_2$ if $q\ll n/d$ as well.  
\end{proof}
\noindent
The next claim shows observation shows that in the unknown $n$ case, no improvement to the above is possible with even when $\RE$ is available.
\begin{claim}\label{clm:cherry}
	For any $d$, any randomized algorithm which has access to all oracles and is a $O(1)$-approximation to $n$ must make $\Omega(\sqrt{n/d})$-oracle calls. 
	If we didn't have $\ADD$ but only $\PAIR$, then the algorithm would need to make $\Omega(\min(\sqrt{n}, n/d)$ queries, and so for $n$-estimation, $\RE$ doesn't help 
	over~\Cref{clm:lb:pair-no-re}. If we didn't even have $\PAIR$ (so basic model plus $\RE$), then we need to make $\Omega(\sqrt{n})$ queries.
\end{claim}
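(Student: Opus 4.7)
The plan is to follow the disjoint-clique template from \Cref{clm:apple} and \Cref{clm:banana}. For each of the three parts, I will exhibit two graphs $G_1, G_2$ with the same average degree $\Theta(d)$ but with $|V(G_2)| = 100\cdot|V(G_1)|$, so that any constant-factor approximation to $n$ must distinguish them. Concretely I will take $G_1$ to be the disjoint union of $n_1/d$ cliques of size $d$ and $G_2$ the disjoint union of $100\, n_1/d$ cliques of size $d$. In both graphs every vertex has degree exactly $d-1$, so $\DEG$ is useless; and since vertex ids can be drawn freshly from a huge universe, identity collisions among sampled vertices are vanishingly rare. The only way the algorithm can distinguish $G_1$ from $G_2$ is to witness a \emph{same-clique coincidence} among touched vertices --- either via a positive $\PAIR$/$\ADD$ answer on a pair of previously-seen vertices, or via an $\RE$/$\RN$ call returning a vertex already in the touched set $T$.

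For part (a), since $\ADD(S)$ costs $|S|$, a total cost budget of $q$ still implies $|T| = O(q)$. A uniformly chosen pair of vertices lies in the same clique with probability $O(d/n_1)$, so the expected number of same-clique pairs in $T$ is $O(q^2 d/n_1)$, which is $o(1)$ once $q \ll \sqrt{n_1/d}$. Conditioning on no such pair, every $\ADD$ and $\PAIR$ query returns $0$ and every $\RE$/$\RN$/$\RV$ call discovers a fresh clique with a fresh id, so the transcript is identically distributed in $G_1$ and $G_2$. For part (b) all queries are unit cost, so let $q_s, q_p$ denote the numbers of sampling and pair queries. Three detection channels need to be killed: $\PAIR$ hits contribute expected mass $q_p\cdot d/n_1 = o(1)$ when $q \ll n_1/d$; $\RE$ (and $\RV$) hits into $T$ contribute $O(q_s\cdot|T|/n_1) = O(q^2/n_1) = o(1)$ when $q \ll \sqrt{n_1}$; and $\RN$ hits into $T$ are dominated by these. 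Putting both constraints together yields $q = \Omega(\min(\sqrt{n_1}, n_1/d))$. Part (c) is the special case of part (b) with the $\PAIR$ channel absent: the only surviving detection channel is $\RE$/$\RV$/$\RN$ hitting $T$, which forces $q = \Omega(\sqrt{n_1})$; taking $d=1$ (a perfect matching) already suffices to exhibit this, though the same analysis works for any $d$.

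The main subtlety I expect is justifying that ``no coincidence has occurred'' implies the transcripts in $G_1$ and $G_2$ are \emph{identically} distributed, not merely close. The cleanest route is a step-by-step coupling: draw fresh random clique-ids on demand and couple the two executions so long as the next oracle response is a function purely of the local collision history (current touched vertices and their known co-clique relations). A small care point is repeated $\RN(x)$ calls into the same clique: as long as the number of touched vertices inside any one clique stays strictly below $d$, each $\RN(x)$ is a uniform draw from the unseen residue and still carries no information about the global clique count. Once this coupling is in place, each of the three parts reduces to the union bound over collision events sketched above.
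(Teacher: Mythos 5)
Your construction (disjoint unions of equal-size cliques, with $G_2$ having a constant factor more cliques than $G_1$) and the collision/detection analysis are exactly what the paper does in its proof of this claim; your accounting of the three detection channels ($\PAIR$/$\ADD$ hits, id collisions via $\RV$/$\RE$, and $\RN$ returns within a known clique) and the resulting thresholds $\sqrt{n/d}$, $\min(\sqrt{n},n/d)$, $\sqrt{n}$ match the paper's bounds. The coupling remark you flag as a subtlety is left implicit in the paper's terse argument, so your version is a slightly more careful write-up of the same proof rather than a different route.
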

\begin{proof}
	This is simple: consider $G_1$ to be a collection of $n/d$ cliques of size $d$ or $(c+1)n/d$ cliques of size $d$. Note $\RE$ doesn't give any power in such an example.
	If we make $\ll \sqrt{n/d}$ queries, then with all but $o(1)$ probability all vertices are from different cliques, and thus $\FN$ doesn't give any extra power.
	If we had only $\PAIR$ queries instead, then with $\ll n/d$ queries we won't be able to distinguish since with all but $o(1)$ probability
	all $\PAIR$ queries would answer $0$. If we didn't even have $\PAIR$ queries, then unless we query $\sqrt{n}$ vertices we have no distinguish-ability.
\end{proof}

\begin{claim}\label{clm:date}
	Any randomized algorithm which has access to all oracles and is a $O(1)$-approximation to $\davg$ must make 
    $\Omega(\min(d, \sqrt{n/d}))$ queries, which is $\Omega(n^{1/3})$ for worst-case $d$.
\end{claim}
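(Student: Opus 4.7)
The plan is to exhibit two graph instances $G_Y$ and $G_N$ for which $\davg(G_N)/\davg(G_Y)$ equals a constant $c>1$ (chosen larger than $\alpha^2$, for a given $\alpha$-approximation algorithm) and to argue that separating them requires $\Omega(n^{1/3})$ queries. Set $s = n^{2/3}$. Both instances share a fixed clique $C$ on $s$ vertex IDs drawn from the large universe $[U]$; outside $C$, $G_Y$ has a perfect matching on $n-s$ additional (fresh) vertices, giving $n_Y = n$ and $\davg_Y = (s(s-1) + n - s)/n \sim n^{1/3}$, while $G_N$ has a matching on only $n/c - s$ additional vertices (using a fixed subset of $G_Y$'s matching IDs), giving $n_N = n/c$ and $\davg_N \sim cn^{1/3}$. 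The factor of $c$ between the two average degrees forces the algorithm to return different answers on the two instances.

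The central observation is that a single $\RE$ query followed by a single $\FN$ query reveals all of $C$, but since $C$ is identical in both instances this is useless for distinguishing; the two graphs differ only in the size of the matching (and hence in $n$, which the algorithm does not know). I plan to show that, conditional on a list of ``typical'' events, the entire query transcript can be coupled to agree pointwise across $G_Y$ and $G_N$. These events are: (G1) no $\RV$ query returns a clique vertex; (G2) no $\RE$ query returns a matching edge; (G3) no two matching vertex IDs returned by $\RV$ or $\RN$ coincide; (G4) no $\PAIR(u,v)$ or $\ADD(S)$ query uncovers a matching edge among the vertices the algorithm has already sampled. Short union-bound calculations with query budget $q \le n^{1/3}/K$ for a large constant $K$ give failure probabilities $c/K$ for (G1), $1/K$ for (G2), $O(cq^2/n) = o(1)$ for (G3), and $O(|S|^2/n_i) = o(1)$ for each $\ADD(S)$ query (using $|S| \le q$, since each $\ADD$ query of size $|S|$ costs $|S|$).

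On the intersection of these events I couple the two instances by identifying $C$ across them, mapping $G_N$'s matching onto a subset of $G_Y$'s matching, and using the same fresh tokens from $[U]$ for both streams of matching IDs produced by $\RV$ and $\RN$. Because (G3) rules out collisions, these matching IDs appear only as indistinguishable abstract labels; because of (G1), (G2), and (G4), every $\DEG$, $\RN$, $\PAIR$, $\ADD$, and $\FN$ answer is determined by the coupled local structure and hence agrees pointwise across the two instances. The total variation distance between the algorithm's transcript distributions is then $O(c/K)$, which can be made smaller than $1/3$ by choosing $K$ large, contradicting the $2/3$ success guarantee of any $\alpha$-approximation algorithm and giving the claimed $\Omega(n^{1/3})$ lower bound.

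The most delicate step will be verifying (G4) for the $\ADD$ oracle, since an $\ADD(S)$ with $|S| = \Theta(n^{1/3})$ is still within budget and could, in principle, detect the difference between the matching densities $1/n_Y$ and $c/n_Y$. The key fact is that the expected number of matching edges inside any algorithm-chosen $S$ of size at most $q$ is $O(|S|^2/n_i) = o(1)$, so that $\ADD(S)$ typically returns exactly $\binom{|S \cap C|}{2}$, which is identical in the two instances. Once (G4) is in hand, the remainder of the argument is a routine union bound and the coupling described above.
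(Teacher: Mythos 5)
Your proposal is correct and takes essentially the same approach as the paper: a clique of size $n^{2/3}$ alongside matchings of different sizes, with an indistinguishability argument showing that $\RV$ samples miss the clique, $\RE$ samples miss the matching, and no ID collisions occur among the few matching vertices seen. Your treatment is somewhat more explicit than the paper's (in particular about why $\ADD$ queries on already-sampled vertices return the deterministic clique contribution and expose no matching edges), and you correctly set the clique size to $n^{2/3}$ whereas the paper writes $n^{1/3}$, which appears to be a typo since only $n^{2/3}$ makes the stated edge count and average degree come out right.
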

\begin{proof}
	Consider $G_1$ to be a consist of a clique on 
    $\sqrt{n\rho}$
    vertices and a matching of size $n$. 
    Thus, $|V_1| = n + \sqrt{n\rho}$ and $|E_1| = {\sqrt{n\rho} \choose 2} + n$, and thus $\davg(G_1) = \frac{{\sqrt{n\rho} \choose 2} + n}{n} = \Theta(\rho)$. 
    Let $G_2$ be the same but a matching of size $(c+1)n$. Let $Q_1$ and $Q_2$ be the vertices sampled using $\RV$ and $\RE$ oracles. 
	As in the proof of~\Cref{thm:lb-known-n}, when $q\ll \min(d, \sqrt{n/d})$, with all but $o(1)$ probability, $Q_1$ belongs to the matching and $Q_2$ belongs to the clique. Note that conditioned on this $\FN$
	doesn't help. Furthermore, since $q\ll \sqrt{n}$, we don't have any repetitions in $Q_1$ and so we won't be distinguish between $G_1$ and $G_2$. 
\end{proof}

\noindent
We end this section with a simple proof of part (b) in~\Cref{thm:onlyedges} answering a question in~\cite{BT24}
on what can be done with $\RE$, but no $\RV$, queries. 

\begin{claim}\label{clm:elderberry}
	Suppose there is an algorithm which returns a constant factor approximation to $\davg$ only using the $\RE, \RN, \DEG$ and $\PAIR$ oracles. Then it must make $\Omega(\sqrt{n})$ queries
	Instead of $\PAIR$, if it has $\ADD$ queries, then it must make $\Omega(n^{1/3})$ queries. 
\end{claim}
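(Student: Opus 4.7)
The plan is to reuse the ``disjoint cliques plus matching'' template from the other lower bounds in \Cref{sec:lowerbounds}, with parameters tuned to the new query costs. For each part I define graphs $G_\gamma$ on the same vertex set of size $n$: $G_\gamma$ is the disjoint union of $\gamma k$ cliques of size $d$ together with a perfect matching on the remaining $n - \gamma k d$ vertices, and I set $\gamma \in \{1,2\}$. For the $\PAIR$ bound I take $(k, d) = (\Theta(\sqrt{n}), \sqrt{n})$, giving $\davg(G_\gamma) \approx \gamma \sqrt{n}/4$; for the $\ADD$ bound I take $(k, d) = (\Theta(n^{2/3}), 2 n^{1/3})$, giving $\davg(G_\gamma) \approx \gamma n^{1/3}$. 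In both cases a constant-factor approximation must distinguish $\gamma=1$ from $\gamma=2$.

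The key structural feature is that in both graphs every vertex has degree in $\{1, d-1\}$, so $\DEG$ alone carries no $\gamma$-information; and without $\RV$ the algorithm can reach new vertex identities only through $\RE$ (a proportional-to-degree sample) and $\RN$. A routine calculation shows that the probability an $\RE$ returns a matching edge is $\Theta(1/d)$ with the two graphs' values differing by $\Theta(1/d)$, so distinguishing the two $\RE + \DEG$ output distributions requires $\Omega(d)$ edge samples. The $\RN$ queries add nothing distinguishing: on a matching vertex $\RN$ returns the already-known partner, and on a clique vertex $\RN$ is uniform over the other $d-1$ co-clique members, a distribution identical in both graphs conditional on the observed role (matching vs.\ clique) of the sampled vertex.

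The remaining distinguishing information comes from collision statistics. For two independent seen clique vertices, $\PAIR$ returns $1$ iff they lie in the same clique, which is Bernoulli with parameter $\Theta(1/(\gamma k))$; distinguishing $\Bern(1/k)$ from $\Bern(1/(2k))$ needs $\Omega(k)$ independent queries. For $\ADD(S)$ on a subset of seen clique vertices, the output is Poisson-like with mean $\Theta(|S|^2/(\gamma k))$ at cost $|S|$; distinguishing these means requires $|S| = \Omega(\sqrt{k})$, and hence $\Omega(\sqrt{k})$ cost. A standard Le~Cam/KL-divergence composition argument then gives a total-query lower bound of $\Omega(\min(d, k))$ for $\PAIR$ and $\Omega(\min(d, \sqrt{k}))$ for $\ADD$; my parameter choices equalize the two terms inside the $\min$, yielding $\Theta(\sqrt{n})$ and $\Theta(n^{1/3})$ respectively.

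The main obstacle is justifying the information composition across adaptive queries, particularly for $\ADD$ (whose per-pair cost improves for large $|S|$, so the algorithm can try to amortize by issuing one big query on all seen vertices). I would handle this by randomizing over a uniformly random relabeling of the vertex set of $G_\gamma$, under which the algorithm's transcript is exchangeable, and coupling the two processes up to the first ``tell-tale'' event: a matching edge returned by $\RE$, a $1$ returned by $\PAIR$, or a non-matching intra-clique collision counted by $\ADD(S)$. Under the coupling these events occur at the rates stated above, and conditional on none having occurred the two transcripts are identically distributed, giving the claimed lower bounds.
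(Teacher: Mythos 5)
Your proposal is correct and takes essentially the same approach as the paper: the same disjoint-cliques-plus-matching family of hard instances (with clique size $\Theta(\sqrt{n})$ vs.\ $\Theta(n^{1/3})$ and $\Theta(\sqrt{n})$ vs.\ $\Theta(n^{2/3})$ cliques), the same observation that without $\RV$ the only distinguishing signals are the rare matching edges returned by $\RE$ and the rare same-clique collisions visible to $\PAIR$ or $\ADD$, and the same balancing of those two rates. Where the paper simply conditions on the $1-o(1)$ event that no matching edge and no collision is observed (under which the transcripts are literally identical), you rephrase this as a KL/Le~Cam composition plus a coupling over a uniform relabeling; that is a slightly more formal packaging of the same argument, not a different proof. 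One small gap: taking $\gamma \in \{1,2\}$ only separates a factor-$2$ gap in $\davg$, so to rule out $c$-approximations for an arbitrary constant $c$ you should set $\gamma \in \{1, \Theta(c^2)\}$ as the paper's $G_1,G_2$ effectively do.
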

\begin{proof}
	$G_1$ has $\sqrt{n}$ cliques of size $\sqrt{n}$ and a matching of size $cn$. Thus, $|V(G_1)| = (c+1)n$ and $|E(G_1)| = \frac{n\sqrt{n}}{2}(1 + o(1))$ and $\davg(G_1) = \frac{\sqrt{n}}{c+1}(1 + o(1))$.
	$G_2$ has $c\sqrt{n}$ cliques of size $\sqrt{n}$ and a matching of size $n$. Thus, $|V(G_2)| = (c+1)n$ and $|E(G_2)| = \frac{cn\sqrt{n}}{2}(1 + o(1))$ and $\davg(G_2) = \sqrt{n}(1 + o(1))$. 
	If an algorithm is a $c$-approximation, then it should distinguish between the two cases above. 	
	
	With $\RE$ queries with $\ll \sqrt{n}$ queries, if we let $Q_2$ be the endpoints of $\ll \sqrt{n}$ many $\RE$ queries, with $1-o(1)$ probability they will all lie in the cliques.
	Let $\{e_1, \ldots, e_t\}$ be the $\PAIR$ queries made on these sampled nodes. Note that $\Pr[\PAIR(e_i) = 1] = 1/\sqrt{n}$ and so if $t\ll \sqrt{n}$, as multiple times before, we get that with all but $o(1)$ probability
	all $\PAIR(e_i) = 0$ in both $G_1$ and $G_2$. Thus, with $1-o(1)$ probability these are indistinguishable.
	
	The $\ADD$ lower bound constructs $G_1$ (respectively $G_2$) with  $n^{2/3}$ (or $cn^{2/3}$ cliques, respectively) of size $n^{1/3}$, and a matching of size $cn$ (or $n$, respectively).
	Now with $\ll n^{1/3}$ queries all $\RE$ and $\RN$ samples will lie in distinct cliques, and $\ADD$ (or $\FN$) won't give information.
\end{proof}

\bibliographystyle{plain}
\bibliography{sublinear}

\pagebreak
\appendix
\section{Subroutines from previous works provided for completeness}

\subsection{Beretta-\Tetek average degree estimation using random edge samples}\label{sec:app:bt}

We provide the $O(\eps^{-2}d)$-query algorithm of~\cite{BT24} for completeness.

\begin{algorithm}
	\caption{{\sc \BTEstAvgDeg}($\theta, \eps$)}\label{alg:bt-est-theta}
	\begin{algorithmic}[1]
		\LineComment{Assume access to $\RE$ and $\DEG$, and {\em no knowledge} of $n$. Assumes minimum degree $\geq 1$.}
		\LineComment{Returns: either $\widehat{\davg}$, an estimate to average degree, or asserts $\davg > \theta$}
		\LineComment{Query Complexity: $O(\frac{\theta}{\eps^2})$}
		\Statex
		\State $k \eq \frac{100\theta}{\eps^2}$  \Comment{No attempt has been made to optimize this constant ``100''}
		\For{$i\in \{1,2,3,\ldots, k\}$}:
			\State Sample $e \in E$ and sample $x\in e$ uar. \Comment{Uses $\RE$} \label{albacore}
			\State Set $Z_i \eq \frac{1}{\deg(x)}$
		\EndFor
		\State $Z \eq \frac{1}{k}\sum_{i=1}^k Z_i$ \label{mule}
		\If{$Z < \frac{1}{2\theta}$}:
			\State Assert $\davg > \theta$ \label{bream}
		\Else:
			\State \Return $\widehat{\davg} \eq \frac{1}{Z}$ \label{catfish}
		\EndIf
	\end{algorithmic}
\end{algorithm}

\begin{lemma}
	{\sc \BTEstAvgDeg}($\theta, \eps$) runs in $O(\frac{{\theta}}{\eps^2})$ queries,
	and 
	\begin{itemize}[noitemsep]
		\item If $\theta \geq \davg$, wp $\geq 0.99$ the algorithm returns $Z\in (1\pm \eps)\frac{1}{\davg}$,
		\item For any $i\geq 1$, if $\theta < \frac{\davg}{2^i}$, then $\Pr[Z \geq \frac{1}{2\theta}] \leq \frac{1}{2^{i-1}}$. This is the probability of not asserting.
	\end{itemize}
In particular, wp $\geq 1 - (0.16 + 0.125) > 0.7$, it either returns a good estimate or correctly asserts $\davg>\theta$.
%
	This algorithm doesn't need to know the number of vertices $n$, but assumes the minimum degree is $\geq 1$.	
\end{lemma}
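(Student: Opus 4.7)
The proof hinges on one distributional fact: sampling a uniformly random edge $e$ and then a uniformly random endpoint $x \in e$ produces a vertex $x$ distributed as $\Pr[x = v] = \deg(v)/(2m)$. With this, $Z_i = 1/\deg(x)$ satisfies
\[
\Exp[Z_i] \;=\; \sum_{v} \frac{\deg(v)}{2m}\cdot\frac{1}{\deg(v)} \;=\; \frac{n}{2m} \;=\; \frac{1}{\davg},
\]
so $Z_i$ is an unbiased estimator of $1/\davg$. The second moment $\Exp[Z_i^2] = \frac{1}{2m}\sum_v 1/\deg(v)$ is at most $n/(2m) = 1/\davg$ under the minimum-degree-$\geq 1$ hypothesis, so $\Var[Z_i] \leq 1/\davg$ and, averaging $k = 100\theta/\eps^2$ i.i.d.\ copies, $\Var[Z] \leq 1/(k\davg)$. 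The query complexity $O(\theta/\eps^2)$ is immediate from the for-loop.

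With these two moments in hand, the first bullet is one Chebyshev and the second is one Markov. For the first, use $\theta \geq \davg$ to get $\Pr[|Z - 1/\davg| \geq \eps/\davg] \leq \davg^2\Var[Z]/\eps^2 \leq \davg/(k\eps^2) \leq \theta/(k\eps^2) = 1/100$, giving $Z \in (1\pm\eps)/\davg$ with probability $\geq 0.99$. For the second, apply Markov on the non-negative $Z$: $\Pr[Z \geq 1/(2\theta)] \leq 2\theta\,\Exp[Z] = 2\theta/\davg < 1/2^{i-1}$ whenever $\theta < \davg/2^i$.

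For the final ``in particular'' claim I will case-split at $\theta = \davg/16$. This cutoff is chosen so that Chebyshev's bound $\davg/(100\theta)$ equals exactly $0.16$ at $\theta = \davg/16$ and so that the second bullet at $i = 4$ (covering $\theta < \davg/16$) gives the matching $1/2^{4-1} = 0.125$. When $\theta \geq \davg/16$, on the Chebyshev event $Z \in (1\pm\eps)/\davg$ (probability $\geq 1 - 0.16$), whichever branch the algorithm takes is a good outcome: either $\davg > \theta$ and the assertion is correct, or $\davg \leq \theta$ in which case $Z \geq (1-\eps)/\davg \geq 1/(2\theta)$ so the algorithm returns $1/Z \in (1\pm O(\eps))\davg$. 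When $\theta < \davg/16$, Markov with $i = 4$ ensures correct assertion except with probability $\leq 0.125$. Union-bounding the two failure modes loosely yields $1 - (0.16 + 0.125) > 0.7$. The only mildly delicate step is verifying that in the middle regime $\theta \in [\davg/16, \davg)$ both possible algorithm outputs -- asserting or returning $1/Z$ -- are acceptable outcomes under the concentration event; everything else is routine first- and second-moment bookkeeping.
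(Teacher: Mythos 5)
Your proof is correct and follows essentially the same route as the paper: compute $\Exp[Z_i]=1/\davg$ from the degree-proportional sampling, bound the second moment by $1/\davg$ using the minimum-degree hypothesis, average down to $\Var[Z] \leq 1/(k\davg)$, then apply Chebyshev for the first bullet and Markov for the second. The paper's own proof is terser and leaves the ``in particular'' union-bound step implicit; you fill that in with the natural case split at $\theta = \davg/16$, correctly identifying where the constants $0.16$ (Chebyshev at $\theta = \davg/16$) and $0.125$ (Markov at $i=4$) come from and noting that in the middle regime $\theta \in [\davg/16, \davg)$ both branches of the algorithm lead to acceptable outcomes under the concentration event. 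That last observation is precisely the small wrinkle the paper glosses over, and your treatment of it is sound.
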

\begin{proof}
	Note that $x$ sampled in~\Cref{albacore} is sampled proportional to its degree. 
	So, it is easy to see $\Exp[Z_i] = \sum_{x\in V} \frac{\deg(x)}{2m} \cdot \frac{1}{\deg(x)} = \frac{1}{\davg}$.
	Using the fact that $\deg(x) \geq 1$, one sees $\Var[Z_i] \leq \frac{1}{\davg}$ as well. And so,
	\[
		\Exp[Z] = \frac{1}{\davg} ~~~\text{and}~~~ \frac{\Var[Z]}{(\Exp[Z])^2} \leq \frac{\eps^2}{100}\cdot \frac{\davg}{\theta}
	\]
	The first bullet point follows from Chebyshev and the second from Markov (\Cref{fact:mkov-cheb}).
%
\end{proof}
\noindent
{\em Remark. If we knew the fraction of non-isolated vertices is $\geq \beta > 0$, then if we replace ``$100$'' by ``$100/\beta$'', the same proof would go through.}\smallskip

\noindent
Using the above, one can obtain the following theorem.
\bertet*

\begin{proof}
	For $\theta \in \{1,2, 4, 8, \ldots\}$, we run {\sc \BTEstAvgDeg}$(\theta, \eps)$ for $\beta = C\ln(1/\delta)$ iterations,
	and if $\geq \beta/2$ of these iterations return a $Z$, we and return the reciprocal of the median value of them.
	Let $\theta^\star$ be the value at which it does. 
	Using the above lemma one can argue that with probability $1-\delta$, we will have $\frac{\davg}{8} \leq \theta^\star \leq \davg$ and 
	furthermore the estimate returned is $\in (1\pm \eps)\frac{1}{\davg}$; the latter follows since $\theta^\star \geq \frac{\davg}{8}$, Chebyshev implies the probability $\Pr[Z \notin (1\pm \eps)\frac{1}{\davg}] < \frac{1}{12}$, which implies
	that the median will be in that interval with $1-\delta$ probability.
\end{proof}

\subsection{Eden-Ron-Seshadhri average degree estimation in the base model}\label{sec:app:ers}

We provide the $O(\eps^{-2}\sqrt{n/d})$-query algorithm of Eden, Ron, Seshadhri~\cite{ERS19} for completeness.
\begin{algorithm}
	\caption{{\sc \ERSEstAvgDeg}($\eps$)}\label{alg:ers}
	\begin{algorithmic}[1]
		\LineComment{Assume access to $\RV$, $\RN$, and $\DEG$, and {\em knowledge} of $n$}
		\LineComment{Returns $\widehat{\davg}$ an estimate to average degree}
		\LineComment{Query Complexity: $O(\eps^{-2}\sqrt{n/d})$}
		\Statex
		\State $\guess \eq n$; $k \eq \frac{100}{\eps^2}\sqrt{\frac{n}{\guess}}$ \Comment{No attempt has been made to optimize this constant ``100''}
		\While{True}:
		\For{$t$ in $\{1,2,\ldots, \ceil{10\ln(1/\delta)}\}$}: \Comment{$\delta$ is failure probability; set $\delta = 0.01$ for concreteness}
			\For{$i$ in $\{1,2,\ldots, k\}$}:
				\State Sample $x \in V$ uar. \Comment{Uses $\RV$}
				\State Sample $y \sim x$ uar. \Comment{Uses $\RN$}
				\If{($\deg(y) \geq \deg(x)$) or (($\deg(y) = \deg(x)$) and $\id(y) > \id(x)$)}
					\State $Z_i \eq 2\deg(x)$
				\Else:
					\State $Z_i \eq 0$
				\EndIf				
			\EndFor
			\State $Z^{(t)} \eq \frac{1}{k}\sum_{i=1}^k Z_i$ \label{line:tiktiki}
		\EndFor
		\State $Z \eq \mathrm{median}(Z^{(1)}, \ldots, Z^{(t)})$
		\If{$Z < \frac{\guess}{2}$}:\label{line:ers:dingo}
			\State $\guess \eq \guess/2$; $k \eq \frac{100}{\eps^2}\sqrt{\frac{n}{\guess}}$ \Comment{No attempt has been made to optimize this constant ``100''}
		\Else:
			\State \Return $\widehat{\davg} \eq Z$
		\EndIf
		\EndWhile
	\end{algorithmic}
\end{algorithm}
\ers*
%
\begin{proof}
	For any $x\in V$, let $\deg^+(x)$ denote the number of neighbors $y$ such that $\deg(y) > \deg(x)$ or $\deg(y) = \deg(x)$ \& $\id(y) > \id(x)$.
	First observe that $\sum_{x\in V} \deg^+(x) = m$. 
	Second, observe that for any $i$ (in any $t$th outer loop), $\Exp[Z_i] = \sum_{x\in V} \frac{1}{n} \frac{\deg^+(x)}{2\deg(x)}\cdot \deg(x) = \davg$.
	Third observe that $\Var[Z_i] \leq \frac{1}{n} \sum_{x\in V} \deg^+(x)\deg(x) \leq \left(\max_{x\in V} \deg^+(x)\right)\davg$, and it is easy to see $\max_{x\in V} \deg^+(x) \leq \sqrt{2m}$. 
	In particular, for all $t$, we have
	\[
		\Exp[Z^{(t)}] = \davg ~~~\text{and}~~~ \frac{\Var[Z^{(t)}]}{(\Exp[Z^{(t)}])^2} \leq \frac{\eps^2}{100}\cdot \sqrt{\frac{\guess}{n}}\cdot \frac{1}{\davg^2} = \frac{\eps^2}{100}\cdot \sqrt{\frac{\guess}{\davg}}
	\]
	\smallskip
	
	\noindent
	Let $\guess^\star$ be the final value of $\guess$. 
	\begin{claim}
		$\Pr[\frac{\davg}{2} \leq \guess^\star \leq 32\davg] > 1-\delta$
	\end{claim}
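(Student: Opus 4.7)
The plan is to analyze the halving loop through two complementary tail bounds on the median estimator $Z = \mathrm{median}(Z^{(1)}, \ldots, Z^{(\beta)})$ with $\beta = \ceil{10\ln(1/\delta)}$, using the per-round statistics $\Exp[Z^{(t)}] = \davg$ and $\Var[Z^{(t)}]/(\Exp[Z^{(t)}])^2 \leq (\eps^2/100)\sqrt{\guess/\davg}$ established just above the claim. The variable $\guess$ traverses the geometric sequence $n, n/2, n/4, \ldots$, so only $O(\log n)$ iterations occur, and in each iteration the halting test is $Z \geq \guess/2$.

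For the upper bound $\guess^\star \leq 32\davg$, I would argue that in every iteration with $\guess > 32\davg$ the algorithm halves rather than halts, with high probability. Markov's inequality yields $\Pr[Z^{(t)} \geq \guess/2] \leq 2\davg/\guess \leq 1/16$, so the median $Z$ exceeds $\guess/2$ only if at least $\beta/2$ of the independent $Z^{(t)}$'s do. \Fact{chern}(c) then bounds this probability by $\exp(-\Omega(\beta)) = \delta^{\Omega(1)}$; choosing the constant inside $\beta$ large enough makes this at most $\delta/(2\log n)$. A union bound over the $O(\log n)$ iterations where $\guess > 32\davg$ gives $\guess^\star \leq 32\davg$ with probability $\geq 1 - \delta/2$.

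For the lower bound $\guess^\star \geq \davg/2$, I would isolate the unique iteration at which $\guess = \guess_0 \in (\davg/2, \davg]$, which exists because $\guess$ halves through every dyadic scale. If the algorithm halts at this iteration, then $\guess^\star = \guess_0 > \davg/2$ and we are done. Otherwise $Z < \guess_0/2$, and since $\guess_0/2 \leq \davg/2$ it suffices to bound $\Pr[Z < \davg/2]$. At $\guess_0$ the variance ratio is at most $\eps^2/100$, so Chebyshev (\Fact{mkov-cheb}) gives $\Pr[Z^{(t)} < \davg/2] \leq 4\eps^2/100 < 1/4$, and median-boosting via \Fact{chern}(c) amplifies this to at most $\delta/2$. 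Only this one iteration needs to be controlled, so no further union bound is required.

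Combining the two bounds yields $\davg/2 \leq \guess^\star \leq 32\davg$ with total failure probability at most $\delta$. The main (though minor) bookkeeping obstacle is choosing the constant hidden in $\beta = \Theta(\ln(1/\delta))$ so that the per-iteration failure bound $\exp(-\Omega(\beta))$ dominates the $\log n$ factor from the union bound; this is resolved either by strengthening that constant or by reparameterizing $\delta \mapsto \delta/\log n$ inside $\beta$. Everything else is essentially a direct application of the two facts already recorded in the preliminaries.
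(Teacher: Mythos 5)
Your decomposition (an upper-bound union over all iterations with $\guess$ large, and a lower-bound analysis of the single critical dyadic scale near $\davg$) matches the paper, and your lower-bound argument is essentially correct: you isolate the unique iteration with $\guess_0\in(\davg/2,\davg]$, apply Chebyshev at that scale where the variance ratio is $\leq \eps^2/100$, and boost by the median-of-means Chernoff bound. No $n$-dependent union bound is needed there, just as you say.

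The upper bound, however, has a genuine gap that you have noticed but misdiagnosed as ``minor bookkeeping.'' You apply the uniform bound $\Pr[Z^{(t)}\geq \guess/2]\leq 1/16$ for every $\guess>32\davg$, and then union-bound a constant-per-iteration failure probability $\exp(-\Omega(\beta))$ over $O(\log n)$ iterations. With $\beta=\ceil{10\ln(1/\delta)}$ as fixed in the algorithm, the per-iteration failure is $\delta^{c}$ for a fixed constant $c$, and $\delta^{c}\cdot\log n \leq \delta$ is \emph{false} for sufficiently large $n$ at fixed $\delta$. Neither of your proposed fixes saves the claim as stated: strengthening the constant in $\beta$ still leaves a $\poly(\delta)$-per-iteration bound that cannot beat an unbounded $\log n$, and replacing $\delta$ by $\delta/\log n$ inside $\beta$ changes the algorithm and inflates the query complexity by a $\log\log n$ factor.

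The paper's proof avoids the $\log n$ factor entirely by not throwing away the $\guess$-dependence of the Markov bound. When $\guess\approx 2^{i}\davg$, Markov gives $\Pr[Z^{(t)}\geq \guess/2]\leq 2^{-(i-1)}$, so the per-sample success probability $p$ itself decays geometrically in $i$. Feeding this into \Fact{chern}(c), whose exponent is $\frac{\beta}{4}\ln(1/(2p))$, the per-iteration failure becomes $\delta^{\Omega(i)}$ rather than $\delta^{O(1)}$, and the union bound $\sum_{i\geq 5}\delta^{\Omega(i)}$ is a geometric series in $\delta$ with no $n$-dependence. That is the missing idea: the Chernoff exponent you gain at each dyadic scale grows linearly in $i$, so the iterations far above $\davg$ contribute vanishingly to the failure probability, and you never pay a $\log n$ union-bound tax.
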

	\begin{proof}
		Fix an $i\geq 5$ and consider a value of $g \geq 2^i d$. 
 		By Markov's inequality, note that for any $t$, $\Pr[Z^{(t)} \geq 2^{i-1}\davg] \leq \frac{1}{2^{i-1}}$.
		Using Chernoff bounds (\Cref{fact:chern}(c)) and using $t \geq 10\ln(1/\delta)$, one can upper bound the probability of the median being large as 
		$\Pr[Z \geq 2^{i-1}d] \leq \exp(-\frac{10\ln(1/\delta)}{4}\ln(2^{i-2})) \leq \delta^{i}$.
		Therefore, the probability $\guess^* > 32\davg$ is, by union bound, at most $\sum_{i=5}^{\infty} \delta^{i} \ll \delta$.
		
		The probability $\Pr[\guess^\star < d/2]$ is upper bounded
		by the probability that for $\guess < d$, the~\Cref{line:ers:dingo} evaluated to {\em true}. For such a $\guess$, note that for all $t$, $\Pr[Z^{(t)} < \guess/2] \leq \Pr[Z^{(t)} < \frac{\Exp[Z^{(t)}]}{2}]$.
		By Chebyshev this probability is at most $< \frac{4\eps^2}{100} < \frac{1}{25}$. The probability that the median $Z$ is $< \guess/2$ is therefore, by Chernoff bounds (\Cref{fact:chern}(c)), $\leq e^{-\frac{10}{4}\ln(1/\delta)\ln(25/2)} \ll \delta$.
		The claim follows.
	\end{proof}
	\noindent
	Conditioned on $\guess^\star \leq 32\davg$, we get that $\Pr[Z^{(t)} \notin (1\pm \eps)\davg] \leq \frac{1}{100} \sqrt{\frac{32\davg}{\davg}} < 0.06$.
	By Chernoff bounds (\Cref{fact:chern}(c)) the probability the median is not in the range is 
	$\Pr[Z \notin (1\pm \eps)\davg] \leq \exp(-\frac{10\ln(1/\delta)}{4} \ln(\frac{1}{0.12}))$ which is 	$\ll \delta$.
	Conditioned on $\guess^\star \geq \frac{\davg}{2}$, the total query complexity of the algorithm is at most a geometric sum of $O(\eps^{-2}\sqrt{n/\guess})$ where
	$\guess$ ranges as $n$, $n/2$, $n/4$,$\cdots$, and is at least $\frac{\davg}{2}$. This geometric sum is at most $O(\eps^{-2}\sqrt{n/d})$.
\end{proof}

\noindent
\begin{remark}\label{rem:ers}
If instead of knowing $n$, we had an estimate $\frac{n}{2} \leq \widehat{n} \leq 2n$, even then the above analysis would go through with a $O(1)$
hit in the number of queries. In particular, the variance $\Var[Z^{(t)}]$ could be twice as large, and the query complexity would have $n$ replaced by $\widehat{n}$, but that won't change the answer in Big-Oh notation.
On the other hand, if instead we had an estimate $\widehat{\rho}$ of $n/\davg$ with $\frac{n}{2\davg} \leq \widehat{\rho} \leq \frac{2n}{\davg}$, then we can bypass the ``guessing-$\guess$-and-halving'' in the while-loop above 
and just run the for-loop with $k \eq \frac{100}{\eps^2}\ceil{\sqrt{\widehat{\rho}}}$.
\end{remark}

\subsection{Estimating universe size with collisions}\label{app:rt}

This result is Theorem 2.1 of Ron-Tsur~\cite{RoTs16}. We provide this for completeness.

\newcommand{\flag}{\mathsf{flag}}
\begin{algorithm}
	\caption{{\sc EstNumColl}($\eps$)}\label{alg:coll-cnt}
	\begin{algorithmic}[1]
		\LineComment{Assume access to $\RV$ and {\em no knowledge} of $n$}
		\LineComment{Returns $\widehat{n}$ an estimate to $n$}
		\LineComment{Query Complexity: $O(\eps^{-2}\sqrt{n})$}
		\Statex
		\State $Y, Z \eq 0$ \Comment{$Y$ counts how many vertices seen more than once}
		\State Initialize empty dictionary $D$
		\State $\tau \eq \ceil{\frac{12}{\eps^2}\ln(4/\delta)}$ \Comment{$\delta$ is failure probability; set it to $0.01$}
		\While{$Y < \tau$}:
			\State $Z\eq Z+1$
			\State Sample $x\in V$ uar. \Comment{Uses $\RV$.}
			\If{$x\in D$ and $\flag(x) = 0$}:
				\State $Y\eq Y + 1$ \Comment{Increment collision counter}
				\State Set $\flag(x) = 1$
			\Else:
				\State Add $x$ to $D$ and set $\flag(x) = 0$. \Comment{Add $x$ to the dictionary}
			\EndIf
		\EndWhile
		\State \Return $\widehat{n} \eq \frac{Z^2}{2\tau}$
	\end{algorithmic}
\end{algorithm}

\rt*

%
\begin{proof}
	The analysis follows from a balls-and-bins style argument. 
	Fix $t$ to be the largest natural number $t$ such that $\frac{t^2}{2\tau} < (1-\eps)n$ or equivalently $\frac{t^2}{2n} < (1-\eps)\tau$.
	Let $\calB_t$ be the event that sampling $t$ vertices leads to $\tau$ different vertices sampled more than once.
	Then note that $\Pr[\widehat{n} < (1-\eps)n] \leq \Pr[\calB_t]$. 
	To upper bound $\Pr[\calB_t]$, we will use the Poisson approximation method (for instance, see Chapter 5 of Mitzenmacher-Upfal book~\cite{MU17}). Let $Z_1, \ldots, Z_n$ be iid Poisson random variables $\sim \Pois(t/n)$.
	Let $X_i$ be the indicator of the event $\{Z_i \geq 2\}$, and let $\calB'_t$ be the event $X := \sum_{i=1}^n X_i \geq \tau$. This event is monotone in the $Z_i$'s and so we get
		$\Pr[\calB_t] \leq 2\Pr[\calB'_t]$. The latter can be upper bounded using standard Chernoff bounds as follows. 
	First one notes that 
	\[
		\Exp[X_i] = 1 - \exp(-\frac{t}{n})\left(1 + \frac{t}{n}\right) = \frac{t^2}{2n^2}\cdot (1 - o(1))~~\Rightarrow~~\Exp[X] =  \frac{t^2}{2n}\cdot(1 - o(1)) < (1 - \eps)\tau
	\]
	where we use the fact $t/n = o(1)$ by our choice of $t$. Therefore, by Chernoff bounds~\Cref{fact:chern}, we get that 
	\[
		\Pr[\calB_t] \leq 2\Pr[\calB'_t] = 2\Pr\left[X \geq \frac{\Exp[X]}{(1-\eps)}\right] \leq 2e^{-\eps^2(1-\eps)\tau/4} \leq \frac{\delta}{2}
	\]
	since $t > \frac{8\ln(4/\delta)}{\eps^2}$. 
	
	Similarly, let $s$ be the smallest natural number such that $\frac{s^2}{2\tau} > (1+\eps)n$ or equivalently $\frac{s^2}{2n} > (1+\eps)\tau$. Let 
	$\calB_s$ be the event that sampling $s$ vertices {\em doesn't} lead to $\tau$ vertices being seen more than once. We get $\Pr[\widehat{n} > (1+\eps)n] \leq \Pr[\cB_s]$. As above, we define Poisson random variables but now with parameter $\Pois(s/n)$, define $X_i$'s and $X$ similarly, get $\Exp[X] = \frac{s^2}{2n}(1 - o(1)) > (1+\eps/2)\tau$. Now we apply the other tail of Chernoff bound.
	We get
	\[
		\Pr[\calB_s] \leq 2\Pr[\calB'_s] = 2\Pr\left[X < \frac{\Exp[X]}{(1+\eps/2)}\right] \leq 2\Pr\left[X < (1 -\eps/4)\Exp[X] \right] \leq 2e^{-\frac{\eps^2(1+\eps/2)\tau}{12}} < \frac{\delta}{2}
	\]
	since $t \geq \frac{12\ln(4/\delta)}{\eps^2}$. Union bounding the above two says that with probability $\geq (1-\delta)$, 
	we have $\widehat{n} \in (1\pm \eps)n$.	
\end{proof}

\end{document}